\newcommand{\euler}{\mathrm{e}}
\newcommand{\RR}{\mathbb{R}}
\newcommand{\CC}{\mathbb{C}}
\newcommand{\NN}{\mathbb{N}}    
\newcommand{\ZZ}{\mathbb{Z}}
\newcommand{\PP}{\mathbb{P}}
\newcommand{\EE}{\mathbb{E}}
\newcommand{\vol}{\operatorname{vol}}
\newcommand{\cF}{\mathcal{F}}
\newcommand{\cV}{\mathcal{V}}
\newcommand{\cE}{\mathcal{E}}
\newcommand{\cD}{\mathcal{D}}
\newcommand{\cC}{\mathcal{C}}
\newtheorem{theorem}{Theorem}[section]
\newtheorem{proposition}[theorem]{Proposition}
\newtheorem{corollary}[theorem]{Corollary}
\theoremstyle{definition}
\newtheorem{definition}[theorem]{Definition}
\newtheorem{remark}[theorem]{Remark}
\newtheorem{example}[theorem]{Example}
\begin{document}
\title[UCP and their absence for Schr\"odinger eigenfunctions]{Unique continuation principles and their absence for Schr\"odinger eigenfunctions on combinatorial and quantum graphs and in continuum space}
\author{Norbert Peyerimhoff, Matthias T\"aufer, Ivan Veseli\'c}
\address[NP]{Department of Mathematical Sciences, Durham University, UK}
\address[MT \& IV]{Fakult\"at f\"ur Mathematik, Technische Universit\"at Dortmund, Germany}
\keywords{eigenfunctions -- unique continuation  -- Schr\"odinger equation  -- Wegner estimate -- Integrated density of states}
\date{}
\thanks{\textcopyright 2017 by the authors. Faithful reproduction of this article, in its entirety, by any means is permitted for non-commercial purposes. \today, \jobname.tex}

\begin{abstract}
For the analysis of the Schr\"odinger and related equations it is of central importance whether a unique continuation principle (UCP) holds or not.
In continuum Euclidean space quantitative forms of unique continuation imply Wegner estimates and regularity properties of the integrated density of states (IDS) of Schr\"odinger operators with random potentials.
For discrete Schr\"odinger equations on the lattice only a weak analog of the UCP holds, but it is sufficient to guarantee the continuity of the IDS. For other combinatorial graphs this is no longer true.
Similarly, for quantum graphs the UCP does not hold in general and consequently, the IDS does not need to be continuous.
\end{abstract}

\maketitle

\section{Introduction}

Unique continuation properties for various function classes have been studied for many years.
They are of great importance when addressing uniqueness of solutions of partial differential equations, the propagation or regularity of solutions, and their growth behaviour.
More recently, they have been successfully applied in the spectral theory of random Schr\"odinger operators,
for instance to prove Wegner estimates and establish regularity properties of the integrated density of states (IDS).
\par
On the other hand it is well-known that for discrete Schr\"odinger operators on the lattice $\ZZ^d$
the analogue of the UCP does not hold.
This poses a serious difficulty for the analysis of discrete Schr\"odinger operators.
This is exemplified by the fact that there is still no proof of localisation for the multidimensional Anderson model with Bernoulli disorder while this has been established for the seemingly more difficult analogous problem in continuum space in \cite{BourgainK-05}.
Nevertheless, a certain weaker version of unique continuation, namely non-existence of finitely supported eigenfunctions, allows at least to conclude that the IDS of discrete Schr\"odinger operators on $\ZZ^d$ is continuous.
This, however, uses specific properties of the underlying combinatorial graph $\ZZ^d$ and does not need to be true for Laplace or Schr\"odinger operators on other graphs.
A prominent example for this phenomenon is the Laplace operator on a subgraph of $\ZZ^d$, generated by (random) percolation.
Another example is the discrete Laplacian on the Kagome lattice which is a planar graph exhibiting eigenfunctions with finite support.
In both examples finitely supported eigenfunctions lead to jumps of the IDS. The two properties are actually in a sense equivalent.
However, there is a condition on planar graphs, namely non-positivity of the so-called corner curvature, which excludes the existence of finitely supported eigenfunctions.
\par
For quantum graphs, more precisely, for Schr\"odinger operators on metric graphs, the UCP does not hold in general as well.
On the one hand, this can be understood as a consequence of the phenomenon encountered for planar graphs, since there is a way to ``translate`` spectral properties of equilateral quantum graphs to spectral properties of the underlying combinatorial graph.
On the other hand, as soon as the underlying combinatorial graph has cycles, the Laplacion on the corresponding equilateral quantum graph carries compactly supported, so-called \emph{Dirichlet} eigenfunctions on these cycles which can again lead to jumps in the IDS.
\par
The paper is structured as follows:
In Section~\ref{sect:Continuous} we discuss unique continuation principles for Schr\"odinger equations on subsets of $\RR^d$.
Then, in Section~\ref{sect:Z^d}, we turn to analogous discrete equations on the Euclidian lattice graph $\ZZ^d$, where we present both positive and negative results concerning unique continuation.
Section~\ref{sect:percolation} is devoted to subgraphs of the Euclidian lattice $\ZZ^d$, generated by percolation, i.e. by random removing vertices.
There, finitely supported eigenfunctions exist leading to jumps in the IDS.
After that, in Section~\ref{sect:planar_graphs}, we introduce the Kagome lattice as an example of a planar graph which exhibits finitely supported eigenfunctions and then present a combinatorial curvature condition which can ensure the non-existence of such finitely supported eigenfunctions.
The final Section~\ref{sect:quantum_graphs} is devoted to quantum graphs.
We explain how properties from the underlying combinatorial graph translate to the lattice graph and study the IDS.

\section{Unique continuation for solutions in continuum space}
\label{sect:Continuous}
Throughout this article we will use the following notation:
A measurable function $f$ on a domain $A \subset \RR^d$ is in $L^p (A)$, if $\lVert f \rVert_{L^p(A)} = \lVert f \rVert_p < \infty$, where $\lVert f \rVert_p = ( \int_A \lvert f \rvert^p)^{1/p}$ if $1 \leq p < \infty$ and $\lVert f \rVert_\infty = \operatorname{essup}_A \lvert f \rvert$, the essential supremum with respect to the Lebesgue measure.
If $B \subset A$, we write $\lVert f \rVert_{L^p(B)} = \lVert \chi_B f \rVert_{L^p(A)}$, where $\chi_B$ is the characteristic function of the set $B$, i.e. $\chi_B(x) = 1$ if $x \in B$ and $0$ else.
The function $f$ is said to be in $H^{k,p}$, $k \in \NN$, if $f$ and all weak derivatives of $f$ up to $k$-th order are in $L^p$.
For a vector $x \in \RR^d$ we will denote by $\lvert x \rvert = (x_1^2 + \dots x_d^2)^{1/2}$ its Euclidian norm.
The (open) ball of radius $r > 0$ around $x \in \RR^d$ is denoted by $B(x,r) = \{ y \in \RR^d \colon \lvert x - y \rvert < r \}$.
Furthermore, for $L > 0$ and $x \in \RR^d$, we will call $\Lambda_L(x) = x + (-L/2, L/2)^d \subset \RR^d$ the $d$-dimensional cube of sidelength $L$, centered at $x$. If $x = 0$, we will simply write $\Lambda_L$.
\begin{definition}
\label{def:UC}
A class of functions $\mathcal{F}$ on a connected domain $A \subset \RR^d$ has the unique continuation property (UCP), if for every nonempty and open $U \subset A$ every $f \in \mathcal{F}$ vanishing on $U$ must vanish everywhere.
If every eigenfunction of a partial differential operator $D$ has the UCP then we say that the operator $D$ has the UCP.
\end{definition}
Standard examples of operators having the UCP include the Laplace operator $\Delta$ or elliptic operators with analytic coefficients.
A breakthrough result was due to Carleman~\cite{Carleman-39} in 1939 who proved that $- \Delta + V$ with $V \in L_{\mathrm{loc}}^\infty$ has the UCP by using inequalities which are nowadays refered to as Carleman estimates.
We shall first have a look at some unique continuation properties which at first sight are weaker than the above definition.
In order to illustrate the mechanism how Carleman estimates imply unique continuation let us recall a proof of the following result, see~\cite{KenigRS-87}.
\begin{proposition}[Unique continuation from a half space, \cite{KenigRS-87}]
 \label{prop:UC_half_space}
 Let $d \geq 3$, $p = 2d/(d+2)$ and $V \in L^{d/2}(\RR^d)$.
 Then every $u \in H^{2,p}(\RR^d)$ satisfying $\lvert \Delta u \rvert \leq \lvert V u \rvert$ which vanishes on a half space must vanish everywhere.
\end{proposition}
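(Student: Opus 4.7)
The plan is to use a Carleman estimate argument following Kenig-Ruiz-Sogge. The central analytic input is the uniform-in-$\tau$ Carleman inequality
\begin{equation*}
\lVert \euler^{\tau x_1} v \rVert_{L^q(\RR^d)} \leq C \lVert \euler^{\tau x_1} \Delta v \rVert_{L^p(\RR^d)}, \quad v \in C_c^\infty(\RR^d),\ \tau \in \RR,
\end{equation*}
where $q = 2d/(d-2)$ is the Sobolev conjugate of $p = 2d/(d+2)$ and the constant $C$ is independent of $\tau$. The exponents are chosen so that $1/p - 1/q = 2/d$, whence H\"older's inequality with $V \in L^{d/2}$ yields
\begin{equation*}
\lVert \euler^{\tau x_1} V v \rVert_{L^p} \leq \lVert V \rVert_{L^{d/2}}\, \lVert \euler^{\tau x_1} v \rVert_{L^q},
\end{equation*}
which is the only bound needed to absorb the potential term.

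After rotating coordinates, one may assume that $u$ vanishes on the half space $\{x_1 < 0\}$, and decompose $V = V_1 + V_2$ by density of bounded compactly supported functions in $L^{d/2}(\RR^d)$: take $\lVert V_1 \rVert_{L^{d/2}}$ arbitrarily small, and $V_2 \in L^\infty$ supported in some ball $B(0,R)$. Because $u \in H^{2,p}$ is not necessarily smooth or compactly supported, the Carleman inequality is to be applied to a mollified and spatially truncated version $\chi u$; the cutoff $\chi$ can be chosen so that $u$ already vanishes on the transition region in the negative $x_1$-direction, and so that commutator terms arising from $[\Delta,\chi]$ on the positive $x_1$-side become negligible compared to the main estimate as $\tau \to \infty$. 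Combining the Carleman inequality with $\lvert \Delta u \rvert \leq \lvert V u \rvert$ and the H\"older bound above gives
\begin{equation*}
\lVert \euler^{\tau x_1} u \rVert_{L^q} \leq C \lVert V_1 \rVert_{L^{d/2}} \lVert \euler^{\tau x_1} u \rVert_{L^q} + C \lVert V_2 \rVert_\infty\, \euler^{\tau R}\, \lVert u \rVert_{L^p(B(0,R))} + (\text{errors}),
\end{equation*}
and choosing $\lVert V_1 \rVert_{L^{d/2}} \leq 1/(2C)$ absorbs the first term on the right into the left.

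Restricting the left-hand norm to $\{x_1 \geq 2R\}$, where $\euler^{\tau x_1} \geq \euler^{2\tau R}$, yields
\begin{equation*}
\lVert u \rVert_{L^q(\{x_1 \geq 2R\})} \leq \tilde C\, \euler^{-\tau R}\, \lVert u \rVert_{L^p(B(0,R))},
\end{equation*}
and letting $\tau \to +\infty$ forces $u \equiv 0$ on $\{x_1 \geq 2R\}$. To upgrade this to $u \equiv 0$ on all of $\RR^d$, the argument is iterated: the set $\{a \in \RR : u \equiv 0 \text{ on } \{x_1 < a\}\}$ is non-empty by hypothesis, open by applying the preceding argument to translated half spaces, and closed in $\RR$, hence equals $\RR$. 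The main obstacle is producing the uniform-in-$\tau$ Carleman estimate at the Sobolev-conjugate exponents $(p,q)$: this is a deep harmonic-analytic result of Stein-Tomas restriction type, and it is precisely what fixes the hypothesis $V \in L^{d/2}$. Once the Carleman estimate is in hand, the absorption and iteration scheme is structural, with only the low-regularity approximation step requiring careful bookkeeping.
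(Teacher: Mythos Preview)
Your overall strategy---Carleman estimate, H\"older with $V \in L^{d/2}$, absorption---matches the paper's, but the sign of your Carleman weight is backwards and this creates a genuine gap. With weight $\euler^{\tau x_1}$ and $\tau>0$, the weight blows up precisely in the region $\{x_1>0\}$ where $u$ is unknown. You therefore must truncate on the positive $x_1$-side, say at $x_1 \approx M > 2R$, in order to have $\euler^{\tau x_1}\chi u \in H^{2,p}$. But then the commutator $[\Delta,\chi]$ on that transition region is weighted by $\euler^{\tau M}$, which dominates both the $V_2$-term (weighted by $\euler^{\tau R}$) and the left-hand side restricted to $\{x_1 \ge 2R\}$ (weighted by $\euler^{2\tau R}$). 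So the claim that these commutator terms ``become negligible as $\tau\to\infty$'' is false; they explode fastest. There is also a mismatch in the iteration step: even granting your conclusion $u\equiv0$ on $\{x_1\ge 2R\}$, this does not show that $\{a: u\equiv0 \text{ on } \{x_1<a\}\}$ is open, since you gain vanishing far to the right rather than immediately to the right of $a$.

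The paper (proving the stronger slab version, Proposition~\ref{prop:UC_strip}) avoids both issues by taking the weight $\euler^{-\lambda x_1}$ with $\lambda>0$. Then the weight is bounded on the support of $\chi u$ and no cutoff on the positive side is needed at all: the only transition region for $\chi$ lies in $\{-\epsilon<x_1<0\}$ where $u$ already vanishes, so the commutator contributes nothing. Instead of decomposing $V=V_1+V_2$, the paper chooses a slab width $\rho>0$ small enough that $\lVert V\rVert_{L^{d/2}(S_\rho)}\le 1/(2C)$ for every slab $S_\rho$ of width $\rho$; on $S_\rho$ one absorbs, and on $\{x_1>\rho\}$ the weight satisfies $\euler^{-\lambda x_1}\le \euler^{-\lambda\rho}$, giving $\lVert \euler^{\lambda(\rho-x_1)} u\rVert_{L^q(S_\rho)}\le 2C\lVert\Delta u\rVert_{L^p}$ uniformly in $\lambda$. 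Sending $\lambda\to\infty$ forces $u\equiv0$ on $S_\rho$, and one then iterates slab by slab.
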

In fact, we are going to show a slightly stronger statement.
By an infinite slab of width $\epsilon$, we denote a set $S \subset \RR^d$ which is a translation and rotation of
\[
  \{ x \in \RR^d \colon 0 < x_1 < \epsilon, x_2, \dots x_d \in \RR \}.
\]
In dimension $d = 2$, an infinte slab would be an infinite strip.
\begin{proposition}[Unique continuation from a slab]
 \label{prop:UC_strip}
 Let $d \geq 3$, $p = 2d/(d+2)$ and $V \in L^{d/2}(\RR^d)$.
 Then every $u \in H^{2,p}(\RR^d)$ satisfying $\lvert \Delta u \rvert \leq \lvert V u \rvert$ which vanishes on a infinite slab of width $\epsilon > 0$ must vanish everywhere.
\end{proposition}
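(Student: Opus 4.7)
My plan is to reduce Proposition~\ref{prop:UC_strip} to the half-space statement Proposition~\ref{prop:UC_half_space} by a cut-and-paste argument that exploits the positive width $\epsilon$ of the slab. After a rigid motion of $\RR^d$ (which does not affect the hypotheses) I may assume the slab is $S_\epsilon = \{x\in\RR^d \colon 0<x_1<\epsilon\}$. Pick any $a\in(0,\epsilon)$; since $u$ vanishes on the open set $S_\epsilon$, it vanishes identically in an open neighborhood of the hyperplane $H_a=\{x_1=a\}$. This genuine open zero-set around $H_a$ is what makes the slab case softer than the half-space case.

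Define
\[
 v(x) := u(x)\,\chi_{\{x_1>a\}}(x), \qquad w(x) := u(x)\,\chi_{\{x_1<a\}}(x).
\]
Because $u$ is identically zero in an open neighborhood of $H_a$, the functions $v$ and $w$ can equivalently be written as products of $u$ with a smooth cutoff that equals $1$ on the side we keep and $0$ on the other side, so both lie in $H^{2,p}(\RR^d)$, and their weak Laplacians are
\[
 \Delta v = (\Delta u)\,\chi_{\{x_1>a\}}, \qquad \Delta w = (\Delta u)\,\chi_{\{x_1<a\}},
\]
with no distributional boundary terms across $H_a$. Combining this with the pointwise bound $|\Delta u|\le |Vu|$ yields $|\Delta v|\le |Vv|$ and $|\Delta w|\le |Vw|$ almost everywhere, with the same potential $V\in L^{d/2}(\RR^d)$.

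Now $v$ vanishes on the closed half space $\{x_1\le a\}$ and $w$ vanishes on $\{x_1\ge a\}$. Applying Proposition~\ref{prop:UC_half_space} to $v$ and to $w$ separately forces $v\equiv 0$ and $w\equiv 0$, and therefore $u\equiv 0$ on $\RR^d$, as claimed.

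The only point that requires genuine care is the assertion that $v$ and $w$ belong to $H^{2,p}$ with the stated weak Laplacians, which I expect to be the main technical obstacle. This is precisely where the hypothesis $\epsilon>0$ is used: without an open strip of zeros there would be no room to cut $u$ along a hyperplane through its zero set without producing a singular layer on $H_a$. If, contrary to my plan, a direct Carleman-type proof is preferred (as suggested by the authors' phrase \emph{``slightly stronger statement''}), one could instead revisit the Kenig--Ruiz--Sogge estimate used for Proposition~\ref{prop:UC_half_space} with a weight $\euler^{\tau x_1}$; but the reduction above seems to extract Proposition~\ref{prop:UC_strip} from Proposition~\ref{prop:UC_half_space} without additional harmonic analysis.
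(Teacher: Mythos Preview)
Your argument is correct but takes a different route from the paper's proof. The paper proves Proposition~\ref{prop:UC_strip} directly from the Carleman estimate (Theorem~\ref{thm:Carleman}): it fixes $\rho>0$ so small that $\lVert V\rVert_{L^{d/2}(S_\rho)}\le 1/(2C)$ on every slab $S_\rho$ of width $\rho$, multiplies $u$ by a smooth cutoff $\chi$ supported to the right of the vanishing slab, applies the weighted inequality with weight $\euler^{-\lambda x_1}$, absorbs the $Vu$ term on $S_\rho$ into the left-hand side, and lets $\lambda\to\infty$ to force $u\equiv 0$ on the adjacent slab $S_\rho$; iteration then covers all of $\RR^d$. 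Your approach instead treats Proposition~\ref{prop:UC_half_space} as a black box and reduces the slab statement to it by splitting $u$ along a hyperplane that lies \emph{inside} the open zero set. The regularity step you flag is harmless: choosing a smooth $\phi$ with $\operatorname{supp}\nabla\phi$ contained in the zero slab, one has $\phi u\in H^{2,p}$ with $\Delta(\phi u)=\phi\,\Delta u$ because all cross terms $(\partial^\alpha\phi)(\partial^\beta u)$ with $|\alpha|\ge 1$ vanish identically, and then $|\Delta(\phi u)|\le |V|\,|\phi u|$ follows pointwise. Your reduction thus shows that Propositions~\ref{prop:UC_half_space} and~\ref{prop:UC_strip} are in fact equivalent (the other implication being trivial, since a half-space contains slabs of every width). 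The trade-off is purely expository: the paper's stated aim is to illustrate the mechanism ``Carleman estimate $\Rightarrow$ unique continuation'' in a transparent model case, which your shortcut bypasses; on the other hand, your argument is shorter and pinpoints exactly where the hypothesis $\epsilon>0$ enters.
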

The proof relies on the following Carleman estimate, which can be found e.g. in \cite{KenigRS-87}.
\begin{theorem}
 \label{thm:Carleman}
 Let $d \geq 3$, $p = 2d/(d+2)$ and $q = 2d/(d-2)$. Then there is a constant $C > 0$ such that for all $\nu \in \RR^d$, all $\lambda \in \RR$ and all $u$ with $\euler^{\lambda \left\langle \nu, x \right\rangle} u \in H^{2,p}(\RR^d)$ we have
 \[
  \lVert \euler^{\lambda \left\langle \nu, x \right\rangle} u \rVert_{L^q(\RR^d)}
  \leq
  C
  \lVert \euler^{\lambda \left\langle \nu, x \right\rangle} \Delta u \rVert_{L^p(\RR^d)}.
 \]
\end{theorem}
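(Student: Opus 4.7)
The plan is to reduce the Carleman estimate to a uniform Fourier multiplier bound and then invoke the Stein-Tomas restriction theorem. First I would perform the standard conjugation $v := \euler^{\lambda\left\langle\nu,x\right\rangle}u$, which turns the desired inequality into
\[
 \lVert v\rVert_{L^q(\RR^d)} \leq C\lVert L_{\lambda,\nu}v\rVert_{L^p(\RR^d)}, \qquad L_{\lambda,\nu} := \Delta - 2\lambda\left\langle\nu,\nabla\right\rangle + \lambda^2\lvert\nu\rvert^2,
\]
since $L_{\lambda,\nu}$ is the Laplacian conjugated by $\euler^{\lambda\left\langle\nu,x\right\rangle}$. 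A rotation of coordinates brings $\nu$ to $e_1$. The case $\lambda = 0$ is the sharp Sobolev inequality $\lVert v\rVert_q \leq C\lVert\Delta v\rVert_p$ for the Riesz potential of order $2$, which follows from Hardy-Littlewood-Sobolev with the prescribed exponents (note $1/p - 1/q = 2/d$).

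For $\lambda \neq 0$ I would rescale via $v(x) = w(\lambda x)$. Because the condition $1/p - 1/q = 2/d$ matches exactly the scaling degree of $\Delta^{-1}$, all $\lambda$-dependent prefactors cancel and the problem reduces to the single case
\[
 \lVert w\rVert_{L^q(\RR^d)} \leq C\lVert L_1 w\rVert_{L^p(\RR^d)}, \qquad L_1 := \Delta - 2\partial_1 + 1.
\]
On the Fourier side, $L_1^{-1}$ is the multiplier $m(\xi) = -[(\xi_1 + i)^2 + \lvert\xi'\rvert^2]^{-1}$, where $\xi = (\xi_1,\xi') \in \RR\times\RR^{d-1}$. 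A short calculation shows that its real singular set is the codimension-two sphere $\Sigma = \{\xi_1 = 0,\ \lvert\xi'\rvert = 1\} \subset \RR^d$.

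Finally, I would split $m = m_{\mathrm{far}} + m_{\mathrm{near}}$ via a smooth cutoff around $\Sigma$. For $m_{\mathrm{far}}$ the symbol is bounded, smooth, and decays like $\lvert\xi\rvert^{-2}$, so the corresponding kernel is dominated pointwise by the Riesz kernel $\lvert x\rvert^{-(d-2)}$ and the $L^p\to L^q$ bound reduces to Hardy-Littlewood-Sobolev. The delicate piece is $m_{\mathrm{near}}$: localising to a tubular neighbourhood of $\Sigma$ and decomposing it along its $\xi_1$-sections exhibits a family of resolvents $(\Delta_{x'} - z(\xi_1))^{-1}$ on $\RR^{d-1}$ whose spectral parameter $z(\xi_1)$ crosses the positive real axis at $\xi_1 = 0$. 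The Stein-Tomas restriction theorem for the sphere $S^{d-2}\subset\RR^{d-1}$ produces exactly the uniform $L^{2d/(d+2)}(\RR^{d-1})\to L^{2d/(d-2)}(\RR^{d-1})$ estimate needed; a one-dimensional convolution argument in $\xi_1$, combined with Minkowski's integral inequality (using $p \leq 2 \leq q$), then assembles the full $d$-dimensional bound. Obtaining this uniform control of the resolvent as $\xi_1$ traverses zero is the main obstacle: it is precisely what forces the exponent pair to lie at the Stein-Tomas endpoint for a $(d-2)$-sphere, and what makes the hypothesis $d \geq 3$ indispensable.
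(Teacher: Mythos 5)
The paper does not prove Theorem~\ref{thm:Carleman}; it cites it from Kenig--Ruiz--Sogge~\cite{KenigRS-87} and uses it as a black box in the proof of Proposition~\ref{prop:UC_strip}, so there is no in-paper proof to compare against. Your sketch is a faithful outline of the actual argument in~\cite{KenigRS-87}, and the key computations are right: the conjugated operator is $L_{\lambda,\nu}=\Delta-2\lambda\langle\nu,\nabla\rangle+\lambda^{2}\lvert\nu\rvert^{2}$; the relation $1/p-1/q=2/d$ makes the estimate scale-invariant, so one may normalise $\lvert\nu\rvert=1$ and $\lambda=1$; the resulting symbol is $m(\xi)=-\bigl[(\xi_1+i)^2+\lvert\xi'\rvert^2\bigr]^{-1}$ with real singular set $\Sigma=\{\xi_1=0,\ \lvert\xi'\rvert=1\}\cong S^{d-2}$; and $p=2d/(d+2)$ is precisely the Stein--Tomas endpoint for restriction to $S^{d-2}\subset\RR^{d-1}$.

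The one step that is asserted rather than argued is the final assembly for $m_{\mathrm{near}}$. ``A one-dimensional convolution argument in $\xi_1$ combined with Minkowski's inequality'' is exactly where the difficulty lives: slicing in $\xi_1$ and bounding each slice na\"{\i}vely produces constants that fail to be integrable across $\xi_1=0$, because each slice resolvent acquires its boundary singularity there. What is actually needed is a \emph{uniform}, limiting-absorption-type $L^p\to L^q$ bound on the slice resolvent as its spectral parameter crosses the real spectrum, together with a non-trivial handling of the transverse integral (in~\cite{KenigRS-87} this is carried out by a decomposition and interpolation argument); the Stein--Tomas endpoint makes this uniformity possible, but invoking the theorem is not the same as executing the step, and as written this reads as a claim rather than a proof. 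Two small remarks: you should either normalise $\lvert\nu\rvert=1$ before rotating to $e_1$ (absorbing $\lvert\nu\rvert$ into $\lambda$) or treat $\nu=0$ separately, where the estimate is just Hardy--Littlewood--Sobolev; and the sign in ``crosses the positive real axis'' depends on your convention for $\Delta_{x'}$ --- with $\Delta_{x'}=\sum_j\partial_j^2$ the spectral parameter $(\xi_1+i)^2$ crosses the \emph{negative} real axis at $-1$.
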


\begin{proof}[Proof of Proposition~\ref{prop:UC_strip}]
We choose $\rho > 0$ such that $\lVert V \rVert_{L^{d/2}(S_\rho)} \leq 1/(2C)$ for all infinite slabs $S_\rho$ of width $\rho$ where $C$ is the constant from Theorem~\ref{thm:Carleman}.
 By translation and rotation, we may assume that $u$ vanishes on the slab $\{ x \in \RR^d : - \epsilon < x_1 < 0 \}$ and it suffices to show $u \equiv 0$ in $S_\rho := \{ x \in \RR^d: 0 < x_1 < \rho \}$.
 Let now $\chi \in C^\infty(\RR^d)$ such that $\chi \equiv 0$ if $x_1 < - \epsilon$ and $\chi \equiv 1$ if $x_1 > 0$.
 We estimate, using H\"older's inequality and $\lvert \Delta u \rvert \leq \lvert V u \rvert$ to obtain for all $\lambda > 0$
 \begin{align*}
  \lVert \euler^{- \lambda x_1} u \rVert_{L^q(S_\rho)}
  &\leq \lVert \euler^{- \lambda x_1} \chi u \rVert_{L^q(\RR^d)}\\
  &\leq C \lVert \euler^{- \lambda x_1} \Delta( \chi u ) \rVert_{L^p(\RR^d)}\\
  &\leq C \lVert \euler^{- \lambda x_1} \Delta u  \rVert_{L^p(S_\rho)}
        +
        C \lVert \euler^{- \lambda x_1} \Delta( \chi u ) \rVert_{L^p(\RR^d \backslash S_\rho)}\\
  &\leq C \lVert \euler^{- \lambda x_1} V u  \rVert_{L^p(S_\rho)}
        +
        C \euler^{- \lambda \rho} \lVert \Delta u \rVert_{L^p(\RR^d)}\\
  &\leq C
        \lVert V \rVert_{L^{d/2}(\RR^d)}
        \cdot
        \lVert \euler^{- \lambda x_1} u  \rVert_{L^q(S_\rho)}
        +
        C \euler^{- \lambda \rho} \lVert \Delta u \rVert_{L^p(\RR^d)}\\
  &\leq \frac{1}{2} \lVert \euler^{- \lambda x_1} u  \rVert_{L^q(S_\rho)}
        +
        C \euler^{- \lambda \rho} \lVert \Delta u \rVert_{L^p(\RR^d)},
 \end{align*}
 where $q$ is the exponent from Theorem~\ref{thm:Carleman}.
 Substracting the first summand on the right hand side and multiplying by $\euler^{\lambda \rho}$, one finds
 \[
  \lVert \euler^{\lambda (\rho - x_1)} u \rVert_{L^q(S_\rho)}
  \leq
  2 C \lVert \Delta u \rVert_{L^p(\RR^d)}
 \]
 for all $\lambda > 0$.
 This is only possible if $u \equiv 0$ in $S_\rho$.
\end{proof}
Now one is in the position to conclude unique continuation properties of other domains.
\begin{proposition}[Outside-in and inside-out unique continuation, \cite{KenigRS-87}]
 Let $u \in H^{2,p}(\RR^d)$ satisfy $\lvert \Delta u \rvert \leq \lvert V u \rvert$ for a $V \in L^{d/2}(\RR^d)$.
 \begin{enumerate}[i)]
  \item If $u$ vanishes outside of an open ball of radius $\rho > 0$, it must vanish everywhere.
  \item If $u$ vanishes on an open ball of radius $\rho > 0$, it must vanish everywhere.
 \end{enumerate}
\end{proposition}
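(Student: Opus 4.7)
For part (i), the complement of the open ball $B(x_0, \rho)$ contains, for every unit vector $\nu$, the open half-space $H_\nu = \{x \in \RR^d : \langle \nu, x - x_0\rangle > \rho\}$, and any such half-space contains infinite slabs of arbitrary width. Since by hypothesis $u$ vanishes on $\RR^d \setminus B(x_0, \rho) \supset H_\nu$, Proposition~\ref{prop:UC_strip} applies directly and yields $u \equiv 0$.

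For part (ii), the obstruction is that an open ball contains no infinite slab, so Proposition~\ref{prop:UC_strip} cannot be invoked directly on the set where $u$ is assumed to vanish. My plan is to reduce (ii) to (i) via the Kelvin inversion $\Phi(x) = x_0 + \rho^2 (x - x_0)/|x - x_0|^2$, which is an involution exchanging the interior and the exterior of $B(x_0, \rho)$. Set
\[
\tilde u(x) = |x - x_0|^{2-d}\, u\bigl(\Phi(x)\bigr), \qquad x \neq x_0.
\]
The standard transformation law of the Laplacian under Kelvin inversion, together with the inequality $|\Delta u| \leq |Vu|$, produces $|\Delta \tilde u(x)| \leq |\tilde V(x)\, \tilde u(x)|$ with $\tilde V(x) = |x - x_0|^{-4} V(\Phi(x))$, and the change of variables $y = \Phi(x)$ shows $\lVert \tilde V \rVert_{L^{d/2}(\RR^d)} = \lVert V \rVert_{L^{d/2}(\RR^d)}$, so $\tilde V \in L^{d/2}$. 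Because $u$ vanishes on $B(x_0, \rho)$, $\tilde u$ vanishes on $\RR^d \setminus \overline{B(x_0, \rho)}$, and applying (i) to $\tilde u$ gives $\tilde u \equiv 0$, whence $u \equiv 0$.

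The main obstacle is the regularity check: one has to verify that $\tilde u$ genuinely lies in $H^{2,p}(\RR^d)$ in spite of the Kelvin weight $|x - x_0|^{2-d}$ at the inversion centre, which encodes the decay of $u$ at infinity. The numerology is favourable precisely because the exponents $p = 2d/(d+2)$ and $q = 2d/(d-2)$ of Theorem~\ref{thm:Carleman} are the conformal Sobolev exponents, but this bookkeeping is the heart of the matter. An alternative, perhaps technically lighter route would avoid Kelvin inversion altogether and instead propagate the vanishing outward annulus by annulus: run the Carleman absorption argument from the proof of Proposition~\ref{prop:UC_strip} with a radial weight $\euler^{\lambda \phi(|x - x_0|)}$ for a convex increasing $\phi$ on a thin shell $B(x_0, r + \delta) \setminus B(x_0, r)$, choosing $\delta$ small enough for the local $L^{d/2}$-norm of $V$ to be absorbed as in that proof, and then iterating in $r$ until an infinite slab is eventually swallowed, at which point Proposition~\ref{prop:UC_strip} closes the argument.
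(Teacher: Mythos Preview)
Your proposal is correct and matches the paper's approach: part (i) is deduced from half-space unique continuation (the paper cites Proposition~\ref{prop:UC_half_space} directly, you go via Proposition~\ref{prop:UC_strip}, which is equivalent), and part (ii) is reduced to (i) by the Kelvin transform $\tilde u(x) = |x|^{2-d} u(x/|x|^2)$, exactly as the paper indicates. Your discussion of the $H^{2,p}$-regularity of $\tilde u$ and the alternative annulus-by-annulus propagation are more detailed than what the paper provides, which only records the transformation without further comment.
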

 Part i) is a special case of Proposition~\ref{prop:UC_half_space}, while the proof of Part ii) is based upon the transformation $u(x) \mapsto \tilde u(x) := u(x/ \lvert x \rvert^2) \cdot \lvert x \rvert^{-(d-2)}$.
\par
 So far, we found that eigenfunctions vanishing on half-spaces, slabs, outside and inside of balls must vanish everywhere.
 In particular, the latter implies the notion of unique continuation as in Definition~\ref{def:UC}.
 The assumption $V \in L^{d/2}(\RR^d)$ can be substantially relaxed, but we are not going to focus our attention on this issue and refer to the references~\cite{Wolff-92b,Wolff-95,KochT-01}.
 We emphasize, however that we exploited rotational symmetry and the transformation $x \mapsto x/\lvert x \rvert^2$.
 On the lattice $\ZZ^d$ this is not going to work any more.
\par
 While unique continuation itself has turned out to be a useful tool for many applications~\cite{JerisonK-85, EscauriazaKPV-12}, in some situations, more information is required.
 We speak of \emph{Quantitative unique continuation} if a function which is ``small'' on $U$ cannot be ``too large'' on the whole domain $A$.
 Of course the notion of smallness needs some clarification.
 It can be formulated in terms of different norms, local maxima, etc. and there is a connection to vanishing speed of functions in a neighbourhood of their zero set.
 We are going to cite some cases of quantitative unique continuation principles and some resulting applications.
\par
The first example concerns vanishing speed of solutions of the Laplace-Beltrami operator on compact manifolds with the explicit dependence $\euler^{\sqrt{E}}$ on the eigenvalue - a term that we will encounter later on.
It is due to \cite{DonnellyF-88} and follows by combining Thm.~4.2 (i) with the second displayed formula on p.~174 in~\cite{DonnellyF-88}.

\begin{theorem}
 Let $M$ be a closed, compact $C^\infty$ Riemannian manifold.
 Then there are constants $C_1, C_2 \geq 0$ such that for every $u \not \equiv 0$ and $- \Delta u = E u$ and every $x_0 \in M$, we have
 \[
  \epsilon^{C_1 + C_2 \sqrt{E}}
  \cdot
  \max_{x \in M} \lvert u(x) \rvert
  \leq
  \max_{x \in B(x_0,\epsilon)} \lvert u(x) \rvert
  \quad
  \text{for small enough}\
  \epsilon > 0
 \]
 i.e. $u$ can at most vanish of order $C_1 + C_2 \sqrt{E}$.
\end{theorem}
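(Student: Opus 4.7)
The plan is to use the \emph{elliptic extension trick}, which reduces a Laplace eigenfunction problem on $M$ to a harmonic function problem on the product manifold $\tilde M := M \times \RR$, and then to quantify the vanishing order of the harmonic extension by an Almgren-type frequency/doubling argument.

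First, I would define $U \colon \tilde M \to \RR$ by $U(x,t) := u(x)\cosh(\sqrt E \, t)$ and verify directly that $U$ is harmonic on $\tilde M$ equipped with the product metric: the term $E u \cosh(\sqrt E t)$ appearing in $\partial_t^2 U$ cancels with $\Delta_M U = -Eu \cosh(\sqrt E t)$. This is the essential reduction, because it replaces a zeroth-order term of size $E$ in the equation for $u$ by an exponential growth rate of only $\sqrt E$ in the additional variable, which is why the final exponent comes out proportional to $\sqrt E$ rather than $E$.

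Next, I would fix a geodesic normal chart around $(x_0,0) \in \tilde M$ of some radius $R_0$ depending only on the injectivity radius and curvature bounds of $M$, and apply a local doubling estimate for harmonic functions. Letting
\[
 N(r) := \frac{r \int_{B_r} |\nabla U|^2 \,dV}{\int_{\partial B_r} U^2 \,dS},
\]
the classical Almgren--Garofalo--Lin almost-monotonicity of $r \mapsto \euler^{Ar} N(r)$ yields a doubling inequality
\[
 \max_{B_{2r}} |U| \le 2^{C_0(1+N(R_0))} \max_{B_r} |U|,\qquad 0<r\le R_0/2,
\]
and iterating this from scale $R_0$ down to scale $\epsilon$ produces an $(\epsilon/R_0)^{C_0(1+N(R_0))}$ lower bound on $\max_{B_\epsilon}|U|/\max_{B_{R_0}}|U|$. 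It then remains to bound $N(R_0)$ by $C_1' + C_2'\sqrt E$. Here I would exploit that $|U| \le \cosh(\sqrt E R_0)\,\max_M |u|$ on $B_{R_0}$ while $\max_M|u| \le \max_{M\times\{0\}} |U|$; converting this sup-ratio bound between two concentric balls into a bound on the doubling index via interior elliptic regularity gives exactly the $\sqrt E$ rate. Finally, translating the harmonic inequality back to $u$ is straightforward because $U(x,0)=u(x)$ and because $\cosh(\sqrt E \epsilon) \le \euler^{\sqrt E \epsilon}$ can be absorbed into the exponent for small $\epsilon$.

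The main obstacle is establishing the bound $N(R_0) \le C_1' + C_2'\sqrt E$ with the correct $\sqrt E$ scaling; a frequency estimate that ignores the product structure and uses only $-\Delta u = Eu$ on $M$ would yield an index proportional to $E$, which is off by a square root. Tracking constants so that they depend only on the geometry of $M$ and not on $E$ is the technical heart of the argument, and it is precisely where the approach of~\cite{DonnellyF-88} via Carleman inequalities on $\tilde M$ does its essential work.
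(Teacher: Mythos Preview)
The paper does not supply its own proof of this theorem; it merely quotes the result and points to Theorem~4.2(i) together with an inequality on p.~174 of~\cite{DonnellyF-88}. So there is nothing in the present paper to compare your argument against directly, and your proposal should be measured against the Donnelly--Fefferman method being cited.

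Your outline via the harmonic lift $U(x,t)=u(x)\cosh(\sqrt{E}\,t)$ followed by an Almgren frequency/doubling argument is a legitimate and by now standard alternative route, but it is \emph{not} the method of~\cite{DonnellyF-88}. Donnelly and Fefferman work directly on $M$ with Carleman inequalities whose weights are tuned to the eigenvalue equation $-\Delta u = Eu$; they do not pass to the product manifold $\tilde M$. The frequency-function approach you describe is closer in spirit to Garofalo--Lin and to the more recent work of Logunov. Both routes deliver the $\sqrt{E}$ exponent for the same structural reason you identify: replacing the zeroth-order term of size $E$ by exponential growth of rate $\sqrt{E}$ in an auxiliary variable. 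The Carleman route is more self-contained but yields constants that are harder to make explicit; the frequency route is conceptually cleaner and localises well, at the cost of the monotonicity machinery.

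Two points in your sketch need tightening. First, your closing sentence attributes to~\cite{DonnellyF-88} a Carleman argument on $\tilde M$; that conflates the two methods. Second, the passage from ``$|U|\le \cosh(\sqrt{E}R_0)\max_M|u|$ on $B_{R_0}$'' to ``$N(R_0)\le C_1'+C_2'\sqrt{E}$'' is not a local step: a sup bound on a single ball does not by itself control the frequency there. One must use compactness of $M$ to run a chain-of-balls (or Harnack-type) argument from the point where $|u|$ attains its maximum to the arbitrary centre $x_0$, accumulating a doubling index of order $\sqrt{E}$ along the way. This overlapping-balls propagation is exactly the global input that turns the local doubling inequality into a uniform vanishing-order bound, and it is the step you have flagged as the main obstacle without actually carrying it out.
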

In particular, if an eigenfunction $u$ of the Laplace-Beltrami operator is zero in a non-empty open set, it certainly vanishes of infinite order and thus $u \equiv 0$, i.e. it has the UCP.
In \cite{Bakri-13}, similar results were proven for a larger class of second order differential operators which allowed for a potential and first order terms.

Now we turn to vanishing properties at infinity.
In this setting, one wants to understand the fastest possible rate at which a function can decay as the norm of its argument tends to infinity.

\begin{theorem}[Quantitative UCP for eigenfunctions of Schr\"odinger operator, \cite{BourgainK-05}]
\label{thm:ucp_Bourgain_Kenig}
 Assume $\Delta u = V u + \gamma$ in $\RR^d$, $u(0) = 1$, $\lvert u \rvert \leq C$ and $\lVert V \rVert_\infty \leq C$.
 Then there are $C_1, C_2 > 0$ such that for every $x_0 \in \RR^d$, we have
 \begin{equation}
 \label{eq:ucp_Bourgain_Kenig_1}
  \max_{\lvert x - x_0 \rvert \leq 1} \lvert u(x) \rvert
  +
  \lVert \gamma \rVert_\infty
  >
  C_1
  \exp
   \left(
   - C_2 (\log \lvert x_0 \rvert )
   \lvert x_0 \rvert^{4/3}
   \right).
 \end{equation}
\end{theorem}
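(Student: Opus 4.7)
The plan is to derive this quantitative unique continuation result via a Carleman inequality with a carefully chosen radial weight, combined with an iterated three-ball / propagation-of-smallness argument along a chain of balls connecting the origin to $x_{0}$. The Carleman philosophy already illustrated in the proof of Proposition~\ref{prop:UC_strip} is the right tool, but here the estimates must be quantitative and must work at arbitrarily large scales.

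First I would establish a Carleman-type estimate of the schematic form
\[
 \tau^{3} \int w^{-1-2\tau} \lvert f \rvert^{2} \, dx
 \leq C \int w^{2-2\tau} \lvert \Delta f \rvert^{2} \, dx
\]
for smooth $f$ supported in an annulus around the origin, with a large free parameter $\tau \geq \tau_{0}$ and a radial weight $w(x)$ comparable to $\lvert x \rvert$ (with a logarithmic modification ensuring strict pseudoconvexity of $-\log w$). The power $\tau^{3}$ on the left is what allows us to absorb the potential term: since $\lvert \Delta u \rvert \leq \lVert V \rVert_{\infty} \lvert u \rvert + \lvert \gamma \rvert$, applying the estimate to $f = \chi u$ for a suitable cutoff $\chi$ lets us move the $Vu$-contribution from the right-hand side to the left as soon as $\tau$ is sufficiently large compared to $\lVert V \rVert_{\infty}$.

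Next, I would pick three geometric scales $r_{1} < r_{2} < r_{3}$ and a smooth cutoff $\chi$ that equals $1$ on $\{r_{1} < \lvert x \rvert < r_{3}\}$ and vanishes slightly outside. The commutator $[\Delta,\chi]$ is supported in two thin shells, so inserting $\chi u$ into the Carleman inequality and sorting terms yields, after optimising in $\tau$ and using interior elliptic regularity to pass from weighted $L^{2}$ to $L^{\infty}$ on unit balls, a three-ball inequality of the shape
\[
 \lVert u \rVert_{L^{\infty}(\lvert x\rvert \sim r_{2})}
 \leq
 C \bigl(\lVert u \rVert_{L^{\infty}(\lvert x\rvert \sim r_{1})} + \lVert \gamma \rVert_{\infty}\bigr)^{\theta}
 \lVert u \rVert_{L^{\infty}(\lvert x\rvert \sim r_{3})}^{1-\theta},
\]
with an interpolation exponent $\theta = \theta(r_{1},r_{2},r_{3}) \in (0,1)$ coming from the logarithmic-convexity structure of the weight.

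Finally, I would iterate this three-ball inequality along a chain of overlapping balls connecting $0$ to $x_{0}$. The normalisation $u(0) = 1$ together with elliptic regularity gives a definite lower bound on the first ball of the chain, and at each step the bound transfers to the next ball at the cost of an exponential factor, a contribution from $\lVert \gamma \rVert_{\infty}$, and the global bound $\lvert u \rvert \leq C$. The genuinely hard step, and the main content of~\cite{BourgainK-05}, is to optimise this chain --- its length, the radii chosen at each link, and the Carleman parameter $\tau$ used at each step --- so that the accumulated exponential loss after reaching $x_{0}$ is of the stated order $\exp\bigl(C_{2} (\log \lvert x_{0} \rvert)\lvert x_{0} \rvert^{4/3}\bigr)$, rather than the naive $\exp(C\lvert x_{0} \rvert^{2})$ one would get by brute iteration with fixed scales. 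The exponent $4/3$ emerges as the optimal balance, for a bounded potential, between the $\tau^{3}$ gain and the per-step exponential losses when the radii in the chain grow geometrically; producing exactly this balance (and the logarithmic correction) is the main technical obstacle.
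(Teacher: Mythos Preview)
The paper does not prove this theorem at all; it is quoted from \cite{BourgainK-05} without proof and then used as input for the discussion of the Anderson--Bernoulli model and of Theorem~\ref{thm:BourgainKlein}. There is therefore no ``paper's own proof'' against which to compare your proposal.

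For what it is worth, your sketch is a plausible high-level description of the strategy, but it deviates from the actual argument in \cite{BourgainK-05} in one respect worth flagging. Bourgain and Kenig do not obtain the lower bound by iterating a three-ball inequality along a chain of overlapping balls from $0$ to $x_{0}$; rather, they apply their Carleman estimate \emph{once}, at scale comparable to $\lvert x_{0} \rvert$, with a cutoff supported in a large annulus and with the free parameter $\tau$ chosen of order $\lvert x_{0} \rvert^{4/3}$. The exponent $4/3$ then falls out directly from balancing the $\tau^{3}$ gain in the Carleman inequality against the weight ratios between the inner and outer shells in that single application. Your iterated-chain approach is a legitimate alternative route that appears elsewhere in the unique continuation literature, but extracting exactly $4/3$ (rather than a worse exponent) from an iteration with growing radii is more delicate than your last paragraph suggests, and is not how the cited source proceeds.
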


Theorem~\ref{thm:ucp_Bourgain_Kenig} was an essential ingredient in proving \emph{spectral localization}, i.e. almost sure occurrence of dense pure point spectrum with exponentially decaying eigenfunctions for the \emph{Anderson-Bernoulli model}
\[
 H_\omega = - \Delta + V_\omega,
 \quad
 V(x) = \sum_{j \in \ZZ^d} \omega_j u(x-j)
\]
where $\omega_j$ are independent and identically distributed Bernoulli random variables (i.e. they are either $0$ or $1$) and $\phi$ is a smooth, positive, compactly supported single-site potential.

While localization has been well established before in the case of the $\omega_j$ having an absolutely continuous (with respect to the Lebesgue measure) probability measure,
see~e.g.~\cite{Stollmann-01,GerminetK-13b},
the case of Bernoulli distributed random variables had been more challenging and Theorem~\ref{thm:ucp_Bourgain_Kenig} turned out to be an essential ingredient in the proof.
In fact, since there is no lattice analogue of Theorem~\ref{thm:ucp_Bourgain_Kenig},
the question of localisation for the Anderson-Bernoulli model on the lattice $\ZZ^d$ is still open, except in the case of dimension $d = 1$ wehere different methods are available,
see \cite{CarmonaKM-87}, Theorem~2.1.
\par
In order to formulate the next result, we need to define the density of states (DOS) and the integrated density of states (IDS).
Let $V \in L^\infty(\RR^d)$ and $H = - \Delta + V$ on $L^2(\RR^d)$.
For a $d$-dimensional cube $\Lambda$, we call $H_\Lambda$ the restriction of $H$ to $L^2(\Lambda)$ with Dirichlet boundary conditions (i.e. by prescribing the value $0$ at the boundary of $\Lambda$).
Its spectrum consists of an increasing sequence of eigenvalues of finite multiplicity with the only accumulation point at $+ \infty$.
The finite volume density of states measure $\eta_\Lambda$ is defined by
\[
 \eta_\Lambda (B) := \frac{1}{\lvert \Lambda \rvert} \sharp \{ \text{Eigenvalues of}\ H_\Lambda \ \text{in}\ B \}
\]
for any Borel set $B \subset \RR$. Here and in the sequel we count eigenvalues according to their multiplicity.
If the potential $V$ is periodic, the density of states measure can be defined as the limit
\[
 \eta(B)
 :=
 \lim_{L \to \infty} \eta_{\Lambda_L} (B).
\]
More generally, if we have an ergodic random family $\{ V_\omega \}_{\omega \in \Omega}$ of potentials, there is convergence of the integrated density of states to a non-random function
\[
 N(E)
 :=
 \lim_{L \to \infty}
 \eta_{\Lambda_Lf}((- \infty, E])
\]
for almost every $E$ and almost every $\omega \in \Omega$.
For generic Schr\"odinger operators, $\eta$ might not be well-defined but one can still define the density of states outer-measure as
\[
 \eta^\ast(B)
 :=
 \limsup_{L \to \infty} \sup_{x \in \RR^d} \eta_{\Lambda_L(x)} (B),
\]

In \cite{BourgainK-13}, a version of Theorem~\ref{thm:ucp_Bourgain_Kenig} was applied to prove continuity of the density of states (outer-)measure in dimension $d = 2,3$.
The case of dimension $d = 1$ had already been proved in~\cite{CraigS-83b}.

\begin{theorem}[\cite{BourgainK-13}]
\label{thm:BourgainKlein}
 Let $H = - \Delta + V$ be a Schr\"odinger operator with bounded potential $V$ and let the dimension $d \in \{1,2,3\}$.
 Then for every $E_0 \in \RR$ there are constants $C_1, C_2$, depending only on $E_0$, $\lVert V \rVert_\infty$ and $d$ such that for every $E \leq E_0$ and every small enough $\epsilon$
 \[
  \eta^\ast( [E, E + \epsilon])
  \leq
  \frac{C_1}{( \log 1/\epsilon)^{C_2}},
 \]
 i.e. the density of states outer-measure is continuous.
\end{theorem}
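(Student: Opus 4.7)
The plan is to bound $\eta_{\Lambda_L(x)}([E,E+\epsilon])$ uniformly for large $L$ and every $x \in \RR^d$, since then the same bound is inherited by $\eta^\ast([E,E+\epsilon])$. My main tool would be a finite-box version of the quantitative unique continuation estimate of Theorem~\ref{thm:ucp_Bourgain_Kenig}, combined with a trace/Wegner counting argument. Fix $E \leq E_0$ and $\Lambda = \Lambda_L(x)$, and set $N := \#\{\text{eigenvalues of } H_\Lambda \text{ in } [E,E+\epsilon]\}$. Since $V$ is bounded and $E \leq E_0$, every eigenfunction $\psi$ with eigenvalue in this window satisfies $|\Delta \psi| \leq \widetilde{C} |\psi|$ on $\Lambda$ with $\widetilde{C}$ depending only on $\|V\|_\infty$ and $E_0$, which is precisely the hypothesis appearing in Theorem~\ref{thm:ucp_Bourgain_Kenig}.

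The central step is to promote the quantitative UCP to individual Dirichlet eigenfunctions on $\Lambda$. I would multiply such a $\psi$ by a smooth cutoff $\chi$ equal to $1$ on the inner cube $\Lambda_{L-2}(x)$ and vanishing outside $\Lambda$, obtaining an extension $\tilde\psi$ on $\RR^d$ satisfying $\Delta \tilde\psi = V \tilde\psi + \gamma$ with an error $\gamma$ supported in a thin boundary shell. Normalising $\psi$ at its maximum and applying Theorem~\ref{thm:ucp_Bourgain_Kenig} with $x_0$ ranging over the interior of $\Lambda$ would then yield a lower bound
\[
 \lVert \psi \rVert_{L^2(W)}^2 \geq \delta_L \, \lVert \psi \rVert_{L^2(\Lambda)}^2
\]
for every unit cube $W$ well inside $\Lambda$, with $\delta_L \geq \exp(-c L^{4/3} \log L)$ for constants depending only on $\|V\|_\infty$, $E_0$ and $d$. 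Summing the identity $\sum_j \lVert \psi_j \rVert_{L^2(W)}^2 = \operatorname{tr}(\chi_W P_{[E,E+\epsilon]}(H_\Lambda))$ over an orthonormal eigenbasis of the spectral subspace and bounding $\operatorname{tr}(\chi_W P_{[E,E+\epsilon]}(H_\Lambda)) \leq C \epsilon$ by a Combes--Thomas type argument leads to
\[
 N \, \delta_L \leq C \, \epsilon \, \lvert \Lambda \rvert.
\]
The logarithmic rate in the theorem then emerges by balancing $L$ against $\epsilon$: choosing $L \sim (\log(1/\epsilon))^{3/4}$ (up to logarithmic corrections) makes $\delta_L$ polynomial in $\epsilon$, and reinserting this into the inequality gives $N/\lvert \Lambda \rvert \leq C_1 (\log(1/\epsilon))^{-C_2}$ with $C_2$ depending on $d$. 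The restriction $d \in \{1,2,3\}$ enters because the Bourgain--Kenig Carleman estimate with exponent $4/3$ underlying \eqref{eq:ucp_Bourgain_Kenig_1} is currently known only in those dimensions.

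The hard part will be the cutoff step: the error $\gamma$ produced by multiplying $\psi$ with $\chi$ is a priori of the same order as $\lVert \psi \rVert_{L^\infty}$ times $\lVert \nabla \chi \rVert_\infty$, and is not manifestly small compared to the $\max_{\lvert x - x_0 \rvert \leq 1}\lvert u \rvert$ term on the left-hand side of \eqref{eq:ucp_Bourgain_Kenig_1}. Making the argument honest requires exploiting that Dirichlet eigenfunctions vanish on $\partial \Lambda$ (so $\psi$ and its gradient are small in a boundary layer) together with a refined Carleman estimate tailored to bounded domains, and this is exactly the core technical contribution of \cite{BourgainK-13}. By contrast, given such a refined UCP, the trace-class counting step and the final optimisation of $L$ in $\epsilon$ are comparatively routine.
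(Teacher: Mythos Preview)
The paper does not give its own proof of this theorem; it is quoted from \cite{BourgainK-13} and only the surrounding discussion (the role of the exponent $4/3$ and the relation $d < \beta/(\beta-1)$) is provided. So there is nothing in the paper to compare your argument against, and your sketch has to be judged on its own.

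There is a genuine gap, and it sits precisely in the step you call ``comparatively routine''. You assert that for a unit cube $W$ well inside $\Lambda$,
\[
\operatorname{tr}\bigl(\chi_W\, P_{[E,E+\epsilon]}(H_\Lambda)\bigr)\ \leq\ C\,\epsilon
\]
by a ``Combes--Thomas type argument''. This cannot hold for an arbitrary bounded deterministic potential $V$: it is exactly a local Wegner estimate, and if it were true the density of states outer-measure would already be Lipschitz in $\epsilon$, which is strictly stronger than the log-H\"older bound you are trying to prove. Combes--Thomas gives exponential off-diagonal decay of the resolvent kernel in a spectral gap; it does not give linear-in-$\epsilon$ trace bounds on spectral projections for general $V$. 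Indeed, if one feeds your inequality $N\,\delta_L \leq C\,\epsilon\,\lvert \Lambda\rvert$ back into your optimisation, one obtains $\eta^\ast([E,E+\epsilon]) \leq C\,\epsilon^{\theta}$ for some $\theta>0$, not merely a logarithmic rate---another sign that the input was too strong.

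The actual mechanism in \cite{BourgainK-13} is different and is what produces both the logarithm and the dimension restriction. One compares eigenvalue counts on nested boxes $\Lambda_\ell \subset \Lambda_L$: the finite-volume quantitative UCP forces any eigenfunction on $\Lambda_L$ with eigenvalue in the window either to be well approximated by one on $\Lambda_\ell$ or to carry a definite fraction of its mass in the boundary layer $\Lambda_L\setminus\Lambda_\ell$. This bounds the \emph{excess} eigenvalue count by a surface term of order $L^{d-1}$, at the price of the UCP constant $\exp(c\,L^{4/3}\log L)$ appearing in the comparison of spectral windows. Iterating over scales and balancing the surface-to-volume ratio $L^{-1}$ against this constant is what yields $(\log 1/\epsilon)^{-C_2}$ and forces $d-1 < 1/(4/3-1)$, i.e.\ $d\leq 3$. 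Note in particular that, contrary to your last paragraph, the Carleman estimate with exponent $4/3$ is not restricted to $d\leq 3$; it is the counting step that imposes the constraint, exactly as the paper's discussion after the theorem indicates.
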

If  $d = 1$, one can choose $C_2=1$ and $\epsilon \in(0,1/2)$ cf.~Theorem 5.1 in ~\cite{CraigS-83b}.
The restriction to dimension $d \leq 3$ is due to the exponent $4/3$ in (an analogue of) ineq.~\eqref{eq:ucp_Bourgain_Kenig_1} which originates from the particular Carleman inequality they use.
In fact, if this exponent was to be replaced by $\beta > 1$, then Theorem~\ref{thm:BourgainKlein} would hold for all dimensions $d < \beta/(\beta - 1)$, whence it is desirable to reduce the exponent $4/3$ in ineq.~\eqref{eq:ucp_Bourgain_Kenig_1} to $1$.
However there is a classic example \cite{Meshkov-92} which shows that this will not be feasible using Carleman estimates,
whence new approaches to unique continuation will be required in order to lift the proof of Theorem~\ref{thm:BourgainKlein} to higher dimensions.
\par
We will now study \emph{scale-free} unique continuation, i.e. we will study quantitative unique continuation results which hold uniformly over a large number of scales and geometric settings.
For that purpose, we introduce the following definition:
\begin{definition}
 Let $0 < \delta < 1/2$.
 We say that a sequence $Z = \{z_j\}_{j \in \ZZ^d}$ is $\delta$-\emph{equidistributed}, if for every $j \in \ZZ^d$ we have $B(z_j,\delta) \subset j + \Lambda_1$.
 Corresponding to a $\delta$-equidistributed sequence and $L > 0$, we define
 \[
  S_\delta(L) := \bigcup_{j \in \ZZ^d} B(z_j,\delta) \cap \Lambda_L.
 \]
\end{definition}
The simplest example of a $\delta$-equidistributed set would be $\ZZ^d$ itself.

\begin{figure}
 \caption{Examples of $S_\delta(5)$ for different $\delta$-equidistributed arrangements}
  \begin{tikzpicture}[scale = 0.8]
  \begin{scope}[xshift = 0cm]
  \draw[black!50] (-.25, -.25) grid (5.25, 5.25);
  \foreach \x/\y in {0.5/0.5,1.2/0.2,2.6/0.6,3.4/0.3,4.7/0.8,
                                                0.2/1.2,1.3/1.6,2.4/1.5,3.7/1.7,4.5/1.3,
                                                0.3/2.4,1.2/2.3,2.6/2.7,3.7/2.5,4.5/2.4,
                                                0.6/3.3,1.5/3.7,2.3/3.3,3.4/3.5,4.7/3.7,
                                                0.4/4.7,1.5/4.5,2.3/4.6,3.7/4.6,4.5/4.3
   }{
   \filldraw[black!75] (\x,\y) circle (5pt);
   \draw (\x,\y) circle (5pt);}
  \end{scope}
  \begin{scope}[xshift = 7cm]
   \draw[black!50] (-.25, -.25) grid (5.25, 5.25);
    \foreach \x in {0,...,4}{
     \foreach \y in {0,...,4}{
      \filldraw[black!75] (\x+0.5, \y+0.5) circle (5pt);
    }}
 \end{scope}
 \end{tikzpicture}
\end{figure}
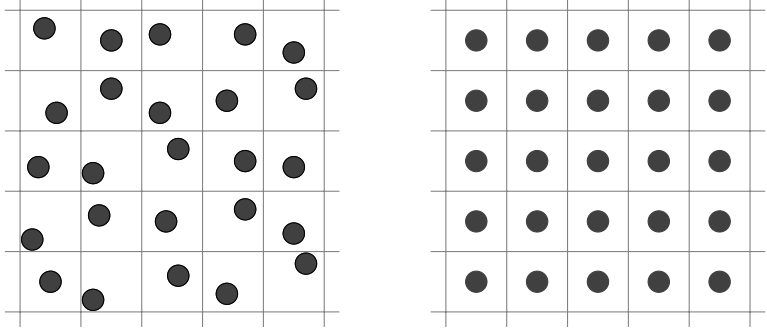

\begin{theorem}[Quantitative UCP for eigenfunctions, \cite{Rojas-MolinaV-13}]
 \label{thm:ucp_RojasMolinaVeselic}
 Fix $K_V \in [0, \infty)$ and $\delta \in (0,1/2)$.
 Then there is a constant $C > 0$ such that for all $L \in \NN_{\mathrm{odd}} = \{ 1, 3, \dots \}$, all measurable $V: \Lambda_L \to [- K_V, K_V]$ and all real-valued $\psi$ in the domain of the Laplace operator on $\Lambda_L$ with Dirichlet or periodic boundary condition satisfying
 \[
  \lvert \Delta \psi \rvert
  \leq
  \lvert V \psi \rvert
 \]
 we have
 \[
   \lVert \psi \rVert_{L^2(S_\delta(L))}^2
   \geq
   \left( \frac{\delta}{C} \right)^{C + C K_V^{2/3}}
   \lVert \psi \rVert_{L^2(\Lambda_L)}^2
   .
 \]
\end{theorem}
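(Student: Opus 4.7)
The plan is to reduce this scale-free statement on $\Lambda_L$ to a purely local quantitative unique continuation estimate on a unit cube, whose constant depends on $K_V$ only through the exponent $K_V^{2/3}$, and then to sum the local estimates over a tiling of $\Lambda_L$.

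The heart of the argument is the following local claim: there is a constant $C = C(d)$ such that, for every $j \in \ZZ^d$, every $\delta$-equidistributed point $z_j \in j + \Lambda_1$, and every function $\psi$ defined on a neighbourhood of $j + \Lambda_1$ and satisfying $\lvert \Delta \psi \rvert \leq K_V \lvert \psi \rvert$,
\begin{equation*}
\lVert \psi \rVert_{L^2(B(z_j, \delta))}^2 \geq \left( \frac{\delta}{C} \right)^{C + C K_V^{2/3}} \lVert \psi \rVert_{L^2(j + \Lambda_1)}^2.
\end{equation*}
I would prove this via a Carleman inequality with a radial weight $\euler^{\tau \phi}$ centred at $z_j$, where $\phi$ is a strictly log-convex function of $\lvert x - z_j \rvert$. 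Such a weight yields a three-sphere inequality of the form $\lVert \psi \rVert_{L^2(B_{r_2})}^{1 + \alpha} \leq C \lVert \psi \rVert_{L^2(B_{r_1})}^{\alpha} \lVert \psi \rVert_{L^2(B_{r_3})}$ for concentric balls $B_{r_1} \subset B_{r_2} \subset B_{r_3}$, with $\alpha \in (0,1)$ depending only on the ratios of the radii. Optimizing the Carleman parameter $\tau$ against the bound $\lvert \Delta \psi \rvert \leq K_V \lvert \psi \rvert$ produces the exponent $K_V^{2/3}$, by the same mechanism that underlies the $4/3$ exponent in Theorem~\ref{thm:ucp_Bourgain_Kenig}. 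Chaining this inequality through a finite family of overlapping balls that propagates the control from $B(z_j, \delta)$ throughout the cube $j + \Lambda_1$ then yields the displayed estimate.

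To pass from the local claim to the global estimate, I would use that for odd $L$ the cube $\Lambda_L$ is the essentially disjoint union of the $L^d$ unit cubes $j + \Lambda_1$, $j \in \ZZ^d \cap \Lambda_L$, and that the balls $B(z_j, \delta)$ lie in pairwise disjoint unit cubes. For cubes meeting $\partial \Lambda_L$ I would first extend $\psi$ across the boundary: by odd reflection in the Dirichlet case and by periodicity in the periodic case. The extended function satisfies the same differential inequality with a potential of the same $L^\infty$-bound, so the local claim still applies. Summing over $j$ yields
\begin{equation*}
\lVert \psi \rVert_{L^2(S_\delta(L))}^2 = \sum_{j} \lVert \psi \rVert_{L^2(B(z_j, \delta))}^2 \geq \left( \frac{\delta}{C} \right)^{C + C K_V^{2/3}} \sum_{j} \lVert \psi \rVert_{L^2(j + \Lambda_1)}^2 = \left( \frac{\delta}{C} \right)^{C + C K_V^{2/3}} \lVert \psi \rVert_{L^2(\Lambda_L)}^2.
\end{equation*}
The main obstacle is the local quantitative UCP: one must design the Carleman weight precisely enough that the three-sphere inequality has constants independent of $L$ and of the position of the cube in $\Lambda_L$, and so that the Carleman parameter can be optimized to yield exactly the exponent $K_V^{2/3}$ and not something worse. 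The boundary extension and the summation over the tiling are routine once the local estimate is established.
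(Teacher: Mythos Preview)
Your overall architecture---Carleman estimate yielding a three-ball inequality, reflection or periodic extension to handle $\partial\Lambda_L$, then summation over the tiling by unit cubes---does match the route taken in \cite{Rojas-MolinaV-13} (the present paper only quotes the result). But the ``local claim'' you isolate as the heart of the argument is false as stated, and this is not a matter of tuning the Carleman weight.

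Take $V\equiv 0$, $d=2$, and for $n\in\NN$ the harmonic polynomial $\psi_n(x)=\operatorname{Re}\bigl((x_1-z_{j,1})+i(x_2-z_{j,2})\bigr)^n$ on a neighbourhood of $j+\Lambda_1$. Then $\lVert\psi_n\rVert_{L^2(B(z_j,\delta))}^2$ is of order $\delta^{2n+2}$, while $\lVert\psi_n\rVert_{L^2(j+\Lambda_1)}^2$ stays bounded below. Letting $n\to\infty$ shows that no fixed $C$ can make your displayed inequality hold for all admissible $\psi$: nothing in the hypotheses on a single cube controls the vanishing order of $\psi$ at $z_j$. The three-ball inequality you quote only delivers
\[
   \lVert\psi\rVert_{L^2(B(z_j,\delta))}^2
   \ \geq\
   \left(\frac{\delta}{C}\right)^{C+CK_V^{2/3}}
   \frac{\lVert\psi\rVert_{L^2(j+\Lambda_1)}^{2(1+\gamma)}}{\lVert\psi\rVert_{L^2(j+\Lambda_R)}^{2\gamma}}
\]
for some larger scale $R$ and some $\gamma>0$; the mass on a \emph{bigger} region enters unavoidably.

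The missing idea, supplied in \cite{Rojas-MolinaV-13}, is a dominating-box (pigeonhole) argument: for each site $j$ one finds a scale $R_j$, bounded purely in terms of $d$, at which $\lVert\psi\rVert_{L^2(j+\Lambda_{2R_j+1})}^2$ is controlled by $\lVert\psi\rVert_{L^2(j+\Lambda_{2R_j-1})}^2$, and one shows that the cubes at these selected scales still cover $\Lambda_L$ with bounded multiplicity and carry the full mass. Only for such dominating boxes does the three-ball inequality collapse to an estimate with the outer norm absorbed, and only then does your final summation go through. Without this selection step the ``routine'' tiling sum has no valid local input.
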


Theorem~\ref{thm:ucp_RojasMolinaVeselic} is called a \emph{scale-free} unique continuation principle because the constant on the right hand side does not depend on the scale $L$.
It has been used to study the spectrum of random Schr\"odinger operators, more precisely the \emph{Delone-Anderson model}
\begin{equation}
 \label{eq:Definition_Delone-Anderson}
 H_\omega
 =
 -\Delta + V_\omega,
 \quad
 V_\omega(x)
 =
 \sum_{y \in \mathcal{D}} \omega_y u(x - y)
\end{equation}
where $u$ is a compactly supported, positive and bounded function, the $\omega_y$ are independent and identically distributed, 
bounded random variables with a bounded density and $\mathcal{D} \subset \RR^d$ is a \emph{Delone set}.
The latter means that there are $0 < L_1 < L_2$ such that for all $x \in \RR^d$ we have $\sharp \{ y \in \mathcal{D} \cup \Lambda_{L_1} \} \leq 1$ and $\sharp \{ y \in \mathcal{D} \cup \Lambda_{L_2} \} \geq 1$.
Every $\delta$-equidistributed set is a Delone set and every Delone set can (after scaling) be decomposed into a $\delta$-equidistributed set and some remaining set, see e.g.~\cite{Rojas-MolinaV-13}.
In~\cite{Rojas-MolinaV-13}, Theorem~\ref{thm:ucp_RojasMolinaVeselic} was used to prove the following Wegner estimate:
\begin{theorem}
 \label{thm:Wegner_Delone_Anderson}
 Let $\{ H_\omega \}_{\omega \in \Omega}$ be a Delone-Anderson Hamiltonian as in~\eqref{eq:Definition_Delone-Anderson}.
 For every $E_0$ there is a constant $C_W$ such that for all $E \leq E_0$, all $\epsilon \leq 1/3$, all $L \in \NN_{odd}$ we have
 \begin{equation}
 \label{eq:Wegner_Delone_Anderson}
  \EE \left[ \sharp \{ \text{Eigenvalues of}\ H_{\omega,\Lambda} \ \text{in}\ [E - \epsilon, E + \epsilon] \} \right]
  \leq
  C_W
  \cdot
  \epsilon
  \cdot
  \lvert \ln \epsilon \rvert^d
  \cdot
  \lvert \Lambda_L \rvert.
 \end{equation}
\end{theorem}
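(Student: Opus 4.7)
The plan is to derive the Wegner estimate from the scale-free unique continuation principle (Theorem~\ref{thm:ucp_RojasMolinaVeselic}) combined with single-site spectral averaging of Stollmann type. The conceptual picture is that Theorem~\ref{thm:ucp_RojasMolinaVeselic} forces a uniform fraction of the $L^2$-mass of every low-energy eigenfunction of $H_{\omega,\Lambda_L}$ to live inside $S_\delta(L)$, which translates into a uniform lower bound for the ``Wegner operator'' $W_\Lambda := \sum_{y \in \mathcal{D} \cap \Lambda_L} u(\cdot - y)$ restricted to the spectral subspace below $E_0$. Averaging this bound against each individual coupling constant $\omega_y$ then produces the claimed estimate.

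More concretely, one first observes that $\mathcal{D}$ contains, after uniform rescaling, a $\delta$-equidistributed subset with $\delta > 0$ depending only on $L_1, L_2$, and since $u$ is positive on a small ball one has $W_\Lambda(x) \geq u_- \chi_{S_\delta(L)}(x)$ for some $u_- > 0$. Next, any normalized eigenfunction $\psi$ of $H_{\omega,\Lambda_L}$ with eigenvalue $\leq E_0$ satisfies $|\Delta \psi| \leq K_V |\psi|$ with $K_V$ uniformly bounded by $E_0 + \lVert u \rVert_\infty \sup |\omega_y|$, so Theorem~\ref{thm:ucp_RojasMolinaVeselic} supplies a constant $\kappa = \kappa(E_0,\delta,u_-) > 0$, independent of $L$ and $\omega$, with
\[
  \chi_{(-\infty,E_0]}(H_{\omega,\Lambda_L})\, W_\Lambda\, \chi_{(-\infty,E_0]}(H_{\omega,\Lambda_L}) \;\geq\; \kappa\, \chi_{(-\infty,E_0]}(H_{\omega,\Lambda_L}).
\]
Testing this operator inequality against $\chi_I(H_{\omega,\Lambda_L})$ with $I = [E-\epsilon,E+\epsilon]$ and taking expectations yields
\[
  \EE\bigl[\mathrm{tr}\,\chi_I(H_{\omega,\Lambda_L})\bigr] \;\leq\; \kappa^{-1} \sum_{y \in \mathcal{D} \cap \Lambda_L} \EE\bigl[\mathrm{tr}(u(\cdot-y)\, \chi_I(H_{\omega,\Lambda_L}))\bigr];
\]
a Stollmann-type spectral averaging in the single variable $\omega_y$ bounds each summand by a constant multiple of $\epsilon$, and the Delone condition gives $\sharp(\mathcal{D} \cap \Lambda_L) \leq C|\Lambda_L|$, which then yields the desired bound.

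The main technical obstacle lies in the spectral averaging step: in a continuum model the naive Stollmann lemma does not directly deliver a trace-class estimate, and one has to combine single-site averaging with a geometric resolvent identity at a carefully chosen intermediate length scale. This is precisely where the extra logarithmic factor $|\ln \epsilon|^d$ enters. All other steps are essentially bookkeeping, and the uniformity in $L$ follows from the scale-freeness built into Theorem~\ref{thm:ucp_RojasMolinaVeselic}.
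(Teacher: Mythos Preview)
Your proposal contains a genuine gap in the second step. The operator inequality
\[
  \chi_{(-\infty,E_0]}(H_{\omega,\Lambda_L})\, W_\Lambda\, \chi_{(-\infty,E_0]}(H_{\omega,\Lambda_L}) \;\geq\; \kappa\, \chi_{(-\infty,E_0]}(H_{\omega,\Lambda_L})
\]
does \emph{not} follow from Theorem~\ref{thm:ucp_RojasMolinaVeselic}. That theorem gives the lower bound $\lVert\psi\rVert_{L^2(S_\delta(L))}^2 \geq \kappa \lVert\psi\rVert^2$ only for individual solutions of $|\Delta\psi| \le |V\psi|$, in particular for individual eigenfunctions, not for arbitrary vectors in the range of the spectral projection. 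From $\langle \psi_j, W_\Lambda \psi_j\rangle \ge \kappa$ for each eigenfunction $\psi_j$ one cannot conclude the operator inequality above: cross terms in a linear combination $\sum a_j \psi_j$ can produce cancellation on $S_\delta(L)$. The inequality you wrote down is precisely the content of Theorem~\ref{thm:ucp_NTTV}; the paper explicitly records that this was left as an open question in \cite{Rojas-MolinaV-13} and was settled only later in \cite{Klein-13,NakicTTV-15,NakicTTV}. So the step you describe as ``essentially bookkeeping'' is in fact the missing ingredient.

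Consequently you have also misidentified the origin of the factor $|\ln\epsilon|^d$. It is \emph{not} produced by the Stollmann spectral averaging; if your operator inequality were available, the averaging argument would give the bound $C\epsilon|\Lambda_L|$ with no logarithm at all --- which is exactly what \cite{Klein-13} obtained once a small-interval version of the spectral-projection inequality became available (the paper says this ``allowed to drop the $\ln\epsilon$ term''). In the actual proof from \cite{Rojas-MolinaV-13} the logarithmic loss arises from the workaround needed \emph{because} only the eigenfunction version of the UCP is at hand: one has to pass through an auxiliary decomposition in energy and space in order to reduce to situations where Theorem~\ref{thm:ucp_RojasMolinaVeselic} can be applied eigenfunction by eigenfunction, and the combinatorics of that reduction is what costs $|\ln\epsilon|^d$.
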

Wegner estimates serve as an induction ancor in the multi-scale analysis, an inductive process which establishes localization, i.e. the almost sure occurrence of pure point spectrum with exponentially decaying eigenfunctions for $H_\omega$, at low energies.
Note that the right hand side in Ineq.~\eqref{eq:Wegner_Delone_Anderson} is $o(\epsilon^\theta)$ as $\epsilon \to 0$ for every $\theta \in (0,1)$.
Therefore, if the integrated density of states of $H_\omega$ exists, it will be (locally) H\"older continuous with respect to any exponent $\theta \in (0,1)$.
Since, however, the Delone-Anderson model is not necessarily ergodic, existence of its IDS is a delicate issue, see~\cite{GerminetMRM-15}.
\par
In \cite{Rojas-MolinaV-13}, the question had been raised if a similar statement as in Theorem~\ref{thm:Wegner_Delone_Anderson} holds uniformly all for finite linear combination of eigenfunctions with eigenvalues below a threshold $E_0$.
Such results had been known before, cf.~\cite{CombesHK-03}, albeit only in the special case where both the potential $V$ and the Delone set $D$ were $\ZZ^d$-periodic and without the explicit dependence on $\delta$ and $K_V$.
They had led to Lipshitz continuity of IDS in the usual alloy-type or continuum Anderson model, cf. \cite{CombesHK-07}.
However, the proof of these unique continuation principles had relied on Floquet theory which only allowed for the periodic setting and a compactness argument which yielded no information on the influence of the parameters $\delta$ and $K_V$.
A partially positive answer to the question raised in \cite{Rojas-MolinaV-13} was given in \cite{Klein-13} where Theorem~\ref{thm:ucp_RojasMolinaVeselic} was generalized to linear combinations of eigenfunctions with eigenvalues in a small energy interval.
This allowed to drop the $\ln \epsilon$ term in \eqref{eq:Wegner_Delone_Anderson}.
A full answer to the question raised in \cite{Rojas-MolinaV-13} was given by the following Theorem.

\begin{theorem}[\cite{NakicTTV-15,NakicTTV}]
 \label{thm:ucp_NTTV}
 There is $N = N(d)$ such that for all $\delta \in (0,1/2)$, all $\delta$-equidistributed sequences, all measurable and bounded $V: \RR^d \to \RR$, all $L \in \NN$, all $E_0 \geq 0$ and all $\phi \in \mathrm{Ran} (\chi_{(-\infty,E_0]}(H_L))$ we have
 \begin{equation}
 \label{eq:result1}
 \lVert \phi \rVert_{L^2 (S_\delta (L))}^2
 \geq \delta^{N \bigl(1 + \lVert V \rVert_\infty^{2/3} + \sqrt{E_0} \bigr)}
 \lVert \phi \rVert_{L^2 (\Lambda_L)}^2.
 \end{equation}
\end{theorem}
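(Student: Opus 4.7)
The plan is to combine a Carleman-type inequality with explicit control on the weight exponent together with a \emph{ghost-dimension} reduction that turns spectral projections of $H_L$ at energies $\leq E_0$ into a single eigenfunction, at energy $E_0$, of a Schr\"odinger operator in one higher dimension. This second idea is what distinguishes Theorem~\ref{thm:ucp_NTTV} from Theorem~\ref{thm:ucp_RojasMolinaVeselic}: it allows one to reuse any quantitative UCP for eigenfunctions.

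First I would set up a Carleman estimate of the schematic form
\[
 \alpha^{3} \lVert w^{-\alpha-1/2} u \rVert_{L^2}^2 + \alpha \lVert w^{-\alpha+1/2} \nabla u \rVert_{L^2}^2 \leq C \lVert w^{-\alpha} \Delta u \rVert_{L^2}^2
\]
on a ball in $\RR^{d+1}$, uniformly in all sufficiently large $\alpha$, for a suitable radial weight $w$ (a polynomial in $|x-z|$ with controlled convexity). The crucial point is the power $\alpha^{3}$, which is eventually responsible for the $2/3$ exponent on $\lVert V \rVert_\infty$. Inserting into this estimate a smooth cutoff $\chi$ that vanishes both near the centre and near the outer sphere, applied to a solution of $|\Delta u| \leq |Vu|$, and absorbing the $Vu$ term via the $\alpha^3$ factor for $\alpha$ chosen of order $1 + \lVert V \rVert_\infty^{2/3} + \sqrt{E_0}$, produces a three-annulus interpolation inequality
\[
 \lVert u \rVert_{L^2(\text{middle shell})} \leq \left(\tfrac{\text{small radius}}{\text{large radius}}\right)^{N(1+\lVert V \rVert_\infty^{2/3} + \sqrt{E_0})} \lVert u \rVert_{L^2(\text{small ball})} + \text{boundary error},
\]
with $N$ depending only on the dimension.

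Second comes the ghost-dimension construction. Writing $\phi = \sum_k c_k \psi_k$ in an orthonormal eigenbasis of $H_L$ with $E_k \leq E_0$, define
\[
 U(x,t) := \sum_{k} c_k\, \psi_k(x)\, \cosh\!\bigl(\sqrt{E_0 - E_k}\;t\bigr), \qquad (x,t) \in \Lambda_L \times \RR.
\]
A direct computation gives $-(\Delta_x + \partial_t^2) U + V(x) U = E_0 U$, so $U$ is a genuine eigenfunction at a single energy $E_0$ of a $(d{+}1)$-dimensional Schr\"odinger operator whose potential has the same $L^\infty$-norm as $V$. Moreover $U(\cdot,0) = \phi$, and the orthonormality of $\{\psi_k\}$ together with the explicit $\cosh$ factor allows one to compare $L^2$-norms of $U$ on slabs $\Lambda_L \times [-T,T]$ or $S_\delta(L) \times [-T,T]$ with the corresponding $L^2$-norms of $\phi$, losing only a multiplicative factor $\cosh^2(\sqrt{E_0}\,T)$ which for $T$ of order $1$ is harmless.

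Third, I would iterate the three-annulus inequality from step one, applied to $U$, over a chain of balls centred at lifted points $(z_j,0)$ in $\RR^{d+1}$ covering $\Lambda_L \times [-T,T]$. Each step loses a factor $\delta^{N(1 + \lVert V \rVert_\infty^{2/3} + \sqrt{E_0})}$, and geometric chaining ensures that the number of steps depends only on $d$, not on $L$, which is where the scale-freeness comes from. Translating the resulting bound for $U$ back to $\phi$ via the slab norm comparison yields \eqref{eq:result1}. To handle the fact that $H_L$ carries Dirichlet or periodic boundary conditions on $\Lambda_L$ (rather than living on all of $\RR^d$), one reflects $\phi$ evenly or periodically across the boundary, so that the lifted function $U$ solves the same differential inequality on a larger region where the Carleman estimate is applicable.

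The step I expect to be hardest is producing the Carleman estimate with the sharp $\alpha^{3}$ dependence on a weight/geometry compatible with the iteration; getting a weaker power $\alpha^{\beta}$ would only yield an exponent $\lVert V \rVert_\infty^{1/\beta}$ and break the match with known results. A secondary difficulty is the bookkeeping in the iteration: one must verify that chaining a polynomial-in-$L$ number of interpolation inequalities produces a \emph{multiplicative}, not additive, accumulation of $\delta$-power exponents, so that the final exponent depends only on $d$, $\lVert V \rVert_\infty$, and $E_0$. The ghost-dimension step itself is clean once the Carleman machinery is in place.
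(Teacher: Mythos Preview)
The paper itself does not prove Theorem~\ref{thm:ucp_NTTV}; it is quoted as a result from \cite{NakicTTV-15,NakicTTV} and used as a black box, so there is no ``paper's own proof'' to compare against. That said, your sketch is an accurate outline of the strategy in the cited references: the ghost-dimension lift via $U(x,t)=\sum_k c_k\psi_k(x)\cosh(\sqrt{E_0-E_k}\,t)$ is precisely the mechanism that reduces the spectral-projection statement to a single-eigenfunction UCP in $d+1$ dimensions, and the Carleman/three-annulus iteration with $\alpha\sim 1+\lVert V\rVert_\infty^{2/3}+\sqrt{E_0}$ is the correct engine. Your identification of the two delicate points (the sharp $\alpha^3$ power in the Carleman weight and the scale-free bookkeeping in the chaining) matches where the actual work lies in \cite{NakicTTV}.
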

Theorem~\ref{thm:ucp_NTTV} was a missing ingredient for treating new models of random Schr\"odinger operators such as the \emph{standard breather model}:
Let $\{ \omega_j \}_{j \in \ZZ^d}$ be i.i.d. random variables on a probability space $(\Omega, \PP)$ which are distributed according to the uniform distribution on the interval $[0, 1/2]$ and define the standard breather potential
\[
 V_\omega (x)
 :=
 \sum_{j \in \ZZ^d} \chi_{B(j, \omega_j)} (x)
\]
where $\chi_{B(x, r)}$ denotes the characteristic function of a ball of radius $r$, centered at $x$.
Then, the standard breather model is the family of operators $- \Delta + V_\omega$, $\omega \in \Omega$ on $L^2(\RR^d)$ and can be seen as a prototype for a random Schr\"odinger operator where the random variables enter in a non-linear manner.
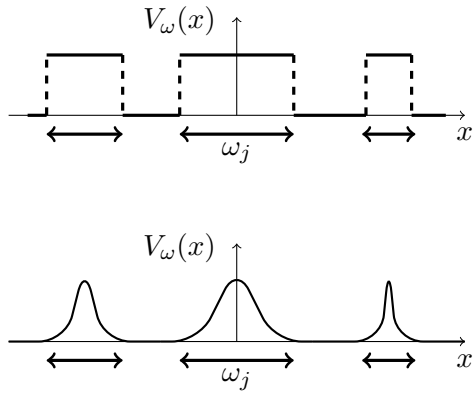
\begin{figure}
 \caption{Realizations of the standard breather potential and of a general random breather potential}
 \begin{tikzpicture}
  \begin{scope}
  \draw[->] (-3,0) -- (3,0);
  \draw[->] (0,0) -- (0,1.3);
  \draw (-.75,1.25) node {$V_\omega(x)$};
  \draw (3,-.25) node {$x$};
  \draw[very thick] (-2.75,0) -- (-2.5,0);

  \draw[very thick, dashed] (-2.5,0) -- (-2.5,.8);
  \draw[very thick] (-2.5,.8) -- (-1.5,.8);
  \draw[very thick, dashed] (-1.5,.8) -- (-1.5,0);

  \draw[very thick] (-1.5,0) -- (-.75,0);

  \draw[very thick, dashed] (-.75,0) -- (-.75,.8);
  \draw[very thick] (-.75,.8) -- (.75,.8);
  \draw[very thick, dashed] (.75,.8) -- (.75,0);

  \draw[very thick] (.75,0) -- (1.7,0);

  \draw[very thick, dashed] (1.7,0) -- (1.7,.8);
  \draw[very thick] (1.7,.8) -- (2.3,.8);
  \draw[very thick, dashed] (2.3,.8) -- (2.3,0);

  \draw[very thick] (2.3,0) -- (2.75,0);

  \draw[<->, very thick] (-.75,-.25) -- (.75,-.25);
  \draw (0,-.5) node {$\omega_j$};
  \draw[<->, very thick] (-2.5,-.25) -- (-1.5,-.25);
  \draw[<->, very thick] (1.65,-.25) -- (2.35,-.25);
  \end{scope}

  \begin{scope}[yshift = - 3 cm]
  \draw[->] (-3,0) -- (3,0);
  \draw[->] (0,0) -- (0,1.3);
  \draw (-.75,1.25) node {$V_\omega(x)$};
  \draw (3,-.25) node {$x$};
  \draw[rounded corners = 10 pt, thick] (-1,0) -- (-.5,0) -- (0,1) -- (.5,0) -- (1,0);
  \draw[rounded corners = 10 pt, thick] (-3,0) -- (-2.25,0) -- (-2,1) -- (-1.75,0) -- (-1,0);
  \draw[rounded corners = 10 pt, thick] (1,0) -- (1.9,0) -- (2,1) -- (2.1,0) -- (3,0);
  \draw[<->, very thick] (-.75,-.25) -- (.75,-.25);
  \draw (0,-.5) node {$\omega_j$};
  \draw[<->, very thick] (-2.5,-.25) -- (-1.5,-.25);
  \draw[<->, very thick] (1.65,-.25) -- (2.35,-.25);

  \end{scope}

 \end{tikzpicture}
\end{figure}
\begin{theorem}[Wegner estimate for the standard breather model, \cite{NakicTTV-15, TaeuferV-15}]\label{thm:Wegner_Breather}
For every $E_0 \in \RR$ there are $C > 0$, $0 < \theta < 1$ such that for every $E < E_0$, every $L \in \NN$ and every small enough $\epsilon > 0$ we have
 \begin{equation}
 \label{eq:Wegner_Breather}
  \EE \left[ \sharp \{ \text{Eigenvalues of}\ H_{\omega,\Lambda_L} \ \text{in}\ [E - \epsilon, E + \epsilon] \} \right]
  \leq
  C
  \epsilon^\theta
  L^d.
 \end{equation}
\end{theorem}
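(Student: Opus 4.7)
The strategy is a non-linear variant of the Combes--Hislop--Klopp spectral averaging method, with the scale-free unique continuation principle of Theorem~\ref{thm:ucp_NTTV} playing the role that the linear Feynman--Hellmann formula plays in the alloy-type setting. First, one bounds
\[
\#\{ \text{eigenvalues of } H_{\omega,\Lambda_L} \text{ in } [E-\epsilon,E+\epsilon] \} \le \operatorname{tr}\bigl(\rho(H_{\omega,\Lambda_L})\bigr)
\]
for some smooth $\rho\ge\chi_{[E-\epsilon,E+\epsilon]}$ supported near $E$, and exploits the pointwise monotonicity of $V_\omega$ in each coordinate $\omega_j$ to rewrite the right-hand side in terms of the spectral shift produced by simultaneously increasing all the $\omega_j$ by a small parameter $t>0$.

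The algebraic key is the identity
\[
V_{\omega+t\mathbf{1}}(x)-V_\omega(x)=\sum_{j\in\ZZ^d}\chi_{A_{j,t}(\omega)}(x),
\]
where $A_{j,t}(\omega)=B(j,\omega_j+t)\setminus B(j,\omega_j)$ is a spherical shell of thickness $t$ contained in the unit cube $j+\Lambda_1$ (for $t$ small and $\omega_j\le 1/2$). In each such shell one chooses a point $z_j$ with $B(z_j,t/2)\subset A_{j,t}(\omega)$; the resulting sequence $\{z_j\}$ is $(t/2)$-equidistributed in the sense of the definition preceding Theorem~\ref{thm:ucp_NTTV}. Applied to any $\phi$ in the range of $\chi_{(-\infty,E_0]}(H_{\omega,\Lambda_L})$, that theorem yields
\[
\bigl\langle\phi,(V_{\omega+t\mathbf{1}}-V_\omega)\phi\bigr\rangle
\ge\sum_{j}\|\phi\|_{L^2(B(z_j,t/2))}^2
\ge (t/2)^{M}\|\phi\|_{L^2(\Lambda_L)}^2,
\]
with $M=M(d,E_0)$ independent of $\omega$, since $\|V_\omega\|_\infty\le 1$. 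Equivalently, on the low-energy spectral subspace one has the operator bound $V_{\omega+t\mathbf{1}}-V_\omega\ge c\,t^{M}$, so by the min-max principle every eigenvalue $E_n(\omega)\le E_0$ is pushed up by at least $c\,t^{M}$ under the collective shift $\omega\mapsto\omega+t\mathbf{1}$.

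Given this uniform lower bound on the eigenvalue speed, a spectral averaging argument in the spirit of Combes--Hislop--Klopp converts the trace estimate into an estimate on the expectation. Integrating over the $\omega_j$, which are uniform on $[0,1/2]$, and using the monotonicity of each $E_n(\cdot)$ in every coordinate, one picks $t$ so that $c\,t^{M}=2\epsilon$, i.e.\ $t\sim\epsilon^{1/M}$; for each fixed $n$, the set of $\omega$ for which $E_n(\omega)\in[E-\epsilon,E+\epsilon]$ then has measure at most a constant times $t$. Summing over the $O(L^d)$ eigenvalues of $H_{\omega,\Lambda_L}$ below $E_0+1$ (via the Weyl-type bound on the eigenvalue counting function of $-\Delta$ on $\Lambda_L$) produces~\eqref{eq:Wegner_Breather} with $\theta=1/M\in(0,1)$.

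The main obstacle is the non-linear dependence of $V_\omega$ on $\omega_j$: the formal derivative is a surface measure on $\partial B(j,\omega_j)$, not comparable to any $L^2$-bulk quantity, so the linear Feynman--Hellmann argument of the alloy-type case collapses. The finite-difference substitute combined with the scale-free UCP circumvents this, but it replaces the $O(t)$ eigenvalue shift of the alloy case by the weaker $O(t^{M})$ shift; this loss is precisely what degrades the Lipschitz exponent to the fractional H\"older exponent $\theta<1$ in~\eqref{eq:Wegner_Breather}. A secondary technical point is that the shells $A_{j,t}(\omega)$ themselves depend on the realization $\omega$, which makes a UCP uniform in the geometry of the sampling set (rather than one tied to a fixed lattice) indispensable.
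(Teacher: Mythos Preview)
The paper itself does not prove Theorem~\ref{thm:Wegner_Breather}; it is quoted as a result from \cite{NakicTTV-15,TaeuferV-15} in this survey. Your sketch is faithful to the strategy of those references: the finite-difference identity $V_{\omega+t\mathbf{1}}-V_\omega=\sum_j\chi_{A_{j,t}(\omega)}$, the observation that each shell $A_{j,t}(\omega)$ contains a ball of radius $t/2$ yielding a $(t/2)$-equidistributed sequence, the invocation of Theorem~\ref{thm:ucp_NTTV} to get the $(t/2)^M$ lower bound on the quadratic form, and the resulting eigenvalue lifting are exactly the ingredients used there. The explanation of why $\theta<1$ (the $t\mapsto t^M$ degradation replacing the linear alloy-type shift) is also on point.

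One step deserves more care. From ``collective shift by $t$ moves every low eigenvalue by at least $c\,t^M$'' you jump to ``for fixed $n$, the set of $\omega$ with $E_n(\omega)\in[E-\epsilon,E+\epsilon]$ has product-measure at most $Ct$''. This does not follow from monotonicity in each coordinate alone: the lifting bound is for the \emph{simultaneous} shift, not for any single $\omega_j$. In the cited papers the passage from eigenvalue lifting to a Wegner bound goes through the spectral shift function (bounding $\operatorname{tr}\chi_{[E-\epsilon,E+\epsilon]}(H_{\omega,\Lambda_L})$ by the spectral shift between $H_{\omega,\Lambda_L}$ and $H_{\omega+t\mathbf{1},\Lambda_L}$, then using an $L^1$-SSF bound of Combes--Hislop--Nakamura/Hundertmark--Simon type), combined with a change to ``center-of-mass'' coordinates $s=\frac{1}{|\Lambda_L|}\sum_j\omega_j$ so that the integration over $s$ produces the factor $t$. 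Your outline is correct in spirit, but as written the measure-theoretic step is the genuine gap.
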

This implies (non-uniform) H\"older continuity at $E$ of order $\theta$ of the corresponding IDS and can be used to establish localization for the standard breather model via multi-scale analysis.
\par
Actually, Theorem~\ref{thm:Wegner_Breather} holds in a much more general setting, see \cite{NakicTTV}.
We only mention here the (general) random breather model in which the characteristic functions of balls with random radii are replaced by random dilations of radially decreasing, compactly supported, bounded and positive function $u$
\[
 V_\omega(x)
 =
 \sum_{j \in \ZZ^d} u \left( \frac{x - j}{\omega_j} \right).
\]
Examples for $u$ are the smooth function
\[
   u(x) = \exp \left( - \frac{1}{1 - \lvert x \rvert^2} \right) \chi_{\lvert x \rvert < 1},
\]
or the hat potential
\[
   u(x) = \chi_{\lvert x \rvert < 1} (1 - \lvert x \rvert).
\]
Another application of Theorem~\ref{thm:ucp_NTTV} concerns decorrelation estimates and the spectral statistics of random Schr\"odinger operators in dimension 1, cf.~\cite{Shirley-15}.
\par
Theorem~\ref{thm:ucp_NTTV} can also be applied in the context of control theory for the heat equation to show null controllability for the heat equation.
More precisely, Theorem~\ref{thm:ucp_NTTV} can be used to give more explicit statements in the context of results obtained in~\cite{RousseauL-12}, cf.~\cite{NakicTTV}.

\section{Unique continuation problem for solutions on Euclidean lattice graphs}
\label{sect:Z^d}

\begin{definition}[Discrete Laplacian on $\ZZ^d$]
\label{def:Discrete_Lapliacian_ZZ^d}
 We define the \emph{discrete Laplacian} on functions $f: \ZZ^d \to \CC$ as
 \[
  (\Delta f)(i) = \sum_{i \sim j} (f(j) - f(i)) = \sum_{i \sim j} f(j) - 2 d \cdot f(i) ,
 \]
 where $i \sim j$ means that $i$ is a direct neighbour of $j$, i.e. $\lvert i - j \rvert = 1$.
\end{definition}

\begin{remark}[Why is this called ``Laplacian''?]
 If we think of $(f(i))_{i \in \ZZ^d}$ as evaluations of a function $f : \RR^d \to \CC$ on the points $i \in \ZZ^d$ and approximate the difference quotient $(f(x + \epsilon) - f(x))/\epsilon$ with $\epsilon = 1$, the minimal coarsness possible, we find
 \[
  f'(i + 1/2) \approx f(i+1) - f(i)
  \quad
  \text{and}
  \quad
   f'(i - 1/2) \approx f(i) - f(i-1)
 \]
 whence
 \[
  f''(i) \approx f'(i + 1/2) - f'(i - 1/2) \approx f(i-1) - 2 f(i) + f(i+1).
 \]
 In dimension $d$, this translates to
 \[
  (\Delta f) (i) \approx \sum_{i \sim j} f(j) - 2 d \cdot f(i)
 \]

\end{remark}

In the following examples we consider $-\Delta + V$ where $V : \ZZ^d \to \RR$.
\begin{example}[Unique Continuation from half spaces in $\ZZ^d$ with border parallel to an axis]
 Let $f : \ZZ^d \to \CC$ satisfy $(- \Delta  + V) f = 0$ on $\ZZ^d$ and $f(j) = 0$ for all $j = (j_1, ..., j_d) \in \ZZ^d$ with $j_1 \leq 0$.
 Let $i \in \ZZ^d$ with $i_1 = 0$.
 Then
 \[
  - \sum_{j \sim i} f(j) + (2 d + V(i)) f(i) = 0
 \]
 but the only unknown term is $f((1, i_2, \dots, i_d))$ and therefore must be zero.
 We see that $f$ must be zero on the slab $\{ j \in \ZZ^d : j_1 = 1 \}$.
 Inductively, we find $f \equiv 0$ in every slab of width $1$ whence $f = 0$ on $\ZZ^d$.
 By the very same argument we establish unique continuation from a slab
 $\{ (j_1, ..., j_d) \in \ZZ^d \colon j_1 = k\ \text{or}\ k+1 \}$ of width $2$.
\end{example}

\begin{figure}
\caption{Unique continuation from a half space and a double strip in dimension $2$}
\begin{tikzpicture}[scale = .8]
 \foreach \x in {2,...,8}
  \foreach \y in {1,...,4}{
  \draw[thick, black!60] (\x,\y) circle (.1cm);
  }
 \foreach \x in {2,...,4}
  \foreach \y in {1,...,4}{
  \draw[thick, fill = black] (\x,\y) circle (.1cm);
   }
  \fill[pattern = north east lines, opacity = 0.4] (1.5,.5) rectangle (4.5,4.5);
  \draw[thick, black!50] (4.5,.5) -- (4.5,4.5);
  \draw (3,0) node {\color{black}  $f = 0$};

  \draw (6.5,0) node {$f$ unknown};
  \draw[very thick, black] (3.2,2) -- (3.8,2);
  \draw[very thick, black] (4.8,2) -- (4.2,2);
  \draw[very thick, black] (4,1.2) -- (4,1.8);
  \draw[very thick, black] (4,2.8) -- (4,2.2);
  \draw (6.5,1.5) node {\footnotesize \color{black!80} $f = 0$ here};
  \draw[thick, ->, rounded corners = 5pt, black!80] (5.5,1.5) -- (5,1.5) -- (5,1.8);
  \draw[thick, fill = black] (5,2) circle (.1cm);
%
%
\begin{scope}[xshift = 10.5cm]
 \foreach \x in {0,...,7}
  \foreach \y in {1,...,4}{
  \draw[thick, black!60] (\x,\y) circle (.1cm);
  }
 \foreach \x in {3,...,4}
  \foreach \y in {1,...,4}{
  \draw[thick, fill = black] (\x,\y) circle (.1cm);
   }
  \fill[pattern = north east lines, opacity = 0.4] (2.5,.5) rectangle (4.5,4.5);
  \draw[thick, black!50] (4.5,.5) -- (4.5,4.5);
  \draw[thick, black!50] (2.5,.5) -- (2.5,4.5);
  \draw (3.5,0) node {\color{black}  $f = 0$};
  \draw (6, 0) node {$f$ unknown};
  \draw (1, 0) node {$f$ unknown};
  \draw[very thick, black] (3.2,2) -- (3.8,2);
  \draw[very thick, black] (4.8,2) -- (4.2,2);
  \draw[very thick, black] (4,1.2) -- (4,1.8);
  \draw[very thick, black] (4,2.8) -- (4,2.2);
  \draw[very thick, black, dotted] (2.2,3) -- (2.8,3);
  \draw[very thick, black, dotted] (3.8,3) -- (3.2,3);
  \draw[very thick, black, dotted] (3,2.2) -- (3,2.8);
  \draw[very thick, black, dotted] (3,3.8) -- (3,3.2);
  \draw (6.5,1.5) node {\footnotesize \color{black!80} $f = 0$ here};
  \draw[thick, ->, rounded corners = 5pt, black!80] (5.5,1.5) -- (5,1.5) -- (5,1.8);
  \draw (.5,2.5) node {\footnotesize \color{black!80} $f = 0$ here};
  \draw[thick, ->, rounded corners = 5pt, black!80] (1.5,2.5) -- (2,2.5) -- (2,2.8);
  \draw[thick, fill = black] (5,2) circle (.1cm);
  \draw[thick, fill = black] (2,3) circle (.1cm);
 \end{scope}
 \end{tikzpicture}
\end{figure}

\begin{example}[No unique continuation from a double slab where one point has been omitted]
 For the one omitted point, we can prescribe any value.
 Then, there is a unique continuation (by induction over infinite slabs of width $1$).
 Therefore, we have a $1$-dimensional family of possible continuations.
\end{example}

\begin{example}[No unique continuation from a double slab, where $n$ points have been omitted]
 We prescribe values for the $n$ points and find a unique continuation.
 Therefore, we have an $n$-dimensional family of possible continuations.
\end{example}

\begin{example}[No unique continuation from a half space with border in a $45^{\circ}$ angle to the axes]
\label{ex:half_space_45}
 For simplicity, we consider the case $d = 2$ and $V \equiv 0$.
 Let $f : \ZZ^2 \to \CC$ satisfy $\Delta f = 0$ on $\ZZ^2$ and $f \equiv 0$ on a diagonal half-space $\{(j_1,j_2) \in \ZZ^2 \colon j_1 + j_2 \leq 0\}$.
 This does not imply $f \equiv 0$ on $\ZZ^d$.
 In fact, as soon as a value of $f$ on an additional point in the anti-diagonal line  $\{(j_1, j_2) \colon j_1 + j_2 = 1 \}$ is given,
then the values on the whole anti-diagonal  can be recovered successively from the equations
 \[
  0 = \sum_{j \sim i} f(j) - 4 f(i)
 \]
 for $i$ in $\{ i \in \ZZ^2 \colon (i_1 + i_2) = 0 \}$, cf. Figure~\ref{fig:UC_Z_diagonal_half_space}.
 Inductively, we find that there is one degree of freedom in every infinite
 anti-diagonal  $\{(j_1, j_2) \colon j_1 + j_2 = k \}_{k \in \NN}$ and we found an infinite dimensional family of possible continuations.
 \end{example}

 This illustrates the difference to the $\RR^d$ case: While $\RR^d$ is invariant under rotations, $\ZZ^d$ is not whence some unique continuation properties only hold in certain directions.
 However, the next proposition shows that on $\ZZ^2$, the half-spaces with border in a $45^\circ$ angle to the axes are the only ones for which unique continuation fails.

\begin{proposition}[Unique continuation in $\ZZ^2$ from half spaces in almost all directions]
 Let $f: \ZZ^2 \to \CC$ satisfy $\Delta f = 0$ on $\ZZ^2$ and $f \equiv 0$ on a half-space $\{ j \in \ZZ^2 \colon \langle j, \nu \rangle \leq \alpha \}$ where $\nu$ is not parallel to $(1,1)$ or $(-1,1)$, i.e. the border of the half-space is not in a $45^\circ$ angle to an axis.
 Then $f \equiv 0$ on $\ZZ^d$.
\end{proposition}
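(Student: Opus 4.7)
My plan is to exploit the pointwise identity $\Delta f = 0$, which at each lattice point $p \in \ZZ^2$ reads
\[
 f(p + e_1) + f(p - e_1) + f(p + e_2) + f(p - e_2) = 4 f(p),
\]
where $e_1 = (1,0)$ and $e_2 = (0,1)$ are the coordinate unit vectors. In particular, whenever $f$ vanishes at $p$ and at three of the four nearest neighbors of $p$, it must vanish at the fourth as well. This is the same local mechanism that drives Example~1; the task here is to show that the hypothesis on $\nu$ is precisely what allows this mechanism to propagate across the boundary of the given half-space.

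First I would use the symmetries of $\ZZ^2$ (reflection in either coordinate axis and the swap of the two axes) to reduce to the case $\nu = (a,b)$ with $a > b \geq 0$. Under this dihedral action every direction lies in the closed octant $\{0 \leq b \leq a\}$, and the hypothesis $\nu \not\parallel (\pm 1, 1)$ corresponds exactly to the strict inequality $a > b$.

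Next I would single out the \emph{active strip}
\[
 \{ p \in \ZZ^2 : \alpha - a < \langle p, \nu \rangle \leq \alpha - b \}.
\]
For every $p$ in this strip, a brief inequality check using $a > b \geq 0$ shows that $p$ and the three neighbors $p - e_1$, $p + e_2$, $p - e_2$ all lie in $H = \{j : \langle j, \nu \rangle \leq \alpha\}$, while $p + e_1$ lies strictly outside. The Laplacian identity at $p$ therefore forces $f(p+e_1) = 0$. Since $p \mapsto p + e_1$ is a bijection from the active strip onto $\{q \in \ZZ^2 : \alpha < \langle q, \nu \rangle \leq \alpha + (a-b)\}$, this yields $f \equiv 0$ on the enlarged half-space $\{j : \langle j, \nu \rangle \leq \alpha + (a - b)\}$.

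Iterating this step with the new boundary value $\alpha + (a-b)$ in place of $\alpha$, and using that $a - b > 0$ is a fixed positive shift, the half-space grows without bound and eventually exhausts all of $\ZZ^2$, giving $f \equiv 0$. The only step requiring any care is the sign bookkeeping in the active strip, and this is also where the proof reveals its own limitation: when $a = b$ (the excluded directions) the active strip is empty and the propagation cannot even begin, in complete agreement with the failure of unique continuation exhibited in Example~\ref{ex:half_space_45}.
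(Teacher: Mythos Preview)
Your proof is correct and takes a genuinely different route from the paper. After the same dihedral reduction (written there as $\nu=(1,\lambda)$ with $\lambda\in[0,1)$), the paper does not push the half-space forward along $\nu$. Instead it argues, via the mechanism of Example~\ref{ex:half_space_45}, that a harmonic function vanishing on a quadrant $Q_{c_1,c_2}=\{j:j_1\le c_1,\ j_2\le c_2\}$ must in fact vanish on the whole anti-diagonal $\{j_1+j_2=c_1+c_2\}$, and then shows that for every $c\in\ZZ$ such a quadrant with $c_1+c_2=c$ can be fit inside the given half-space: one solves $c_1+\lambda c_2\le\alpha$ subject to $c_1+c_2=c$, which is possible precisely because $\lambda<1$. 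Your argument is more direct: you stay with the half-space geometry throughout and enlarge it by the fixed positive amount $a-b$ at each step, avoiding the detour through quadrants and anti-diagonals. The paper's route has the small conceptual payoff of tying the result back to Example~\ref{ex:half_space_45}, so that the $45^\circ$ obstruction is visibly the same one-parameter freedom already exhibited there; in your version the same obstruction appears as the collapse of the active strip when $a=b$. Both arguments use only the four-neighbor relation, so both extend verbatim to $-\Delta+V$; your strip argument also adapts cleanly to $\ZZ^d$ (the exceptional directions becoming those where the maximum of $|\nu_k|$ is attained by more than one coordinate).
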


\begin{proof}
 By symmetry between the coordinate axes and reflection, we may assume $\nu = (1, \lambda)$ where $\lambda \in [0,1)$.
 Similar considerations as in Example~\ref{ex:half_space_45} show that $u$ will vanish on the anti-diagonal line $\{j \in \ZZ^2 \colon j_1 + j_2 = c\}$ as soon as $u$ vanishes on a set $Q_{c_1, c_2} := \{ j \in \ZZ^2 \colon j_1 \leq c_1, j_2 \leq c_2 \}$ with $c_1 + c_2 = c$.
 Hence, it suffices to show that for every $c \in \ZZ$, there is $(c_1,c_2) \in \ZZ^2$ with $c_1 + c_2 = c$ such that $Q_{c_1,c_2} \subset \{ j \in \ZZ^2 \colon \langle j, \nu \rangle \leq \alpha \}$.
 This is the case if
 \[
  c_1 + \lambda c_2 \leq \alpha
  \quad
  \text{and}
  \quad
  c_1 + c_2 = c
 \]
 and a possible choice is $c_1 = c - \lceil (c - \alpha)/(1 - \lambda) \rceil$, $c_2 = c - c_1$, where $\lceil x \rceil$ denotes the least integer larger or equal than $x$.
 \end{proof}

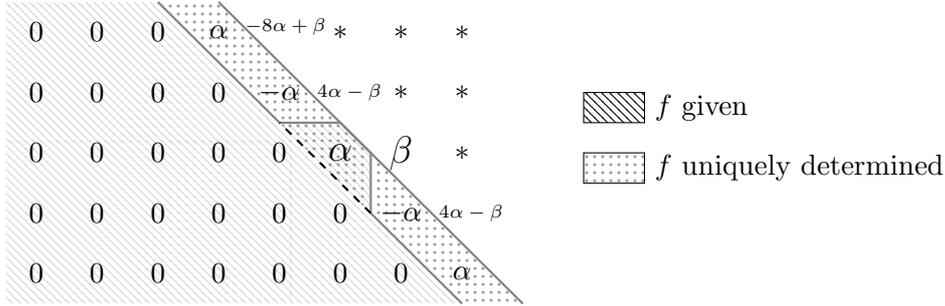
\begin{figure}
\caption{No unique continuation from a half space with border in a $45^\circ$ angle to the axes}
\label{fig:UC_Z_diagonal_half_space}
\begin{tikzpicture}[scale = .8]

\draw[pattern = north west lines] (9,2.5) rectangle (10,3);
\draw[anchor = west] (10,2.75) node {$f$ given};
\draw[pattern = dots, pattern color = black!40] (9,1.5) rectangle (10,2);
\draw[anchor = west] (10,1.75) node {$f$ uniquely determined};

\begin{scope}[yscale = -1, yshift = -4cm]

 \fill[pattern = north west lines, opacity = 0.4] (-.5,-.5) -- (2,-.5) -- (4,1.5) -- (5,1.5) -- (5.5,2) -- (5.5,3) -- (7,4.5)  -- (-.5,4.5) circle;
 \draw[thick, black!50]
 (2,-.5) -- (4,1.5) -- (5,1.5) -- (5.5,2) -- (5.5,3) -- (7,4.5);
 \fill[pattern = dots, pattern color = black!40] (2,-.5) -- (4,1.5) --
 (7,4.5) -- (8,4.5) -- (6.5,3) --
 (3,-.5) -- (2,-.5);
 \draw[thick, dashed] (5.5,3) -- (4,1.5);

  \draw[thick, black!50] (3,-.5) -- (8,4.5);


   \foreach \x in {0,...,2}{\draw (\x,0) node {$0$};}
   \foreach \x in {0,...,3}{\draw (\x,1) node {$0$};}
   \foreach \x in {0,...,4}{\draw (\x,2) node {$0$};}
   \foreach \x in {0,...,5}{\draw (\x,3) node {$0$};}
   \foreach \x in {0,...,6}{\draw (\x,4) node {$0$};}

\begin{scope}[xshift = 0cm]
   \draw (3,0) node {$\alpha$};
   \draw (4,1) node {$- \alpha$};
   \draw (5,2) node {\Large $\alpha$};
   \draw (6,3) node {$- \alpha$};
   \draw (7,4) node {$\alpha$};
\end{scope}

\begin{scope}[xshift = 1cm]
   \draw[] (3.1,-.1) node {\tiny $- 8 \alpha + \beta $};
   \draw[] (4.15,1) node {\tiny $4 \alpha - \beta$};
   \draw[] (5,2) node {\Large $\beta$};
   \draw[] (6.15,3) node {\tiny $4 \alpha - \beta$};
\end{scope}

\begin{scope}[xshift = 2cm]
   \draw[thick] (3,0) node {$\ast$};
   \draw[thick] (4,1) node {$\ast$};
   \draw[thick] (5,2) node {$\ast$};
\end{scope}

\begin{scope}[xshift = 3cm]
   \draw (3,0) node {$\ast$};
   \draw (4,1) node {$\ast$};
\end{scope}

\begin{scope}[xshift = 4cm]
   \draw[thick] (3,0) node {$\ast$};
\end{scope}

\begin{scope}[xshift = 5cm]
\end{scope}

\end{scope}

\end{tikzpicture}
\\
Prescribing the value $\alpha$ in one point of the dotted strip completely determines $f$ on the dotted strip.
\end{figure}

\begin{example}[Inside-out continuation does not work on $\ZZ^d$]
 If $(- \Delta + V) f = 0$ on $\ZZ^d$ and $f = 0$ on a finite set $G \subset \ZZ^d$, we do not have $f = 0$ on $\ZZ^d$.
 In fact, $G$ is contained in a half-space the border of which is in a $45^\circ$ angle to a coordinate axis and even if $f$ vanished on the entire half space, we have seen that this cannot ensure a unique continuation.
\end{example}

\begin{example}[Outside-in continuation works on $\ZZ^d$]
 If however $(- \Delta + V) f = 0$ on $\ZZ^d$ and $f$ vanishes outside of a bounded set $G$, then $f$ vanishes on a half-space (with borders parallel to the axes) and therefore must vanish everywhere.
\end{example}

 So far, we encountered a couple of negative examples in which properties valid on $\RR^d$ do not hold any more on $\ZZ^d$.
 Nevertheless, outside-in unique continuation which holds on $\ZZ^d$ is sufficient to ensure continuity of the IDS of operators $- \Delta + V$ for ergodic $V: \ZZ^d \to \RR$ on the
Hilbert space $\ell^2(\ZZ^d) = \{ f: \ZZ^d \to \CC \mid \sum_{i \in \ZZ^d} |f(i)|^2 < \infty \}$, cf.~\cite{DelyonS-84}. Henceforth, when we speak about eigenfunctions, we always mean $\ell^2$-eigenfunctions. Let us explain their argument:
 \par
 Outside-in continuation implies that there are no finitely supported eigenfunctions.
 In fact, if this was not true, one could take a large box which contains the support of the eigenfunction.
 Outside the function is $0$, but by outside-in unique continuation, it follows that the function must be $0$ everywhere.
 By linearity, this implies that every eigenfunction of $- \Delta + V \mid_{\Lambda_L}$ with eigenvalue $E$ will be uniquely determined by its entries on $\partial_{(2)} \Lambda_L$, the set of sites in $\ZZ^d$ with distance at most $2$ to the complement of $\Lambda_L$.
 Now, continuity of the IDS at a point $E \in \RR$ is equivalent to the vanishing of
 \begin{equation}
 \label{eq:IDS_ZZ^d}
  \lim_{L \to \infty} \frac{1}{\lvert \Lambda_L \rvert}
  \sharp \{ \text{Eigenfunctions of}\ - \Delta + V \mid_{\Lambda_L} \ \text{with eigenvalue}\ E \},
 \end{equation}
 where $- \Delta + V \mid_{\Lambda_L}$ denotes the restriction of $- \Delta + V$ to $\{ j \in \ZZ^d \colon j \in \Lambda_L \}$ with \emph{simple boundary conditions}, i.e. the finite submatrix of $\{ \left\langle \delta_i, (- \Delta + V) \delta_j \right\rangle \}_{i,j \in \ZZ^d}$, corresponding to $i,j \in \Lambda_L \cap \ZZ^d$.
 By our considerations on unique continuation of eigenfunctions, the right hand side of Ineq.~\eqref{eq:IDS_ZZ^d} is bounded from above by
 \[
  \lim_{l \to \infty}
  \frac{\lvert \partial_{(2)} \Lambda_L \rvert}{\lvert \Lambda_L \rvert}
  =
  0.
 \]
 \par
 For the sake of completeness, we also mention that in 1981, Wegner showed Lipshitz continuity of the IDS and boundedness of the DOS for the usual Anderson model on $\ZZ^d$
 \[
  (H_\omega f)_i = (- \Delta f)_i + \omega_i \cdot f_i
  \quad
  i \in \ZZ^d
 \]
 in the case where the random variables $\omega_j$ are distributed according to a probability measure with a bounded density, cf.~\cite{Wegner-81}.
 Furthermore, with considerably more effort than in \cite{DelyonS-84}, Craig and Simon \cite{CraigS-83a}
established $\log$-H\"older continuity of the IDS if the potential $V \colon \ZZ^d \to \RR$ is a bounded, ergodic field.
This includes in particular the Anderson model with i.i.d. Bernoulli random variables.
 Finally, in \cite{CarmonaKM-87}, Thm. 2.2 it is shown that in dimension $d = 1$, the IDS for the Anderson model with Bernoulli random variables is not absolutely continuous, i.e. it does indeed inherit some irregularity from the random variables.

\section{Finitely supported eigenfunctions and the IDS on percolation graphs}
\label{sect:percolation}
We will now study site percolation on $\ZZ^d$.
Let $\{ q_j \}_{j \in \ZZ^d}$ be an i.i.d. collection of Bernoulli random variables on some probability space $(\Omega, \PP)$ with parameter $p \in (0,1)$, i.e.
\begin{align*}
 \PP ( q_j = 1) = p
 \quad
 \text{and}
 \quad
 \PP(q_j = 0) = 1-p.
\end{align*}
We call $X(\omega) := \{ j \in \ZZ^d: q_j = 1 \} \subset \ZZ^d$ the set of \emph{active} sites for the configuration $\omega \in \Omega$.
We say that $i,j \in X(\omega)$ are direct neighbours if they are direct neighbours in $\ZZ^d$.
$X(\omega)$ can be decomposed as a disjoint union of connected components, i.e. into subsets in which all sites are mutually joined by a path in $X(\omega)$ of direct neighbours.

The adjacency matrix $H_\omega$ on $X(\omega) $ is given by 
\[
 (H_\omega f)_i = \sum_{j \in X(\omega) \colon i \sim j} f_j.
\]
For a finite box $G \subset \ZZ^d$ let $H_{\omega,G}$ denote the restriction of $H_\omega$ to $G \cap X(\omega)$.
Then the finite volume normalized eigenvalue counting function on a box $\Lambda_L \subset \ZZ^d$ of side lenght $L$ is defined as
\[
 N_\omega^L(E) := \frac{ \sharp \{ \text{Eigenvalues}\ E_k\ \text{of}\ H_{\omega,\Lambda_L} \  \text{with}\ E_k \leq E \} }{\lvert \Lambda_L \rvert}.
\]
Similarly to the continuum case, one can thus define the integrated density of states $N(E)$ as a limit of finite volume normalized eigenvalue counting functions, at least on the points where $N(E)$ is continuous. We present here some results taken from \cite{Veselic-05b}.

\begin{theorem}[\cite{Veselic-05b}]
 There is $\Omega' \subset \Omega$ of full measure and a distribution function $N$ (the IDS of $H_\omega$)
such that for all $\omega \in \Omega'$ and all continuity points of $N$ we have
 \[
  \lim_{L \to \infty} N_\omega^L(E) = N(E).
 \]
\end{theorem}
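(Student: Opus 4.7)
The plan is to prove the theorem by applying the strong law of large numbers to the eigenvalue counting function of $H_{\omega,\Lambda_L}$, exploiting the i.i.d.\ Bernoulli structure of the percolation configuration and controlling boundary effects by a rank estimate.

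First I would let $\mathcal{N}_\omega^L(E) := \sharp\{\text{eigenvalues of } H_{\omega,\Lambda_L} \text{ in } (-\infty,E]\}$ so that $N_\omega^L(E) = \mathcal{N}_\omega^L(E)/\lvert \Lambda_L\rvert$, fix a scale $\ell \in \NN$ dividing $L$, and partition $\Lambda_L$ into $(L/\ell)^d$ translates $\Lambda^{(k)}$ of the reference box $\Lambda_\ell$. Because $H_{\omega,\Lambda_L}$ is the restriction of the adjacency matrix on $X(\omega) \cap \Lambda_L$, it differs from the block-diagonal operator $\bigoplus_k H_{\omega,\Lambda^{(k)}}$ only by the entries corresponding to edges of $\ZZ^d$ crossing walls between the $\Lambda^{(k)}$; this is a self-adjoint perturbation of rank at most $C L^d / \ell$, so the Weyl rank inequality for eigenvalue counts gives
\[
\left\lvert \mathcal{N}_\omega^L(E) - \sum_k \mathcal{N}_\omega^{(k)}(E) \right\rvert \leq C L^d / \ell,
\]
where $\mathcal{N}_\omega^{(k)}(E)$ is the counting function of $H_{\omega,\Lambda^{(k)}}$. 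The crucial point is that $H_{\omega,\Lambda^{(k)}}$ depends only on $\{q_j : j \in \Lambda^{(k)}\}$, so the variables $\mathcal{N}_\omega^{(k)}(E)$ are i.i.d.\ (identically distributed by translation invariance of $\PP$ and bounded by $\ell^d$).

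By the SLLN, for each fixed $E$ and $\ell$ we have almost surely
\[
\lim_{L \to \infty} \frac{1}{\lvert \Lambda_L\rvert} \sum_k \mathcal{N}_\omega^{(k)}(E) = \frac{\EE[\mathcal{N}_\omega^{\Lambda_\ell}(E)]}{\ell^d} =: N_\ell(E),
\]
and the boundary estimate above yields $\lvert N_\omega^L(E) - N_\ell(E)\rvert \leq C/\ell$ in the limit $L \to \infty$, uniformly in $\omega$. Applying the partition argument at nested scales $\ell, 2\ell, 4\ell, \dots$ shows that $\{N_\ell(E)\}_{\ell \in \NN}$ is Cauchy and converges to some deterministic limit $N(E)$. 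Intersecting full-measure sets over $E$ in a countable dense set $\mathcal{D} \subset \RR$ produces a single $\Omega' \subset \Omega$ of full measure on which $N_\omega^L(E) \to N(E)$ for every $E \in \mathcal{D}$.

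To pass from $\mathcal{D}$ to continuity points of $N$, I would use a standard monotonicity sandwich: if $E$ is a continuity point of $N$ and $\epsilon > 0$, pick $E_- < E < E_+$ in $\mathcal{D}$ with $N(E_+) - N(E_-) < \epsilon$; monotonicity of $N_\omega^L$ in $E$ gives $N_\omega^L(E_-) \leq N_\omega^L(E) \leq N_\omega^L(E_+)$, and letting $L \to \infty$ yields $\limsup_L \lvert N_\omega^L(E) - N(E)\rvert \leq \epsilon$. The main obstacle is the joint management of the two limits $L \to \infty$ and $\ell \to \infty$, in particular showing that $\{N_\ell(E)\}$ is Cauchy with an explicit rate; this is where the deterministic rank estimate on boundary edges, rather than an abstract subadditive ergodic theorem, is the right tool, since it gives the quantitative control needed to intertwine the two passages to the limit and to absorb the boundary correction uniformly in $\omega$.
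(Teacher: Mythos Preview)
The paper does not prove this theorem; it merely quotes it from \cite{Veselic-05b}. So there is no proof in the paper to compare against, and I can only assess your argument on its own merits and against the standard approach in the cited literature.

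Your strategy is sound and essentially complete. The block decomposition of $\Lambda_L$ into i.i.d.\ sub-boxes of side $\ell$, the rank bound $O(L^d/\ell)$ on the boundary perturbation (via the number of $\ZZ^d$-edges crossing the internal walls) together with the min-max/interlacing inequality for eigenvalue counts, the SLLN on the independent counting functions $\mathcal{N}_\omega^{(k)}(E)$, and the monotonicity sandwich to pass from a countable dense set to all continuity points --- all of this is correct. Two small technical points you gloss over: (i) when $\ell \nmid L$ an additional boundary layer of volume $O(\ell L^{d-1})$ appears, handled by Cauchy interlacing exactly as before; (ii) the Cauchy property of $\{N_\ell(E)\}$ follows more directly by taking expectations in your rank inequality with $L$ a multiple of $\ell$, giving $|N_{k\ell}(E)-N_\ell(E)| \le C/\ell$ for all $k$, rather than by iterating over dyadic scales. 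Neither is a genuine gap.

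Compared to the usual argument (as in \cite{Veselic-05b} and its amenable-group generalizations mentioned just after the theorem), your route is more elementary but less general: the standard proof invokes the ergodic theorem for the $\ZZ^d$-action on $\Omega$ applied to local traces of spectral projections, which works for any ergodic colouring and transfers verbatim to amenable Cayley graphs. Your SLLN argument exploits the i.i.d.\ Bernoulli structure specifically, so it would not cover, say, an ergodic but correlated percolation process without further work; on the other hand it yields the deterministic boundary error $C/\ell$ explicitly, which the abstract ergodic approach does not.
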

%
%
In contrast to the usual continuum Anderson model, the IDS for percolation graphs will be more irregular and have jumps.
This is due to the fact that $X(\omega)$ almost surely contains finite connected components on which the restriction of $H_\omega$ will carry $\ell^2(\ZZ^d)$-eigenfunctions of finite support.
Hence, if an eigenfunction is zero outside some large box, the box might still contain a finite component of $X(\omega)$ on which we non-zero eigenfunctions can be found.
Therefore, the outside-in unique continuation principle which had been used in the $\ZZ^d$ case to show continuity of the IDS, fails.
\begin{proposition}[\cite{ChayesCFST-86,Veselic-05b}]
 The set of discontinuity points of $N(E)$ is
 \[
   \cD= \{ E \in \RR \colon \exists \ \text{finite}\ G \subset \ZZ^d\ \text{and}\ f \in \ell^2(G)\ \text{such that}\ H^G f = E f \}
 \]
 which is an infinite subset of the algebraic numbers.
\end{proposition}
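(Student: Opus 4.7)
My plan splits the proposition into two claims: identifying the discontinuities of $N$ with $\cD$, and showing $\cD$ is an infinite set of algebraic numbers.

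I would first establish $\cD \subseteq \{\text{discontinuities of } N\}$ via an ergodic argument. Given $E \in \cD$ with witness $f \in \ell^2(G_0)$, I may (by restricting to a connected component of $\operatorname{supp}(f)$) assume $G_0$ is connected in $\ZZ^d$. For each translate $G_0 + i$ consider the event $A_i$ that all sites of $G_0 + i$ are active while all $\ZZ^d$-neighbors of $G_0 + i$ outside $G_0 + i$ are inactive; these events have common positive probability $p_0 := p^{|G_0|}(1-p)^{|\partial G_0|}$ and form a stationary, ergodic family under $\ZZ^d$-shifts. Whenever $A_i$ occurs, $G_0 + i$ is an isolated connected component of $X(\omega)$ and the translated, zero-extended copy of $f$ is an eigenvector of $H_{\omega,\Lambda_L}$ at $E$ whenever $G_0 + i \subset \Lambda_L$. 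The pointwise ergodic theorem then gives $|\Lambda_L|^{-1}\,\sharp\{i \in \Lambda_L : A_i \text{ holds}\} \to p_0 > 0$ almost surely, forcing a jump of $N$ at $E$ of size at least $p_0$.

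For the converse I would decompose $H_{\omega,\Lambda_L}$ as a direct sum over the connected components of $X(\omega) \cap \Lambda_L$. Components strictly interior to $\Lambda_L$ are finite induced subgraphs $G \subset \ZZ^d$, and their eigenvalues all lie in $\cD$; by applying the ergodic theorem to each finite isomorphism type $[G]$, the normalized count of their eigenvalues at $E$ converges to $\sum_{[G]} \rho(G)\, m_G(E)$ with each $\rho(G) > 0$, which is nonzero precisely when $E \in \cD$. Components touching $\partial \Lambda_L$ together with the restriction of the infinite cluster to $\Lambda_L$ must contribute no atoms at energies outside $\cD$: for boundary-touching clusters of bounded diameter this is an $O(|\partial \Lambda_L|) = o(|\Lambda_L|)$ correction, while the infinite-cluster contribution requires the percolation-specific continuity argument from \cite{ChayesCFST-86, Veselic-05b}.

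For the final assertion, any $H^G$ with finite $G$ is a symmetric $\{0,1\}$-matrix whose characteristic polynomial is monic with integer coefficients, so its eigenvalues are algebraic integers and $\cD$ consists of algebraic numbers. Taking $G_n = \{0, e_1, 2 e_1, \dots, (n-1) e_1\}$, the induced graph is a path on $n$ vertices with adjacency spectrum $\{2\cos(k\pi/(n+1)) : k = 1, \dots, n\}$; letting $n$ range over $\NN$ produces infinitely many distinct elements of $\cD$. The principal obstacle is the converse inclusion: the infinite cluster contributes $\Theta(|\Lambda_L|)$ eigenvalues to $H_{\omega,\Lambda_L}$, and the delicate step is ruling out that these accumulate into atoms outside $\cD$, which is the technical heart of the argument in \cite{Veselic-05b}.
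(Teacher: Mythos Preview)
The paper does not actually prove this proposition; it is stated with citations to \cite{ChayesCFST-86,Veselic-05b} and preceded only by the heuristic remark that $X(\omega)$ almost surely contains finite connected components carrying finitely supported eigenfunctions, so that outside-in unique continuation fails. That heuristic is precisely your inclusion $\cD \subseteq \{\text{discontinuities}\}$, and your ergodic argument for it is correct and is essentially what underlies the cited references. Your treatment of the algebraicity and infinitude of $\cD$ is also fine.

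For the converse your framing is slightly off. You say the boundary-touching and infinite-cluster pieces ``must contribute no atoms at energies outside $\cD$'', but observe that \emph{every} eigenvalue of $H_{\omega,\Lambda_L}$ already lies in $\cD$, since $H_{\omega,\Lambda_L}=H^G$ for the finite set $G=X(\omega)\cap\Lambda_L$. So your interior/boundary/infinite-cluster decomposition does not separate ``good'' from ``bad'' eigenvalues at finite volume; the genuine issue is whether, in the limit $L\to\infty$, mass can concentrate into an atom of $N$ at some $E\notin\cD$ (and since $\cD$ is dense in the spectrum, this is not excluded merely by the support of the $N_\omega^L$). A cleaner route, closer to the cited references, uses the trace-per-unit-volume formula: the jump of $N$ at $E$ equals $\EE\,\langle\delta_0,\chi_{\{E\}}(H_\omega)\,\delta_0\rangle$, which decomposes over the connected components of $X(\omega)$; finite components contribute only at energies in $\cD$, and controlling any contribution from the infinite cluster in the supercritical phase is indeed the technical heart, as you correctly flag at the end.
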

Now, one might wonder whether one can still expect some regularity of the IDS.
We start with a statement on the finite volume approximations.
\begin{theorem}[\cite{Veselic-05b}, Theorem 2.4]
 The normalized finite volume eigenvalue counting functions $N_\omega^L$ are right $\log$-H\"older continuous at $E \in \cD$
uniformly in $L$, i.e. for every $E \in \cD$ there is a constant $C_E$ such that for all $\epsilon \in (0,1)$, $L \in \NN$ and $\omega \in \Omega$ we have
 \[
  N_\omega^L(E + \epsilon) - N_\omega^L(E)
  \leq
  \frac{C_E}{\log(1 / \epsilon)}.
 \]
\end{theorem}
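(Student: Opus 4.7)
The plan is to replace the Carleman-type continuation machinery, unavailable in the discrete setting, by an arithmetic argument that exploits two structural features of $H_{\omega,\Lambda_L}$: it is a symmetric matrix with entries in $\{0,1\}$, so its characteristic polynomial $p(z) := \det(zI - H_{\omega,\Lambda_L})$ has integer coefficients; and $\|H_{\omega,\Lambda_L}\| \leq 2d$, so every eigenvalue lies in $[-2d,2d]$. Setting $n := \dim H_{\omega,\Lambda_L} \leq |\Lambda_L|$ and letting $m$ denote the number of eigenvalues of $H_{\omega,\Lambda_L}$ in $(E,E+\epsilon]$, we have $N_\omega^L(E+\epsilon) - N_\omega^L(E) = m/|\Lambda_L|$, and it suffices to establish $m \log(1/\epsilon) \leq C_E \cdot n$ for sufficiently small $\epsilon$ (for $\epsilon$ bounded below by a positive $E$-dependent constant the inequality is trivial once $C_E$ is inflated accordingly, since $m/|\Lambda_L| \leq 1$).

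First, I would peel off the contribution of $E$ from $p$. Since $E \in \cD$ is an eigenvalue of some finite symmetric $0$--$1$ matrix, $E$ is an algebraic integer; let $q_E \in \ZZ[z]$ be its monic minimal polynomial, of degree $d_E$. The Galois conjugates $\sigma_1(E), \ldots, \sigma_{d_E}(E)$ are likewise roots of the characteristic polynomial of that same finite matrix, hence real and contained in $[-2d,2d]$. Let $\ell \geq 0$ be the multiplicity of $E$ as a root of $p$. Since $p \in \ZZ[z]$ and its factorisation is Galois-invariant, every conjugate of $E$ appears in $p$ with the same multiplicity $\ell$, so $q_E^\ell$ divides $p$; Gauss's lemma then gives $\tilde p := p/q_E^\ell \in \ZZ[z]$, with $\tilde p(E) \neq 0$.

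The core of the proof is a two-sided estimate on $|\tilde p(E)|$. For the \emph{lower bound}, observe that $\tilde p(E)$ is a nonzero algebraic integer in $\ZZ[E]$, so its field norm $\prod_{i=1}^{d_E} \tilde p(\sigma_i(E))$ is a nonzero rational integer of absolute value at least $1$. Because each $|\sigma_i(E)| \leq 2d$ and the roots of $\tilde p$ are themselves eigenvalues of $H_{\omega,\Lambda_L}$ (so contained in $[-2d,2d]$), each factor $|\tilde p(\sigma_i(E))|$ is at most $(4d)^n$, which yields
\[
  |\tilde p(E)| \geq (4d)^{-n(d_E-1)}.
\]
For the \emph{upper bound}, when $\epsilon$ is smaller than the minimal separation between $E$ and its other Galois conjugates, every eigenvalue in $(E,E+\epsilon]$ is a root of $\tilde p$; the factorisation $\tilde p(E) = \prod_\alpha (E-\alpha)$ over roots $\alpha$ of $\tilde p$ then gives $|\tilde p(E)| \leq \epsilon^m (4d)^{n-m}$.

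Combining the two bounds yields $\epsilon^m \geq (4d)^{-nd_E}$, so $m \log(1/\epsilon) \leq n d_E \log(4d)$, and dividing by $|\Lambda_L| \geq n$ delivers the claim with $C_E$ of the order $d_E \log(4d)$. The main conceptual point is recognising that the arithmetic fact ``a nonzero algebraic integer has field norm of absolute value at least one'' can substitute for the quantitative unique continuation used in the continuum setting, with the algebraic degree $d_E$ playing the role of the geometric constant; the chief technical subtlety is the multiplicity argument ensuring $p/q_E^\ell \in \ZZ[z]$ with nonzero value at $E$, which rests on the Galois-invariance of the factorisation of $p$.
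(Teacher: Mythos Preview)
Your argument is correct and is essentially the approach underlying the cited reference \cite{Veselic-05b}: exploit that $H_{\omega,\Lambda_L}$ has integer entries, factor out the appropriate power of the minimal polynomial of $E$ from the characteristic polynomial, and use that a nonzero algebraic integer has field norm of absolute value at least one to bound the remaining factor from below, while the eigenvalues in $(E,E+\epsilon]$ furnish the upper bound $\epsilon^m(4d)^{n-m}$. Two minor remarks: the degree of $\tilde p$ is $n-\ell d_E$ rather than $n$, so your upper bound $\epsilon^m(4d)^{n-m}$ is slightly wasteful but of course still valid; and your final constant $C_E$ must also absorb the threshold $\epsilon_0$ (the minimal distance from $E$ to its other conjugates) needed to ensure that the eigenvalues in $(E,E+\epsilon]$ avoid the roots of $q_E$, which you acknowledge but do not make fully explicit.
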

This immediately implies right $\log$-H\"older continuity of $N$ and is actually sufficient to ensure the convergence of normalized finite volume eigenvalue counting functions.
\begin{theorem}[\cite{Veselic-05b}, Corollary~2.5]
 The IDS $N$ is right $\log$-H\"older continuous and the convergence $\lim_{L \to \infty} N_\omega^L(E) = N(E)$ holds for all $E \in \RR$.
\end{theorem}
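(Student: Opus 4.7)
The plan is to bootstrap from the uniform (in $L$) right log-H\"older bound on the finite-volume counting functions $N_\omega^L$ at points $E\in\cD$, supplied by the preceding theorem, together with the already known convergence $N_\omega^L(E)\to N(E)$ at continuity points of $N$. The key observation is that once pointwise convergence is upgraded to every $E\in\RR$, right log-H\"older continuity of $N$ will follow immediately by letting $L\to\infty$ in the uniform bound. Throughout I fix $\omega\in\Omega'$ from the set of full measure on which the convergence at continuity points holds.

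The main step will be extending pointwise convergence to every $E\in\RR$. For $E\notin\cD$, $N$ is continuous at $E$ and convergence is already granted. So I focus on $E\in\cD$. Since $\cD$ is countable, I would choose a sequence of continuity points $E_n\downarrow E$ of $N$. For the upper bound I would use monotonicity of $N_\omega^L$ to write
\[
  N_\omega^L(E)\leq N_\omega^L(E_n)\longrightarrow N(E_n)\quad\text{as } L\to\infty,
\]
and then invoke right-continuity of the distribution function $N$ as $n\to\infty$ to obtain $\limsup_{L\to\infty} N_\omega^L(E)\leq N(E)$. For the matching lower bound I would invoke the uniform right log-H\"older bound at $E\in\cD$ with $\epsilon=E_n-E$, which gives
\[
  N_\omega^L(E)\geq N_\omega^L(E_n)-\frac{C_E}{\log(1/(E_n-E))}.
\]
Sending $L\to\infty$ first and then $n\to\infty$, the first term tends to $N(E_n)\to N(E)$ while the correction vanishes, so $\liminf_{L\to\infty}N_\omega^L(E)\geq N(E)$, completing the proof of pointwise convergence.

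Once convergence is in hand at every point, the right log-H\"older continuity of $N$ at $E\in\cD$ will follow by passing to the limit $L\to\infty$ in $N_\omega^L(E+\epsilon)-N_\omega^L(E)\leq C_E/\log(1/\epsilon)$, which yields the same bound for $N$. At continuity points $E\notin\cD$, $N$ is already continuous at $E$, and a log-H\"older rate there either holds trivially (if $N$ is locally constant past $E$) or can be obtained by inserting a nearby point of $\cD$ lying in $(E,E+\epsilon)$ and combining monotonicity with the $\cD$-bound.

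The delicate step will be the lower bound $\liminf_L N_\omega^L(E)\geq N(E)$ at a jump point $E\in\cD$: plain monotonicity from the left yields only $\liminf_L N_\omega^L(E)\geq N(E^-)$, and this misses the jump $N(E)-N(E^-)>0$ entirely. The quantitative, $L$-uniform control on the growth of $N_\omega^L$ to the right of $E$ provided by the preceding theorem is precisely what lets me pinch from the right instead and close this gap as $E_n\downarrow E$.
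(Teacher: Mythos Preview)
Your approach is correct and matches the paper's own one-line justification, which simply observes that the uniform right log-H\"older bound on $N_\omega^L$ from the preceding theorem ``immediately implies right log-H\"older continuity of $N$ and is actually sufficient to ensure the convergence of normalized finite volume eigenvalue counting functions.'' Your argument fills in precisely the details behind this sentence: the squeeze from the right using the $L$-uniform bound to get $\liminf_L N_\omega^L(E)\geq N(E)$ at jump points, and passing to the limit in the bound to obtain log-H\"older continuity of $N$.
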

We conclude our comments on the regularity of the IDS of percolation Hamiltonians by examining the effect of adding a random potential.
Let
\[
 (V_\omega f)_i = \eta_i f_i,
 \quad
 i \in X(\omega)
\]
where $\{ \eta_j \}_{j \in \ZZ^d}$ is a process of positive, i.i.d. random variables independent of the percolation $\{q_j\}_{j \in \ZZ^d}$.
\begin{theorem}[\cite{Veselic-05b}, Theorem~2.6]
 If the probability measure corresponding to every $\eta_j$ has no atoms then the IDS of $H_\omega + V_\omega$ is continuous.
\end{theorem}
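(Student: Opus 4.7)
The plan is to deduce continuity of $N$ at every $E\in\RR$ from the fact that, for each fixed $E$, the energy $E$ is almost surely not an eigenvalue of $H_\omega+V_\omega$. This in turn will follow from a rank-one perturbation argument that, crucially, needs only the absence of atoms of each $\eta_j$ and not an absolutely continuous density.

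First I would invoke the Pastur-Subin trace-per-unit-volume formula for the ergodic random operator family $\{H_\omega+V_\omega\}_{\omega\in\Omega}$, valid in this percolation setting by the ergodic-theoretic work in~\cite{Veselic-05b}:
\[
  N(E)-N(E^-)
  =
  \EE\bigl[\langle\delta_0,\chi_{\{E\}}(H_\omega+V_\omega)\delta_0\rangle\,\mathbf{1}_{\{0\in X(\omega)\}}\bigr].
\]
It therefore suffices to show that for every $E\in\RR$ one has $\chi_{\{E\}}(H_\omega+V_\omega)=0$ almost surely. Since the range of $\chi_{\{E\}}(H_\omega+V_\omega)$ consists of $\ell^2$-eigenfunctions for the eigenvalue $E$, any non-trivial element has a non-zero entry at some site $k\in X(\omega)$. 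Hence the event $\{\chi_{\{E\}}(H_\omega+V_\omega)\neq 0\}$ is contained in the countable union
\[
  \bigcup_{k\in\ZZ^d}\bigl\{\langle\delta_k,\chi_{\{E\}}(H_\omega+V_\omega)\delta_k\rangle>0\bigr\},
\]
and it is enough to show each of these events is a null set.

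The key step is the rank-one argument. I would condition on the entire percolation configuration $\{q_j\}_{j\in\ZZ^d}$ and on $\{\eta_j\}_{j\neq k}$. On the event $\{q_k=1\}$ (otherwise the event in question does not occur), $\eta_k$ enters as a rank-one perturbation parameter: conditionally, $H_\omega+V_\omega=H_0+\eta_k P_{\delta_k}$, where $P_{\delta_k}$ is the orthogonal projection onto $\delta_k$ and $H_0$ is a fixed self-adjoint operator on $\ell^2(X(\omega))$. The Aronszajn--Donoghue theorem asserts that the pure-point parts of the spectral measures of $\delta_k$ with respect to $H_0+\eta P_{\delta_k}$ are mutually singular as $\eta$ varies over $\RR$; in particular, for each fixed $E$ there is at most one value of $\eta$ for which $\langle\delta_k,\chi_{\{E\}}(H_0+\eta P_{\delta_k})\delta_k\rangle>0$. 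Since the distribution of $\eta_k$ has no atoms, this one-point set has conditional probability zero, and the tower property followed by the countable union over $k$ finishes the argument.

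The main obstacle I anticipate is pinning down the precise form of the Pastur-Subin jump formula in this percolation setting, where the underlying Hilbert space $\ell^2(X(\omega))$ itself depends on $\omega$; the necessary measurability and ergodic-theoretic groundwork is carried out in~\cite{Veselic-05b}. By contrast, the rank-one step is completely insensitive to the cardinality of the connected component of $k$ in $X(\omega)$, so no separate treatment of finite versus infinite clusters is required --- in sharp contrast to the earlier characterization of the discontinuity set of the pure adjacency operator $H_\omega$ as precisely the eigenvalues of its restrictions to finite clusters.
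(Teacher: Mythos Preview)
The survey paper you are working from does not actually prove this theorem; it merely quotes it from \cite{Veselic-05b}. Your proposal is correct and is precisely the standard route used in \cite{Veselic-05b} and in the broader literature on Anderson-type models: reduce the jump of the IDS at $E$ to $\EE[\langle\delta_0,\chi_{\{E\}}(H_\omega+V_\omega)\delta_0\rangle]$ via the trace-per-unit-volume formula, then kill this expectation by a rank-one (Aronszajn--Donoghue / spectral averaging) argument in the coupling constant $\eta_k$, using only that the single-site distribution has no atoms.

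Two small points worth tightening. First, you do not actually need the detour through the countable union over $k$: by $\ZZ^d$-ergodicity it suffices to show $\langle\delta_0,\chi_{\{E\}}(H_\omega+V_\omega)\delta_0\rangle=0$ almost surely, and the rank-one argument at the single site $k=0$ already gives this. Second, your worry about the $\omega$-dependent Hilbert space is handled in \cite{Veselic-05b} exactly as you suspect, by embedding $\ell^2(X(\omega))$ into $\ell^2(\ZZ^d)$ and checking measurability there; once that groundwork is in place, the jump formula and the conditioning step go through without modification. Your closing remark that the rank-one step is insensitive to whether the cluster of $k$ is finite or infinite is exactly the conceptual point of the theorem.
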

Most of the results of~\cite{Veselic-05b} hold for more general random operators defined on $\ell^2(G)$ where $G$ is a countable amenable group
(see also \cite{AntunovicV-08b}).
Furthermore, the pointwise convergence $\lim_{L \to \infty} N_\omega^L(E) = N(E)$ not only holds for all $E$, but actually uniformly in $E \in \RR$,
see~\cite{LenzV-09} and the references given there. One can even give an estimate on the approximation error in supremum norm, see \cite{SchumacherSV-16}.

\section{Existence of finitely supported eigenfunctions on planar graphs}
\label{sect:planar_graphs}

The graph Laplacian on $\ZZ^d$, defined in Definition \ref{def:Discrete_Lapliacian_ZZ^d}, has the following natural generalization to arbitrary graphs $G = (\cV,\cE)$ with vertex set $\cV$ and edge set $\cE$, with the only restriction of finite vertex degrees $|x| < \infty$ for all $x \in \cV$:
For a function $f: \cV \to \CC$, the \emph{(normalized) discrete Laplacian} is given by
\[
 \Delta_G f(x) = \frac{1}{|x|} \sum_{x \sim y} (f(x)-f(y)),
\]
where $x \sim y$ means that $x, y \in \cV$ are connected by an edge. The normalization by the vertex degree is just a scaling factor of the operator in the case of a regular graph (i.e., $|x|$ constant for all $x \in \cV$) such as $\ZZ^d$. For the rest of this note, we will use the normalized Laplacian.

A particular family of graphs are the \emph{planar graphs}, that is, graphs which have a realization in $\RR^2$ with non-crossing edges (edges can be curved and do not need to be straight lines).
For simplicity, we often identify planar realizations and their underlying discrete graphs.
The \emph{faces} of (a realization of) a planar graph $G$ are the closures of the connected components of the complement $\RR^2 \backslash G$.
We have already seen that the planar graph $\ZZ^2$ with edges between nearest neighbours does not admit finitely supported eigenfunctions (the faces of this graph are the unit squares $[k,k+1] \times [l,l+1]$ with
$k,l \in \ZZ$).
\par

A particular planar graph admitting finitely supported eigenfunctions is the \emph{Kagome lattice}. The Kagome lattice has attracted attention in the physics and mathematical physics community in connection with magnetic properties of certain crystal structures (see, e.g., \cite{Lawesetall-05,ChaudhuryCGLSWY-08}) and due to the emergence of butterfly spectra \cite{Hou-09,KerdelhueRL-14,HelfferKR-16}.

The Kagome lattice $K=(\cV,\cE)$ can be described as follows (see, e.g., \cite{LenzPPV-09}):  Let $w_1 = 1$ and $w_2 = e^{\pi i/3}$. Then the vertex set $\cV$ is given by the disjoint union
\[
 \cV
 =
 (2 \ZZ w_1 + 2 \ZZ w_2) \cup (w_1 + 2 \ZZ w_1 + 2 \ZZ w_2) \cup (w_2 + 2 \ZZ w_1 + 2 \ZZ w_2)
\]
A pair $x,y \in \cV$ is connected by a straight edge if and only if $|y-x| = 1$. The faces of this graph are regular triangles and hexagons, cf. Figure~\ref{fig:Kagome}.
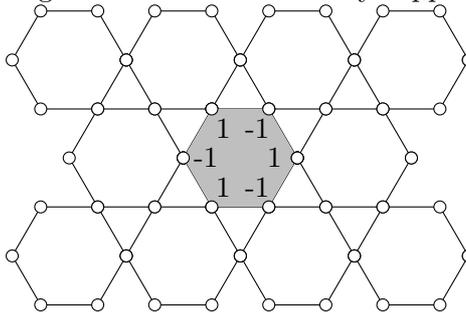
\begin{figure}
\label{fig:Kagome}
\caption{The Kagome lattice and a finitely supported eigenfunction}
 \begin{tikzpicture}[scale = 1.5]
  \foreach \x in {0,...,3}{
   \begin{scope}[xshift = .5 cm + \x cm, yshift = -0.86602540378 * 1 cm]
    \draw (0:.5) -- (60:.5) -- (120:.5) -- (180:.5) -- (240:.5) -- (300:.5) -- (360:.5);
    \draw[fill = white] (0:.5) circle (1.5pt);
    \draw[fill = white] (60:.5) circle (1.5pt);
    \draw[fill = white] (120:.5) circle (1.5pt);
    \draw[fill = white] (180:.5) circle (1.5pt);
    \draw[fill = white] (240:.5) circle (1.5pt);
    \draw[fill = white] (300:.5) circle (1.5pt);
    \draw[fill = white] (360:.5) circle (1.5pt);
   \end{scope}
   }
  \foreach \x in {1,...,3}{
   \begin{scope}[xshift = {\x cm}, yshift = 0 cm]
    \draw (0:.5) -- (60:.5) -- (120:.5) -- (180:.5) -- (240:.5) -- (300:.5) -- (360:.5);
    \draw[fill = white] (0:.5) circle (1.5pt);
    \draw[fill = white] (60:.5) circle (1.5pt);
    \draw[fill = white] (120:.5) circle (1.5pt);
    \draw[fill = white] (180:.5) circle (1.5pt);
    \draw[fill = white] (240:.5) circle (1.5pt);
    \draw[fill = white] (300:.5) circle (1.5pt);
    \draw[fill = white] (360:.5) circle (1.5pt);
   \end{scope}
  }
  \foreach \x in {0,...,3}{
   \begin{scope}[xshift = .5 cm + \x cm, yshift = 0.86602540378 * 1 cm]
    \draw (0:.5) -- (60:.5) -- (120:.5) -- (180:.5) -- (240:.5) -- (300:.5) -- (360:.5);
    \draw[fill = white] (0:.5) circle (1.5pt);
    \draw[fill = white] (60:.5) circle (1.5pt);
    \draw[fill = white] (120:.5) circle (1.5pt);
    \draw[fill = white] (180:.5) circle (1.5pt);
    \draw[fill = white] (240:.5) circle (1.5pt);
    \draw[fill = white] (300:.5) circle (1.5pt);
    \draw[fill = white] (360:.5) circle (1.5pt);
   \end{scope}
   }


   \begin{scope}[xshift = 2 cm]
    \fill[black!25] (0:.5) -- (60:.5) -- (120:.5) -- (180:.5) -- (240:.5) -- (300:.5) -- (360:.5);
    \draw[fill = white] (0:.5) circle (1.5pt);
    \draw[fill = white] (60:.5) circle (1.5pt);
    \draw[fill = white] (120:.5) circle (1.5pt);
    \draw[fill = white] (180:.5) circle (1.5pt);
    \draw[fill = white] (240:.5) circle (1.5pt);
    \draw[fill = white] (300:.5) circle (1.5pt);
    \draw[fill = white] (360:.5) circle (1.5pt);

    \draw(0:.3) node {1};
    \draw (60:.3) node {{-1}};
    \draw (120:.3) node {1};
    \draw (180:.3) node {{-1}};
    \draw (240:.3) node {1};
    \draw (300:.3) node {{-1}};
   \end{scope}
\end{tikzpicture}
\end{figure}
It is easy to see that for a given hexagon
$$ H = \{x_0,x_1,\dots,x_5\} = \{ z_0 + e^{k\pi i/3} \mid k = 0,1,\dots,5 \} $$
with $z_0 \in (2 \ZZ + 1) w_1 + (2 \ZZ + 1) w_2$, the function
\begin{equation}
 \label{eq:Eigenfunction_Kagome}
  F_H(x)
  :=
  \begin{cases}
    0, & \text{if $x \in \cV \backslash H$,}\\
    (-1)^k, & \text{if $x \in H$,}
  \end{cases}
\end{equation}
satisfies $-\Delta_K F_H = 3/2 F_H$.
The following result tells us that, up to (infinite) linear combinations, these are the only $\ell^2$-eigenfunctions of the discrete Laplacian on the Kagome lattice:
\begin{proposition}[\cite{LenzPPV-09} Prop.~3.1]
\label{prop:Kagome1}
\begin{enumerate}[(a)]
 \item
 Let $F : \cV \to \CC$ be a finitely supported eigenfunction of $\Delta_K$.
 Then $- \Delta_K F = 3/2 F$ and $F$ is a linear combination of finitely many eigenfunctions $F_H$ of the above type~\eqref{eq:Eigenfunction_Kagome}.
 \item
 Let $H_i$, $i = 1, \dots,k$ be a collection of distinct, albeit not necessarily disjoint, hexagons.
 Then the set $F_{H_1}, \dots, F_{H_k}$ is linearily independent.
 \item
 If $g \in \ell^2(\cV)$ satisfies $- \Delta_K g = E g$, then $E = 3/2$.
 \item
 The space of $\ell^2(\cV)$-eigenfunctions to the eigenvalue $-3/2$ is spanned by finitely supported eigenfunctions.
\end{enumerate}
\end{proposition}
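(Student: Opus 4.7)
The plan is to exploit the fact that the Kagome lattice $K$ is the \emph{line graph} of the honeycomb lattice $\mathcal{H}$. Identifying $\cV$ with the edge set $E(\mathcal{H})$, let $B : \ell^2(V(\mathcal{H})) \to \ell^2(\cV)$ be the unsigned incidence operator $(Bf)(e) = f(e_+) + f(e_-)$. A direct edge-count yields $A_K = B B^\ast - 2 I$ on $\ell^2(\cV)$, where $A_K$ is the unnormalised adjacency matrix. Since each vertex of $K$ has degree $4$, we have $\Delta_K = I - \frac{1}{4} A_K$, and hence $\Delta_K F = \frac{3}{2} F$ if and only if $B^\ast F = 0$.

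Part~(b) is quickest and does not need the line graph. Every vertex of $K$ lies on exactly two hexagons, so for distinct hexagons $H_1, \dots, H_k$ the hexagon $H_j$ whose center is extremal with respect to a generic linear functional on $\RR^2$ has a vertex $x$ whose second incident hexagon $H^\ast$ lies strictly further out and is therefore absent from the collection; evaluating $\sum_i c_i F_{H_i} = 0$ at $x$ gives $c_j = 0$, and iteration completes the argument.

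For (c) and (d) I would rely on two facts: (i) each $F_H$ lies in $\ker B^\ast$ (a short check: at a non-hexagon vertex $v$ of $\mathcal{H}$, the sum $(B^\ast F_H)(v)$ is empty, while at a vertex of the hexagon the two $H$-edges at $v$ carry opposite signs and the third incident edge carries zero), and (ii) the honeycomb adjacency matrix $A_\mathcal{H}$ has purely absolutely continuous $\ell^2$-spectrum, a standard Floquet computation with a two-point unit cell and Bloch matrices whose eigenvalues $\pm \lvert 1 + e^{i\theta_1} + e^{i\theta_2} \rvert$ form two non-constant real-analytic bands. Suppose now that $F \in \ell^2(\cV)$ satisfies $\Delta_K F = \lambda F$ with $\lambda \neq \frac{3}{2}$. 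Then $B B^\ast F = (6 - 4\lambda) F \neq 0$, and setting $G := B^\ast F / (6 - 4\lambda) \in \ell^2(V(\mathcal{H}))$ gives $F = B G$. Applying $B^\ast$ and using $B^\ast B = A_\mathcal{H} + 3 I$ produces $A_\mathcal{H} G = (3 - 4\lambda) G$, an $\ell^2$-eigenvalue of $A_\mathcal{H}$, contradicting (ii). This forces $\lambda = \frac{3}{2}$, which is (c). For (d), the $\frac{3}{2}$-eigenspace equals $\ker B^\ast$, that is, the $\ell^2$-cycle space of $\mathcal{H}$. A classical fact of planar algebraic topology identifies this space with the $\ell^2$-closure of the span of bounded face boundaries; the bounded faces of $\mathcal{H}$ are the hexagons, and under the correspondence $E(\mathcal{H}) \leftrightarrow \cV$ the alternating $\pm 1$-function on the boundary of a honeycomb hexagon is exactly $F_H$.

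For (a), a finitely supported eigenfunction $F$ belongs to $\ell^2$, so by (c) it has eigenvalue $\frac{3}{2}$, and by (d) it lies in $\ker B^\ast$. It remains to upgrade ``$\ell^2$-closure of finite combinations'' to ``finite combination'': this is again a combinatorial topology argument, namely that the bounded region of $\RR^2$ enclosed by $\mathrm{supp}(F)$ is covered by finitely many hexagonal faces, and expanding $F$ in the corresponding finite face-boundary basis yields the representation.

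The \emph{main obstacle} is justifying the two nonelementary inputs: fact (ii), the absolute continuity of the honeycomb spectrum, and the algebraic-topology identification of the $\ell^2$-cycle space with the $\ell^2$-closed span of bounded face boundaries. The former is a routine but explicit Floquet/Bloch computation; the latter, although classical, requires a Jordan-curve argument to separate bounded from unbounded faces and a density step to pass from finite to $\ell^2$ combinations. Once these are in hand, everything else reduces to the linear-algebraic identity $A_K = B B^\ast - 2 I$ and the direct verification $\Delta_K F_H = \frac{3}{2} F_H$.
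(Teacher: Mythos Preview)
The paper itself does not give a proof of this proposition; it is quoted verbatim from \cite{LenzPPV-09} without argument, so there is no in-paper proof to compare against. I can therefore only assess your proposal on its own merits.

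Your line-graph approach is sound and is in fact the standard way to analyse the Kagome Laplacian. The identity $A_K = BB^\ast - 2I$ is correct (it is the general line-graph relation), and since the honeycomb lattice is bipartite, your unsigned $\ker B^\ast$ really does coincide with the oriented cycle space $\ker\partial$ after the diagonal sign change on the two colour classes; this is the point that makes the ``$\ell^2$-cycle space $=$ closed span of face boundaries'' identification available, and it is worth making explicit. The reduction of (c) to the absence of $\ell^2$-eigenvalues for $A_{\mathcal H}$ is clean, and the Floquet computation you cite is indeed routine. Your argument for (b) via an extremal hexagon is correct once you observe that the six hexagons sharing a vertex with a given hexagon have centres arranged hexagonally around it, so for any generic linear functional at least one of them lies strictly further out.

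The one place to be careful is part (a). Saying ``the bounded region enclosed by $\operatorname{supp}(F)$ is covered by finitely many hexagons and one expands in that basis'' hides the actual work: you must show that a \emph{finitely supported} element of $\ker B^\ast$ is a \emph{finite} $\ZZ$-linear (hence $\CC$-linear) combination of face boundaries. This is the statement that the finite-support cycle space of a locally finite planar graph is generated by its bounded faces, which for the honeycomb follows from a straightforward Euler/Jordan argument, but it is a genuine step and should be stated as such rather than folded into (d). Your invocation of (d) for this purpose is slightly circular in presentation, since the $\ell^2$-closure in (d) does not by itself give finiteness; the finite-support case is logically prior and is what one actually proves first.
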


The next proposition shows that, similarly to the situation encountered in percolation, these finitely supported eigenfunctions give rise to a jump in the IDS.

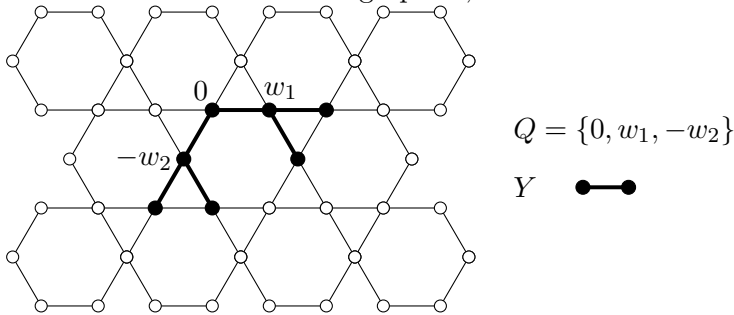
\begin{figure}
\label{fig:Kagome2}
\caption{Combinatorial fundamental domain $Q = \{0, w_1, -w_2 \}$ of the Kagome lattice and the metric subgraph $Y$, introduced in Section~\ref{sect:quantum_graphs}}
 \begin{tikzpicture}[scale = 1.5]
  \foreach \x in {0,...,3}{
   \begin{scope}[xshift = .5 cm + \x cm, yshift = -0.86602540378 * 1 cm]
    \draw (0:.5) -- (60:.5) -- (120:.5) -- (180:.5) -- (240:.5) -- (300:.5) -- (360:.5);
    \draw[fill = white] (0:.5) circle (1.5pt);
    \draw[fill = white] (60:.5) circle (1.5pt);
    \draw[fill = white] (120:.5) circle (1.5pt);
    \draw[fill = white] (180:.5) circle (1.5pt);
    \draw[fill = white] (240:.5) circle (1.5pt);
    \draw[fill = white] (300:.5) circle (1.5pt);
    \draw[fill = white] (360:.5) circle (1.5pt);
   \end{scope}
   }
  \foreach \x in {1,...,3}{
   \begin{scope}[xshift = {\x cm}, yshift = 0 cm]
    \draw (0:.5) -- (60:.5) -- (120:.5) -- (180:.5) -- (240:.5) -- (300:.5) -- (360:.5);
    \draw[fill = white] (0:.5) circle (1.5pt);
    \draw[fill = white] (60:.5) circle (1.5pt);
    \draw[fill = white] (120:.5) circle (1.5pt);
    \draw[fill = white] (180:.5) circle (1.5pt);
    \draw[fill = white] (240:.5) circle (1.5pt);
    \draw[fill = white] (300:.5) circle (1.5pt);
    \draw[fill = white] (360:.5) circle (1.5pt);
   \end{scope}
  }
  \foreach \x in {0,...,3}{
   \begin{scope}[xshift = .5 cm + \x cm, yshift = 0.86602540378 * 1 cm]
    \draw (0:.5) -- (60:.5) -- (120:.5) -- (180:.5) -- (240:.5) -- (300:.5) -- (360:.5);
    \draw[fill = white] (0:.5) circle (1.5pt);
    \draw[fill = white] (60:.5) circle (1.5pt);
    \draw[fill = white] (120:.5) circle (1.5pt);
    \draw[fill = white] (180:.5) circle (1.5pt);
    \draw[fill = white] (240:.5) circle (1.5pt);
    \draw[fill = white] (300:.5) circle (1.5pt);
    \draw[fill = white] (360:.5) circle (1.5pt);
   \end{scope}
   }

\begin{scope}[xshift = 2 cm]
   \draw (60:.7) node {$w_1$};
   \draw (120:.7) node {$0$};
   \draw (180:.85) node {$-w_2$};

   \draw[fill = black] (60:.5) circle (1.75pt);
   \draw[fill = black] (120:.5) circle (1.75pt);
   \draw[fill = black] (180:.5) circle (1.75pt);
   \draw[fill = black] (240:.5) circle (1.75pt);
   \begin{scope}[xshift = 1cm]
    \draw[fill = black] (120:.5) circle (1.75pt);
    \draw[fill = black] (180:.5) circle (1.75pt);
   \end{scope}
   \begin{scope}[xshift = -1cm]
    \draw[fill = black] (300:.5) circle (1.75pt);
   \end{scope}
   \draw[line width = 1.5pt] (0:.5) -- (60:.5) -- (120:.5) -- (180:.5) -- (240:.5);
   \begin{scope}[xshift = .5cm , yshift = 0.86602540378 * 1 cm]
    \draw[line width = 1.5pt] (240:.5) -- (300:.5);
   \end{scope}
   \begin{scope}[xshift = -1cm]
    \draw[line width = 1.5pt] (300:.5) -- (0:.5);
   \end{scope}
  \end{scope}

 \begin{scope}[xshift = 4cm]
  \draw[anchor = west] (0.3,.25) node {$Q = \{ 0, w_1, - w_2 \}$};
  \draw[fill = black] (1,-.25) circle (1.75pt);
  \draw[fill = black] (1.4,-.25) circle (1.75pt);
  \draw[line width = 1.5pt] (1,-.25) -- (1.4,-.25);
  \draw[anchor = west] (0.3,-.25) node {$Y$};
 \end{scope}
\end{tikzpicture}
\end{figure}

There is a $\ZZ^2$-action on the Kagome lattice via $T: \ZZ^2 \times \cV \to \cV$ via $T(\gamma,x) = T_\gamma(x) = 2 \gamma_1 w_1 + 2 \gamma_2 w_2 + x$ with combinatorial fundamental domain $Q = \{ 0, w_1, -w_2 \}$.
Any box $\Lambda_L \subset \ZZ^2$ gives rise to a set
\[
 \Lambda_{Q,L} := \bigcup_{\gamma \in \Lambda_L} T_\gamma(Q).
\]
Then Proposition~\ref{prop:Kagome1} has the following consequence:

\begin{proposition}[\cite{LenzPPV-09} Prop.3.3]
 \label{prop:Kagome2}
 Let $K$ be the Kagome lattice with the $\ZZ^2$ action introduced above.
 Then the IDS
 \[
  N(E)
  =
  \lim_{L \to \infty}
  \frac{1}{\lvert \Lambda_{Q,L} \rvert}
    \sharp \{ \text{Eigenfunctions of}\ - \Delta_K \mid_{\Lambda_{Q,L}} \ \text{with eigenvalue}\ \leq E \}
 \]
 exists and has the following properties:
 $N$ vanishes on $(- \infty, 0]$, is continuous on $\RR \setminus \{3/2\}$ and has a jump of size $1/3$ at $E = 3/2$.
 Moreover, $N$ is strictly monotone increasing on $[0,3/2]$ and $N(E) = 1$ for $E \geq 3/2$.
\end{proposition}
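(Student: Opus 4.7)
The plan is to combine Floquet-Bloch theory with the eigenfunction classification from Proposition~\ref{prop:Kagome1}. Since $-\Delta_K$ commutes with the $\ZZ^2$-translation action $T$, and $|Q|=3$, the operator admits a direct integral decomposition $-\Delta_K \cong \int^\oplus_{\mathbb{T}^2} H(\theta)\, d\theta$ over the Brillouin torus $\mathbb{T}^2$, with an analytic family $H(\theta)$ of Hermitian $3 \times 3$ matrices acting on $\CC^3 \cong \ell^2(Q)$. Label the eigenvalues $E_1(\theta) \leq E_2(\theta) \leq E_3(\theta)$. A standard Pastur--Shubin type argument for periodic operators yields both existence of the limit defining $N$ and the representation
\[
 N(E) = \frac{1}{3}\sum_{j=1}^3 \bigl| \{\theta \in \mathbb{T}^2 : E_j(\theta) \leq E\} \bigr|,
\]
where $|\cdot|$ denotes normalized Haar measure on $\mathbb{T}^2$.

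Next I would identify the flat band and compute the jump. Proposition~\ref{prop:Kagome1}(c) forbids any flat band at $c \neq 3/2$: if some $E_j(\cdot) \equiv c$, then suitable integrable superpositions of the Bloch eigenmodes over $\theta$ produce an $\ell^2$-eigenfunction at $c$, contradicting (c) unless $c = 3/2$. Conversely, Proposition~\ref{prop:Kagome1}(d) shows that the $3/2$-eigenspace is spanned by the hexagon functions $F_H$, and since each $\ZZ^2$-fundamental cell contains exactly one hexagon, the flat-band multiplicity is precisely one. Hence $E_3(\theta) \equiv 3/2$, while $E_1$ and $E_2$ are non-constant real-analytic functions whose level sets have Lebesgue measure zero on $\mathbb{T}^2$; therefore the contributions of $E_1, E_2$ to $N$ are continuous in $E$, and $E_3$ contributes a single jump of size $1/3$ at $E = 3/2$, giving the continuity of $N$ on $\RR \setminus \{3/2\}$ together with the jump.

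For the remaining shape of $N$, nonnegativity of $-\Delta_K$ as a normalized graph Laplacian gives $E_j \geq 0$, so $N$ vanishes on $(-\infty, 0]$; at $\theta = 0$ the constant function on $Q$ lies in the kernel of $H(0)$, so $E_1(0) = 0$. Strict monotonicity of $N$ on $[0, 3/2]$ together with $N(E) = 1$ for $E \geq 3/2$ then reduce to the assertion that the joint image $E_1(\mathbb{T}^2) \cup E_2(\mathbb{T}^2)$ equals $[0, 3/2]$ with strictly positive level-set density throughout. The main obstacle I anticipate is exactly this last step: explicit diagonalization of the $3 \times 3$ Bloch matrix $H(\theta)$ and verification that the two dispersive bands sweep out all of $[0, 3/2]$ without gaps and without further flat pieces. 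Once this routine but nontrivial computation is carried out, all remaining properties in the proposition follow immediately from the Floquet representation of $N$ above.
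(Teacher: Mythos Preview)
The paper does not actually supply a proof of this proposition: it introduces it as a consequence of Proposition~\ref{prop:Kagome1}, cites \cite{LenzPPV-09} for the details, and afterwards only remarks that the trace-per-unit-volume formula
\[
  N(E) = \frac{1}{\lvert Q \rvert}\, \EE\bigl[\operatorname{Tr}\, \chi_Q\, \chi_{(-\infty,E]}(\Delta_K)\bigr]
\]
is crucial for analysing the jumps. Your Floquet--Bloch outline is the standard route and is essentially what one finds in the cited source; in particular your use of parts (c) and (d) of Proposition~\ref{prop:Kagome1} to pin down that there is exactly one flat band, located at $3/2$, is the intended mechanism, and the hexagon count (one per fundamental cell of three vertices) correctly yields the jump size $1/3$.

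One small point to tighten: you write ``Hence $E_3(\theta)\equiv 3/2$'' before you have established that the spectrum does not exceed $3/2$. With the ordering $E_1\le E_2\le E_3$ the flat band could a priori be any $E_j$; that it is the top band is equivalent to $N(E)=1$ for $E\ge 3/2$, which is part of the conclusion, not an input. This is harmless because you already flag the explicit diagonalisation of $H(\theta)$ as the remaining step, and that computation simultaneously shows the two dispersive bands fill $[0,3/2]$ and that nothing lies above $3/2$. Just be careful not to invoke the labelling $E_3$ before that computation is done; until then, speak of ``the flat band at $3/2$'' rather than ``$E_3$''.
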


For the analysis of the IDS, in particular its jumps, an alternative formula is sometimes crucial
\[
  N(E)   =
  \frac{1}{\lvert Q \rvert}     \EE \left [\text{Tr} \, \chi_{Q} \, \chi_{(-\infty,E]}(\Delta_K) \right].
 \]
Here $\chi_{Q}$ denotes the multiplication operator with the indicator function of the fundamental cell $Q$,
whereas $\chi_{(-\infty,E]}(\Delta_K)$ is the spectral projector. Note that their product has finite trace.

An essential difference between the $\ZZ^2$-lattice and the Kagome lattice can be seen via a suitable notion of discrete curvature, defined on certain planar graphs called planar tessellations: A \emph{planar tessellation} $T = (\cV,\cE,\cF)$ is given by a realization of a planar graph with vertex set $\cV$, edge set $\cE$, and face set $\cF$, satisfying the following properties:
\begin{enumerate}[i)]
 \item
  Any edge is a side of precisely two different faces.
 \item
  Any two faces are disjoint or have precisely either a vertex or a side in common.
 \item
  Any face $f \in \cF$ is a polygon (i.e., homeomorphic to a closed disk) with finitely many sides, where $|f|$ denotes the number of sides.
 \item
  Every vertex $v \in \cV$ has finite degree $|v|$.
\end{enumerate}
We first define a curvature notion concentrated on the vertices. For this, we view the faces
adjacent to a vertex $v \in \cV$ as being represented by regular Euclidean polygons, that is, if $|f|=k$ its representation as regular $k$-gon has interior angles $(k-2)\pi/k$. The \emph{vertex curvature} $\kappa(v)$ in the vertex $v \in \cV$ is then defined via the angle defect/excess to $2\pi$ of the polygons around $v$:
$$ 2 \pi \kappa(v) = 2\pi - \sum_{f \ni v} \frac{|f|-2}{|f|} \pi = 2 \pi \left( 1 - \frac{|v|}{2} +
\sum_{f \ni v} \frac{1}{|f|} \right). $$
Unfortunately, this notion does not distinguish the Kagome lattice and the Euclidean lattice $\ZZ^2$, since both tessellations have vanishing vertex curvature. A finer curvature notion is defined on the corners (cf. \cite{BauesP-06}). A \emph{corner} of $T$ is a pair $(v,f) \in \cV \times \cF$ such that $v$ is a vertex of the polygon $f$.
The set of all corners of $T$ is denoted by $\cC = \cC(T)$. Then the \emph{corner curvature} of the corner $(v,f) \in \cC(T)$ is defined as
\[
 \kappa(v,f) := \frac{1}{|v|} + \frac{1}{|f|} - \frac{1}{2}.
\]
It is easy to see that we have
$$
\kappa(v) = \sum_{f \ni v} \kappa(v,f). $$
While $\ZZ^2$ has vanishing corner curvature in all corners, the Kagome lattice has corners with positive and negative corner curvature. There is the following general result:

\begin{theorem}[\cite{KlassertLPS-06}]
\label{thm:nonposcurv}
 Let $T = (\cV,\cE,\cF)$ be a planar tessellation with non-positive corner curvature, that is, $\kappa(v,f) \le 0$ for all $(v,f) \in \cC(T)$. Then $\Delta_T$ does not admit finitely supported eigenfunctions.
\end{theorem}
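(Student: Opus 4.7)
I would argue by contradiction. Assume that $F:\cV\to\CC$ is a nontrivial finitely supported eigenfunction of $\Delta_T$ with eigenvalue $\lambda$ and finite nonempty support $S=\mathrm{supp}(F)$. The strategy is to extract combinatorial geometric information from $S$ that is incompatible with the hypothesis $\kappa(v,f)\le 0$ on every corner.

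\textbf{Setting up the support region.} Let $\cF_S\subset\cF$ be the (finite) collection of faces containing at least one vertex of $S$, and let $\Omega:=\bigcup_{f\in\cF_S}f\subset\RR^2$ be the corresponding closed polygonal region. By finiteness of $S$ and local finiteness of the tessellation, $\Omega$ has finitely many connected components, each a bounded planar polygon; restricting attention to one component I may assume $\Omega$ is connected. Its boundary $\partial\Omega$ is a disjoint union of simple closed edge cycles of $T$. Call a vertex $v\in S$ on $\partial\Omega$ an \emph{inner boundary vertex}, and a vertex $w\in\partial\Omega\setminus S$ an \emph{outer boundary vertex}; both types exist because $S$ is finite while $T$ is infinite.

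\textbf{Discrete Gauss--Bonnet.} The main tool is a combinatorial Gauss--Bonnet identity for the finite polygonal region $\Omega$ in a planar tessellation, of the schematic form
\[
  \sum_{(v,f)\in\cC(\Omega)}\kappa(v,f)\;+\;\tfrac{1}{2}\sum_{v\in\partial\Omega}\bigl(2-d_\Omega(v)\bigr)\;=\;\chi(\Omega),
\]
where $\cC(\Omega)$ collects all corners $(v,f)$ with $f\in\cF_S$, and $d_\Omega(v)$ counts the faces of $\cF_S$ incident to a boundary vertex $v$. Because each component of $\Omega$ is topologically a disk (possibly with holes), $\chi(\Omega)>0$. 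Under the standing hypothesis the first sum on the left is nonpositive, so the boundary turning term must be strictly positive. This forces the existence of at least one boundary vertex $v_0\in\partial\Omega$ that is a \emph{convex corner}, i.e.\ meets only one face of $\cF_S$ across a pair of boundary edges.

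\textbf{Closing the contradiction via the eigenvalue equation.} At an outer boundary vertex $w\notin S$ the equation $(-\Delta_T F)(w)=\lambda F(w)=0$ together with $F(w)=0$ yields $\sum_{y\sim w,\,y\in S}F(y)=0$, and at an inner boundary vertex $v\in S$ one obtains a similar linear relation among the values of $F$ on $\{y\sim v\}\cap S$. Applying these relations to a convex boundary corner $v_0$ produced above, I would show that they can only be consistent if the local face/vertex degrees satisfy $\tfrac1{|v_0|}+\tfrac1{|f|}>\tfrac12$, i.e.\ $\kappa(v_0,f)>0$ for the unique incident face $f\in\cF_S$, directly contradicting the assumption. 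The \textbf{main obstacle} is precisely this last step: translating the boundary linear relations coming from the eigenvalue equation into the strict positivity of some corner curvature. It requires tracking how the boundary polygon ``turns'' around $\Omega$ and matching each unit of Euclidean turning against a local corner contribution, so that a deficit somewhere on $\partial\Omega$ must be compensated by a genuinely convex corner $(v_0,f)$ with $\kappa(v_0,f)>0$; this is the combinatorial heart of the argument.
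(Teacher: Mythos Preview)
The paper itself does not supply a proof of this theorem; it merely quotes the result from \cite{KlassertLPS-06}. So there is no ``paper's own proof'' to compare against here. That said, your proposal contains two genuine gaps that would need to be closed before it could stand as a proof.

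First, the Euler characteristic step is incorrect as written. You assert that ``each component of $\Omega$ is topologically a disk (possibly with holes), $\chi(\Omega)>0$.'' But a disk with $k$ holes has $\chi=1-k$, so as soon as your region $\Omega$ is not simply connected you may have $\chi(\Omega)\le 0$, and the inequality you draw from Gauss--Bonnet collapses. There is no a priori reason why the face-hull of a finite support set $S$ should be simply connected; in fact for tessellations with large faces one readily produces supports whose face-hull is an annulus. You would need either a separate argument ruling out holes, or a refinement that handles the boundary contributions from inner boundary cycles with the correct sign.

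Second, and more seriously, you explicitly flag the final step as the ``main obstacle'' and do not carry it out. The claim that the eigenvalue relations at a convex boundary corner $v_0$ force $\tfrac{1}{|v_0|}+\tfrac{1}{|f|}>\tfrac12$ is the entire content of the theorem, and nothing in your write-up explains why the linear relation $\sum_{y\sim v_0}F(y)=(|v_0|-\lambda|v_0|)F(v_0)$ should constrain the \emph{combinatorics} $|v_0|,|f|$ rather than just the \emph{values} of $F$. Indeed, for a generic convex boundary vertex the equation just relates $F(v_0)$ to its one or two neighbours in $S$, with no contradiction in sight. The argument in \cite{KlassertLPS-06} proceeds differently: it does use a Gauss--Bonnet type count, but the crucial point is to locate a boundary vertex of $S$ with \emph{at most one} neighbour inside $S$ (this is where nonpositive corner curvature enters, via a careful analysis of how faces meet along $\partial\Omega$), and then to iterate, peeling vertices off $S$ one by one until $S$ is empty. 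Your outline does not isolate this ``one neighbour'' property, which is what actually makes the eigenvalue equation bite.
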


Note that Theorem~\ref{thm:nonposcurv} gives another proof of the fact that $\ZZ^2$ does not admit finitely supported eigenfunctions.

\begin{remark}
 In fact, Theorem \ref{thm:nonposcurv} holds for a much larger class of operators, called elliptic or nearest neighbour operators.
 Furthermore, it has been generalised to arbitrary connected, locally finite planar graphs in \cite{Keller-11} and to so-called polygonal complexes with planar substructures in \cite{KellerPP}.
\end{remark}

\section{Compactly supported eigenfunctions on quantum graphs}
\label{sect:quantum_graphs}
In this section, we introduce quantum graphs and study properties of the IDS in the particular example of the quantum graph associated to the Kagome lattice both in the equilateral and random setting. The results in the equilateral setting are based on the appearence of compactly supported eigenfunctions. The main reference for this section is \cite{LenzPPV-09}, providing further details. We start with some relevant definitions.

\begin{definition}
A \emph{metric graph} $(X,\ell)$ associated to a directed graph $G=(\cV,\cE)$ with maps $\partial_{\pm}: \cE \to \cV$ describing the direction of the edges (i.e., $\partial_-(e)$ is the source node and $\partial_+(e)$ the target node of the edge $e \in \cE$) consists of disjoint intervals $I_e = [0,\ell(e)]$ for each edge $e \in \cE$ which are identified at their end points in agreement with $G$ (for example, $0 \in I(e)$ is identified with $\ell(e') \in I_{e'}$ if $\partial_-(e) = \partial_+(e'))$.
The vertices and edges of $(X,\ell)$ are denoted by $\cV(X)$ and $\cE(X)$.
\end{definition}

Note that every metric graph $(X,\ell)$ is automatically also a metric space. The (one-dimensional) \emph{volume} of a metric subgraph $(X_0,\ell)$ of $(X,\ell)$ with a finite number of edges is defined as
\[
 \vol(X_0,\ell) = \sum_{e \in \cE(X_0)} \ell(e),
\]
and the \emph{boundary} $\partial X_0$ consists of all vertices of $X_0$ which are adjacent to vertices in $\cV(X)\backslash \cV(X_0)$.

Functions on a given metric graph $(X,\ell)$ are functions $f = \bigoplus_{e \in \cE} f_e$ with $f_e: I_e \to \CC$, and there is a natural Laplacian defined as follows:
\[
 \Delta_{X,\ell}  f = \bigoplus_{e \in \cE} f_e''.                                                                                                                                                                                                                                                                                              \]
A metric graph $(X,\ell)$ equipped with the Laplacian $\Delta_{X,\ell}$ is called a \emph{quantum graph}.

The relevant function spaces $C(X), L^2(X)$, and Sobolev spaces $H^{2,2}(X)$ are defined in a natural way (for details, see, e.g., \cite{LenzPPV-09}).
Note that for
\[
  H^{2,2}(X) \ni f = \bigoplus_{e \in \cE} f_e \in \bigoplus_{e \in \cE} H^{2,2}(I_e),
\]
the values $f_e(v), f_e'(v)$ for all $e \in \cE$ and $v \in \{\partial_\pm(e)\}$ are well defined.
To guarantee self-adjointness of the Laplacian, we assume a uniform positive lower bound on the edge lengths and assume appropriate vertex conditions for the functions $f_e$ at their end-points.
For simplicity, we only consider \emph{Kirchhoff vertex conditions} (other vertex conditions can be found, e.g., in \cite{LenzPPV-09}):
For all $v \in \cV$, we require
\begin{enumerate}[i)]
 \item $f_e(v) = f_{e'}(v)$ for all $e,e' \in \cE$ adjacent to $v$,
 \item $\sum_{\partial_+(e)=v} f_e'(v) = \sum_{\partial_-(e)=v} f_e'(v)$.
\end{enumerate}
Later, when we define the IDS via an exhaustion procedure, we will also need \emph{Dirichlet conditions} on certain vertices $v \in \cV$, which are defined by $f_e(v)=0$ for all $e \in \cE$ adjacent to $v$.
In this survey, we restrict our considerations to the Laplacian, but the results hold also in the more general setting of Schr\"odinger operators.

In the case of an \emph{equilateral} quantum graph, there is a well-known relation between the spectral components of the Laplacian $\Delta_{X,\ell}$ and the discrete graph Laplacian $\Delta_G$, by associating to a function $f \in H^{2,2}(X)$ with Kirchhoff boundary conditions the function $F \in \ell^2(\cV)$ via $F(v) = f(v)$:

\begin{proposition} (see, e.g., \cite{Cattaneo-97, Pankrashkin-06, Post-08})
Let $(X,\ell)$ with Kirchhoff Laplacian $\Delta_{X,\ell}$ be a quantum graph associated to the combinatorial graph $G=(\cV,\cE)$ with $l(e)=1$ for all $e \in \cE$ and $\Delta_G$ be the normalized discrete Laplacian. Then we have the following correspondence between the spectra:
\[
  E \in \sigma_\bullet(\Delta_{X,\ell})
  \quad
  \Longleftrightarrow
  \quad
  1-\cos(\sqrt{E}) \in \sigma_\bullet(\Delta_G)
\]
for all $E \not\in \Sigma^D = \{ (\pi k)^2 \mid k =1,2,\dots \}$, where $\bullet \in \{\emptyset,\mathrm{pp},\mathrm{disc},\mathrm{ess},\mathrm{ac},\mathrm{sc},\mathrm{p}\}$.
\end{proposition}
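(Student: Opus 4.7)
The plan is to construct an explicit edge-by-edge correspondence between eigenfunctions of the two Laplacians, and then to promote this correspondence to the level of spectral components.

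\textbf{Step 1 (edge parametrization).} On a given edge $e$ of length $1$ with endpoints $v,w$, parametrized by $t \in [0,1]$ so that $v$ sits at $t=0$, the eigenvalue equation $\Delta_{X,\ell} f = Ef$ becomes the ODE $f_e'' = E f_e$. Setting $k=\sqrt{E}$ with an appropriate branch, its solutions are combinations of $\sin(kt)$ and $\cos(kt)$. For $E \notin \Sigma^D$, i.e.\ $\sin k \neq 0$, the boundary-value problem with prescribed $F(v):=f_e(0)$, $F(w):=f_e(1)$ has the unique solution
\[
 f_e(t) = F(v)\,\frac{\sin(k(1-t))}{\sin k} + F(w)\,\frac{\sin(kt)}{\sin k}.
\]
The continuity Kirchhoff condition i) is exactly what makes $F\colon \cV \to \CC$ well defined.

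\textbf{Step 2 (Kirchhoff ii) becomes the discrete equation).} A direct calculation gives the outward derivative of $f_e$ at $v$,
\[
 f_e'(0) = \frac{k}{\sin k}\bigl(F(w) - F(v)\cos k\bigr).
\]
Summing over the $|v|$ edges incident to $v$ and imposing Kirchhoff ii) yields
\[
 \sum_{w \sim v} F(w) = |v|\cos(k)\,F(v),
 \qquad\text{equivalently,}\qquad
 \Delta_G F(v) = \bigl(1-\cos\sqrt{E}\bigr)\,F(v).
\]
Conversely, every solution $F$ of this difference equation reassembles, via the formula in Step~1, into a function $f$ in the Kirchhoff domain of $\Delta_{X,\ell}$ with $\Delta_{X,\ell} f = Ef$. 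Computing $\int_0^1 |f_e(t)|^2\,dt$ explicitly as a nondegenerate (for $E \notin \Sigma^D$) quadratic form in $F(v), F(w)$ shows that the correspondence $f \leftrightarrow F$ is norm-equivalent, so it transfers pure point and discrete spectrum faithfully between $\Delta_{X,\ell}$ and $\Delta_G$ under the map $E \mapsto 1 - \cos\sqrt{E}$.

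\textbf{Step 3 (general spectral components).} To cover the remaining values of $\bullet \in \{\mathrm{ess},\mathrm{ac},\mathrm{sc},\mathrm{p}\}$, I would upgrade Step~2 to a Krein-type resolvent formula of the sort used in the references cited in the statement: for $E \notin \Sigma^D$, write $(\Delta_{X,\ell}-E)^{-1}$ as the orthogonal sum of the edgewise Dirichlet resolvents (whose spectrum is exactly $\Sigma^D$ and is independent of the combinatorial graph) plus a boundary correction expressed in terms of $(\Delta_G - (1-\cos\sqrt{E}))^{-1}$ through the trace and Poisson operators on $\cV$. Since $E \mapsto 1-\cos\sqrt{E}$ is real analytic and a local diffeomorphism off the discrete set $\Sigma^D$, this resolvent identity transports the spectral measures, and hence each individual spectral type, from $\Delta_G$ to $\Delta_{X,\ell}$ away from $\Sigma^D$.

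\textbf{Main obstacle.} Step~2 is entirely elementary; the real work is in Step~3, namely ensuring that the spectral type is preserved under the transformation $E \mapsto 1-\cos\sqrt{E}$ and that the Dirichlet eigenvalues $\Sigma^D$, which correspond to eigenfunctions supported on single edges and are completely invisible to $\Delta_G$, are correctly excluded. The Krein resolvent machinery from \cite{Cattaneo-97, Pankrashkin-06, Post-08} is designed precisely for this, and invoking it in a black-box fashion is how I would close the argument rather than redoing the spectral analysis from first principles.
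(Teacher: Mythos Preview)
The paper does not give its own proof of this proposition; it is stated with the tag ``(see, e.g., \cite{Cattaneo-97, Pankrashkin-06, Post-08})'' and immediately used. Your outline is precisely the standard route taken in those references: solve the edge ODE explicitly, observe that Kirchhoff condition~ii) collapses to the discrete eigenvalue equation $\Delta_G F = (1-\cos\sqrt{E})F$, and then invoke a Krein-type resolvent/boundary-triple identity to transport spectral types. So there is nothing to compare against, and your Steps~1--2 are correct and your Step~3 correctly locates where the genuine analytic work lies.

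One correction is needed in your ``Main obstacle'' paragraph: Dirichlet eigenfunctions are \emph{not} in general supported on single edges. A function $\sin(k\pi t)$ on a single edge $e$ vanishes at both endpoints but has nonzero outward derivative $k\pi$ there, so the Kirchhoff current condition fails unless that derivative is cancelled by contributions from other edges. This is why the paper (and the cited references) emphasize that Dirichlet eigenfunctions live on \emph{cycles}: wrapping $\sin(k\pi t)$ around a closed loop of integer length makes the incoming and outgoing derivatives at each vertex cancel. The exceptional set $\Sigma^D$ is still exactly $\{(\pi k)^2\}$, and these eigenfunctions are still invisible to $\Delta_G$, so your exclusion of $\Sigma^D$ is right---only the geometric picture needs fixing.

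A smaller remark: your norm-equivalence argument in Step~2 handles $\sigma_{\mathrm{pp}}$ and $\sigma_{\mathrm{p}}$ (bijection of $\ell^2$-eigenfunctions with controlled norms), but ``discrete spectrum'' also requires isolation, which really comes out of the resolvent identity in Step~3 rather than the eigenspace bijection alone. Since you already defer Step~3 to the literature, this is not a gap so much as a slight over-claim about what Step~2 by itself delivers.
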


The values in $\Sigma^D$ above play a special role, since the quantum graph may have eigenfunctions $\Delta_{X,\ell} f = E f$ vanishing on all vertices (so-called \emph{Dirichlet eigenfunctions}).
They will appear as soon as the undirected underlying graph $G$ contains a cycle and must be of the form $(\pi k)^2$ for some $k = 1,2,\dots$. More precisely, the multiplicity of $(\pi k)^2$ is related to the global topology of the graph, as explained in \cite{LledoP-08}. Related
multiplicity calculations for quantum graphs were carried out in \cite{vonBelow-85}.

Note that the Kagome lattice, given in Figure \ref{fig:Kagome} as a subset of $\RR^2$, can be viewed as the corresponding metric graph $(X,\ell)$ with constant side length $\ell(e) = 1$ for all $e \in \cE$.
The map $T$ defined earlier can  be extended to $T: \ZZ^2 \times \RR^2 \to \RR^2$, $T_\gamma(x) = 2\gamma_1w_1+2\gamma_2w_2+x$, and induces a natural $\ZZ^2$-action on $(X,\ell)$ as a subset of $\RR^2$.
The closure of a fundamental domain of this $\ZZ^2$-action is given in Figure~6 
and is the induced metric subgraph
$(Y,\ell)$ with vertex set $\{ 0,w_1,2w_1,2w_1-w_2,-w_2,-2w_2,-2w_2+w_1\}$.
Any box $\Lambda_L \subset \ZZ^2$ gives rise to a metric subgraph $(\Lambda_{Y,L},\ell)$, defined as
\[
  \Lambda_{Y,L} := \bigcup_{\gamma \in \Lambda_L} T_\gamma(Y).
\]

Using the above spectral correspondence, it can be shown that Proposition \ref{prop:Kagome2} has the following analogue in the equilateral quantum graph on the Kagome lattice:

\begin{proposition}
 Let $(X,\ell)$ be the metric graph associated to the Kagome lattice $K=(\cV,\cE)$ with $\ell(e)=1$ for all $e \in \cE$. Then the IDS
 \[
  N(E)
  =
  \lim_{L \to \infty} \frac{1}{\vol(\Lambda_{Y,L})} \sharp \{ \text{Eigenfunctions of}\ - \Delta_{X,\ell} \mid_{\Lambda_{Y,L}} \ \text{with eigenvalue}\ \leq E \}
 \]
 exists, where $\Delta_{X,\ell} \mid_{X_0}$ is the restriction of $\Delta_{X,\ell}$ to the metric subgraph $(X_0,\ell)$ with Dirichlet vertex conditions on $\partial X_0$.
 Furthermore, all discontinuities of $N: \RR \to [0,\infty)$ are
\begin{enumerate}[i)]
 \item at $E = (2k+2/3)^2 \pi^2$, $k \in \ZZ$, with jumps of size $1/6$,
 \item at $E = k^2\pi^2$, $k \in \NN$, with jumps of size $1/2$.
\end{enumerate}
Moreover, $N$ is strictly increasing on the absolutely continuous spectrum of $\Delta_{X,\ell}$, which is explicitely given in \cite[Cor. 3.4]{LenzPPV-09}.
\end{proposition}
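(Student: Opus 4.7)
My plan is to reduce the statement to Proposition~\ref{prop:Kagome2} via the spectral correspondence in the preceding proposition, handling separately the Dirichlet eigenvalues $\Sigma^D = \{(k\pi)^2 : k \geq 1\}$ at which that correspondence breaks down and which are populated by compactly supported Dirichlet eigenfunctions living on the cycles of the Kagome lattice.

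First I would set up the exhaustion by the metric boxes $\Lambda_{Y,L}$ with Dirichlet vertex conditions on the boundary $\partial \Lambda_{Y,L}$. For $E \notin \Sigma^D$ every eigenfunction $f$ of $\Delta_{X,\ell}|_{\Lambda_{Y,L}}$ is determined on each edge by its two endpoint values, $f_e(x) = A_e \sin(\sqrt{E}\, x) + B_e \cos(\sqrt{E}\, x)$, and imposing the Kirchhoff condition at every interior vertex is equivalent to the eigenvalue equation $\Delta_K F = (1-\cos\sqrt{E})\, F$ on the combinatorial restriction to $\Lambda_{Q,L}$ with $F|_{\partial \Lambda_{Q,L}} = 0$. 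This yields a bijection between the two eigenspaces, so dividing by the respective normalizations (whose ratio satisfies $|\Lambda_{Q,L}|/\vol(\Lambda_{Y,L}) \to |Q|/\vol(Y) = 3/6 = 1/2$) gives, on $\RR \setminus \Sigma^D$,
\[
 N(E) \;=\; \tfrac{1}{2}\, N^K\!\bigl(1-\cos\sqrt{E}\bigr),
\]
where $N^K$ is the combinatorial IDS from Proposition~\ref{prop:Kagome2}. Inserting the properties of $N^K$ there produces continuity off $\Sigma^D$ outside the preimage of $3/2$, strict monotonicity on the absolutely continuous spectrum of $\Delta_{X,\ell}$, and jumps of size $\tfrac{1}{2}\cdot\tfrac{1}{3}=\tfrac{1}{6}$ at every $E$ satisfying $\cos\sqrt{E}=-\tfrac{1}{2}$, i.e. at $E=\bigl((2k+\tfrac{2}{3})\pi\bigr)^2$ for all $k\in\ZZ$.

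Next I would treat the Dirichlet values $E=(k\pi)^2$. Here any eigenfunction splits into a \emph{vertex-visible} part that is still captured by the spectral correspondence (and contributes no jump, since $N^K$ is continuous at $1-\cos(k\pi)\in\{0,2\}$ by Proposition~\ref{prop:Kagome2}) and a \emph{compactly supported Dirichlet} part, which on each edge has the form $a_e\sin(k\pi x)$ and vanishes at every vertex. The Kirchhoff condition at each vertex $v$ then reduces to a linear relation among the coefficients $\{a_e\}_{e\ni v}$, and, in direct analogy to Proposition~\ref{prop:Kagome1}, one checks that the resulting space is spanned by compactly supported eigenfunctions living on the elementary cycles (hexagons and triangles) of the Kagome lattice, that these are linearly independent, and that their density per $\vol(Y)$ equals exactly $\tfrac{1}{2}$. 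Combined with the continuous contribution from the spectral correspondence, this produces the asserted jump of size $\tfrac{1}{2}$.

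The main obstacle I anticipate is to carry out this second step uniformly in the exhaustion and to make sure that the Dirichlet condition on $\partial \Lambda_{Y,L}$ neither loses nor inflates the cycle count at scale $L$. My preferred route is the trace-formula representation $N(E) = \tfrac{1}{\vol(Y)}\operatorname{Tr}\bigl(\chi_Y \,\chi_{(-\infty,E]}(\Delta_{X,\ell})\bigr)$ (the quantum-graph analogue of the identity recorded after Proposition~\ref{prop:Kagome2}), in which the spectral projector is decomposed as an $L^2$-orthogonal sum of the image of $\Delta_K$ under the spectral correspondence and the compactly supported Dirichlet eigenspace; the boundary contributions from $\partial \Lambda_{Y,L}$ in the exhaustion are then of order $|\partial \Lambda_{Y,L}|/\vol(\Lambda_{Y,L}) = O(1/L)$ and vanish in the limit.
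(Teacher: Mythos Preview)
Your approach is exactly what the paper sketches: reduce to Proposition~\ref{prop:Kagome2} via the spectral correspondence and then account separately for the compactly supported Dirichlet eigenfunctions at the exceptional set $\Sigma^D$. The paper itself gives no further detail beyond ``using the above spectral correspondence, it can be shown\ldots'' and defers to \cite{LenzPPV-09}, so your outline is already more explicit than the text.

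One correction is needed, though it does not affect your conclusions. The identity $N(E)=\tfrac12\,N^K(1-\cos\sqrt{E})$ cannot hold on all of $\RR\setminus\Sigma^D$: the right-hand side is bounded by $1/2$ and is not even monotone (for $E\in(\pi^2,4\pi^2)$ the argument $1-\cos\sqrt{E}$ runs from $2$ back down to $0$). What is true, and what suffices for the jump and continuity statements you extract, is the \emph{branchwise} version: on each interval between consecutive points of $\Sigma^D$ the increment of $N$ equals $\tfrac12$ times the variation of $N^K$ along the corresponding monotone branch of $E\mapsto 1-\cos\sqrt{E}$. Your trace-formula route in the last paragraph handles this correctly and also absorbs the $O(1/L)$ boundary terms; the per-cell Dirichlet count $|\cE|-|\cV|=6-3=3$ over $\vol(Y)=6$ then gives the asserted jump of $1/2$ at each $(k\pi)^2$.
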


\begin{remark}
Note that there are two types of compactly supported eigenfunctions 
on a general equilateral quantum graph $(X,\ell)$ associated to a graph $G=(\cV,\cE)$:
\begin{enumerate}[i)]
\item eigenfunctions corresponding to finitely supported eigenfunctions of the discrete Laplacian $\Delta_G$,
\item Dirichlet eigenfunctions which appear as soon as the graph $G$ has cycles. For such a cycle of length $n$ in $G$, the corresponding cycle in the quantum graph $(X,\ell)$ can
be canonically identified with the interval $[0,n]$ where the end-points are identified, and any eigenfunction $\sin(k \pi)$ on $[0,n]$ gives rise to a corresponding Dirichlet eigenfunction with eigenvalue $k^2 \pi^2$.
Note that if $n$ is odd, $k \in \ZZ$ needs to be even.
\end{enumerate}
As a consequence, even though there are no jumps of the IDS of $\Delta_{\ZZ^d}$ in the discrete lattice $\ZZ^d$, jumps of the IDS of $\Delta_{X,\ell}$ appear in the equilateral quantum graph $(X,\ell)$ associated to $\ZZ^d$, due to the compactly supported eigenfunctions in ii), in dimension $d \ge 2$.
\end{remark}


%
%

Now we introduce randomness on the edge lengths of our metric Kagome lattice $(X,\ell)$.
Let $0 < \ell_{\min} < \ell_{\max} < \infty$ and $(\omega_e)_{e \in \cE}$ be a process of i.i.d. random variables on a probability space $(\Omega,\PP)$ with support in $[\ell_{\min},\ell_{\max}]$ and assume that every $\omega_e$ has a probability density $h \in C^1(\RR)$.
For every $\omega \in \Omega$, we consider the metric graph $(X,\ell_\omega)$, where $\ell_\omega(e) = \omega_e$ for all $e \in \cE$. This induces a random family of quantum graphs, called the \emph{random length model} associated to the Kagome lattice, consisting of $(X,\ell_\omega)_{\omega \in \Omega}$ with associated Laplacians $\Delta_{X,\ell_\omega}$.
Then the following Wegner estimate, linear in energy and volume, holds:
\begin{theorem}
 Let $(X,\ell_\omega)_{\omega \in \Omega}$ be the random length model associated to the Kagome lattice $K = (\cV,\cE)$ and $u > 1$. Then there exists a constant $C >0$, only depending on $u, \ell_{\min}, \ell_{\max}, \Vert h \Vert_\infty, \Vert h' \Vert_\infty$, such that, for all intervals $I \subset [1/u,u]$ and $L \in \NN$,
 \[
  \EE
  \left( \sharp \{ \text{Eigenfunctions of}\ - \Delta_{X,\ell_\omega} \mid_{\Lambda_{Y,L,\omega}} \ \text{with eigenvalue in } I \} \right)
  \leq
  C \cdot |I| \cdot |E(\Lambda_{Y,L})|,
 \]
where $(\Lambda_{Y,L,\omega},\ell_\omega)$ is a metric subgraph of $(X,\ell_\omega)$
defined analogously to the definition of $\Lambda_{Y,L}$ above.
\end{theorem}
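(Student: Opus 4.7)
The plan is to follow the standard Wegner-estimate template via spectral averaging, adapted to the setting where the randomness lies in the edge lengths of the metric graph. Set $H_\omega := -\Delta_{X,\ell_\omega}\vert_{\Lambda_{Y,L,\omega}}$ and $N(I,\omega) := \mathrm{Tr}\,\chi_I(H_\omega)$. First I would replace the indicator $\chi_I$ by a non-negative smooth bump $\rho \in C^1_c(\RR)$ supported in a slight enlargement of $I$ still contained in $[1/(2u),2u]$, with $\chi_I \le \rho \le 1$. It then suffices to bound $\EE[\mathrm{Tr}\,\rho(H_\omega)]$, and the standard antiderivative $\Phi$ of $\rho$ satisfies $\|\Phi\|_\infty \le |I|$.

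The geometric core of the argument is a Hellmann--Feynman-type identity quantifying how a simple eigenvalue $E_k(\omega)$ responds to a variation of a single edge length $\omega_e$. Since rescaling an edge of length $\omega_e$ to unit length turns the local equation $-\psi_e'' = E\psi_e$ into $-u'' = \omega_e^2 E u$, first-order perturbation theory (in the rescaled picture, where the $\omega_e$-dependence is absorbed into the operator and the Kirchhoff conditions) yields an identity of the schematic form
\[
  \frac{\partial E_k(\omega)}{\partial \omega_e} \;\le\; -\frac{c\, E_k(\omega)}{\omega_e}\,\|\psi_k\|_{L^2(I_e)}^2,
\]
where $\psi_k$ is the $L^2$-normalised eigenfunction and $c>0$ is absolute. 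The restriction $I \subset [1/u,u]$ is essential here: it keeps both $E_k$ and $\omega_e$ in a compact set bounded away from zero, so the prefactor is uniformly controlled.

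Next I would perform the spectral averaging. Applying the chain rule with $\Phi' = \rho$ and invoking the sensitivity bound above gives
\[
  \rho(E_k(\omega))\,\|\psi_k\|_{L^2(I_e)}^2 \;\le\; C\left(-\frac{\partial}{\partial \omega_e}\right)\Phi(E_k(\omega)).
\]
Summing over edges and using the partition of unity $\sum_{e \in E(\Lambda_{Y,L})}\|\psi_k\|_{L^2(I_e)}^2 = 1$, then summing over eigenvalues $k$ and taking expectation, one integrates by parts once in each variable $\omega_e$ against the product density $\prod_e h(\omega_e)$. The $\omega_e$-derivative is transferred to $h$, producing the factor $\|h'\|_\infty$; the boundary contributions at $\omega_e \in \{\ell_{\min},\ell_{\max}\}$ are estimated by $\|h\|_\infty$. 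Since $\|\Phi\|_\infty \le |I|$, the result is
\[
  \EE[\mathrm{Tr}\,\rho(H_\omega)] \;\le\; C\cdot |I|\cdot |E(\Lambda_{Y,L})|,
\]
which dominates $\EE[N(I,\omega)]$ and yields the theorem.

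The main obstacle is the rigorous handling of eigenvalue crossings: the branches $E_k(\omega)$ are only piecewise smooth in $\omega_e$ and the naive Hellmann--Feynman formula fails at crossing points. This is standardly circumvented by working throughout with the trace $\mathrm{Tr}\,\rho(H_\omega)$ rather than with individual eigenvalues, and by invoking Kato's analytic perturbation theory on spectral subspaces to justify the differential inequalities off a set of $\omega_e$-measure zero. A secondary difficulty is that after rescaling each edge to unit length the Kirchhoff vertex conditions acquire $\omega$-dependence, so the Hellmann--Feynman identity carries additional vertex/boundary contributions; controlling their sign and size is where the cut-off $I \subset [1/u,u]$ and the $C^1$-regularity of the density $h$ do real work, and is where the bulk of the bookkeeping of the proof resides.
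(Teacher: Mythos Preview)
The paper is a survey and does not give its own proof of this theorem; it simply quotes the result from \cite{LenzPPV-09} (with the related \cite{KloppP-09} for $\ZZ^d$). So there is no in-paper argument to compare against. That said, your proposal is the right one and matches the strategy of the original reference: unitarily rescale each edge to unit length so that the randomness is pushed into the operator (the edge-wise second derivative acquires a factor $\omega_e^{-2}$ and the Kirchhoff derivative condition picks up weights $\omega_e^{-1}$), compute the eigenvalue sensitivity $\partial_{\omega_e} E_k \sim -(2E_k/\omega_e)\,\|\psi_k\|_{L^2(I_e)}^2$ via first-order perturbation theory, and then spectrally average against the $C^1$ density $h$ using the antiderivative trick. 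Your identification of the two genuine technical issues --- handling eigenvalue crossings by working with $\mathrm{Tr}\,\rho(H_\omega)$ rather than individual branches, and controlling the $\omega$-dependent vertex/boundary contributions arising from the rescaled Kirchhoff conditions --- is accurate, and the restriction $I\subset[1/u,u]$ is indeed exactly what makes the prefactor $E_k/\omega_e$ uniformly bounded away from zero. In short: correct plan, and essentially the same route as the source the paper cites.
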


A related Wegner estimate for the quantum graph associated to the lattice $\ZZ^d$ with random edge lengths and its application to localization was shown in \cite{KloppP-09}. The above Wegner estimate implies that randomness improves regularity of the IDS, as the next corollary states.

\begin{corollary}
 Let $(X,\ell_\omega)_{\omega \in \Omega}$ be the random length model associated to the Kagome lattice $K = (\cV,\cE)$.
 Then there is a unique function $N: \RR \to [0,\infty)$ such that for almost every $\omega \in \Omega$, the IDS corresponding to the quantum graph $(X,\ell_\omega,\Delta_{X,\ell_\omega})$ agrees with $N$.
 Moreover, $N$ is continuous on $\RR$ and even locally Lipschitz continuous on $(0,\infty)$.
\end{corollary}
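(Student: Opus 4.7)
The plan is to extract both the existence of the deterministic IDS and its regularity from the Wegner estimate above, combined with a Birkhoff-type ergodic argument for the convergence. Write
\[
 N_\omega^L(E) := \frac{1}{\vol(\Lambda_{Y,L,\omega})} \sharp\{ \text{eigenvalues of } -\Delta_{X,\ell_\omega}|_{\Lambda_{Y,L,\omega}} \text{ in } (-\infty,E]\}.
\]
The random length model carries a $\ZZ^2$-action induced by the translations $T_\gamma$ of the Kagome lattice, and $(\omega_e)_{e \in \cE}$ is i.i.d., hence ergodic under this action. A Pastur--Shubin-type trace-per-unit-volume argument (analogous to the formula mentioned after Proposition~\ref{prop:Kagome2}), combined with Birkhoff's theorem, produces a deterministic right-continuous non-decreasing $N: \RR \to [0,\infty)$ such that $N_\omega^L(E) \to N(E)$ for $\PP$-almost every $\omega$ at every continuity point of $N$.

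For local Lipschitz continuity on $(0,\infty)$, fix $u > 1$ and $1/u \le a < b \le u$. The Wegner estimate applied to $I = [a,b]$ reads
\[
 \EE\bigl(\sharp\{\text{eigenvalues of } -\Delta_{X,\ell_\omega}|_{\Lambda_{Y,L,\omega}} \text{ in } [a,b]\}\bigr) \le C(b-a)\,|E(\Lambda_{Y,L})|.
\]
Since every edge length lies in $[\ell_{\min},\ell_{\max}]$, one has $\vol(\Lambda_{Y,L,\omega}) \ge \ell_{\min}\,|E(\Lambda_{Y,L})|$, so dividing both sides by $\vol(\Lambda_{Y,L,\omega})$ yields
\[
 \EE\bigl(N_\omega^L(b) - N_\omega^L(a)\bigr) \le \frac{C}{\ell_{\min}}(b-a).
\]
Letting $L \to \infty$ and applying Fatou's lemma together with the almost-sure convergence from the first step, one obtains the deterministic bound $N(b) - N(a) \le (C/\ell_{\min})(b-a)$ at continuity points, which then extends by monotonicity and right-continuity to all $a, b$ in $[1/u, u]$. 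Since $u > 1$ is arbitrary, this gives local Lipschitz continuity on $(0,\infty)$, and as a byproduct the convergence from the first step holds for every $E > 0$, not merely at continuity points.

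It remains to handle $E = 0$. Because $-\Delta_{X,\ell_\omega}$ is non-negative, $N$ vanishes on $(-\infty,0)$, so continuity at $0$ is equivalent to the absence of an atom there. This can be verified either by a Dirichlet--Neumann bracketing argument or by observing that the Dirichlet-Kirchhoff Laplacian on any bounded metric subgraph has a strictly positive first eigenvalue whose distribution is controlled uniformly, so that only band-edge (Lifshitz-tail) behaviour is relevant near $0$. The main technical obstacle in this whole scheme is the interchange of limit and expectation when passing from $\EE[N_\omega^L(b) - N_\omega^L(a)] \le C'(b-a)$ to the deterministic inequality $N(b) - N(a) \le C'(b-a)$: one must ensure that the almost-sure convergence from the first step is available at enough values of $a$ and $b$, and use a Fatou-type or uniform integrability argument to move the limit inside the expectation. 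Once this is in place, the Lipschitz bound itself propagates strong-convergence properties back into the ergodic limit and closes the argument.
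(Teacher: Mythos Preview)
The paper does not supply a proof of this corollary: it is stated immediately after the Wegner theorem with the single sentence ``The above Wegner estimate implies that randomness improves regularity of the IDS, as the next corollary states,'' and the details are left to the reference \cite{LenzPPV-09}. Your outline is exactly the standard route from a Wegner bound to local Lipschitz continuity of the IDS and is in line with what the paper intends.

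Two small clean-ups. First, when you ``divide both sides by $\vol(\Lambda_{Y,L,\omega})$'' you are dividing inside an expectation by a random quantity; the correct phrasing is to use the deterministic pointwise bound
\[
 N_\omega^L(b)-N_\omega^L(a) \le \frac{1}{\ell_{\min}\,|E(\Lambda_{Y,L})|}\,\sharp\{\text{eigenvalues in }[a,b]\}
\]
first and only then take expectations; your displayed inequality then follows. Second, the treatment of $E=0$ can be made much simpler than you suggest: each finite-volume restriction carries Dirichlet conditions on a non-empty boundary, so its lowest eigenvalue is strictly positive and hence $N_\omega^L(0)=0$ for every $\omega$ and $L$, giving $N(0)=0$ directly. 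No bracketing or Lifshitz-tail input is needed. The Fatou step you flag as the ``main technical obstacle'' is in fact routine here: at continuity points $a,b$ of $N$ one has $N_\omega^L(b)-N_\omega^L(a)\to N(b)-N(a)$ almost surely, the limit is deterministic, and Fatou applied to these non-negative random variables gives $N(b)-N(a)\le\liminf_L \EE[N_\omega^L(b)-N_\omega^L(a)]$ immediately.
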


\begin{remark}
 In fact, the result presented for the Kagome lattice holds in the much more general setting of a \emph{random length covering model}, as explained in \cite{LenzPPV-09}, where $\ZZ^2$ is replaced by a (not necessarily abelian) amenable group, acting cocompactly and isometrically on a connected, noncompact equilateral quantum graph and the boxes $\Lambda_L\subset \ZZ^2$ are replaced by a tempered F{\o}lner sequence.
\end{remark}

{\bf Acknowledgement:}
This work was partially financially supported by the Deutsche Forschungsgemeinschaft through the grants
VE 253/6-1 \emph{Unique continuation principles and equidistribution properties of eigenfunctions}
and
VE 253/7-1 \emph{Multiscale version of the Logvinenko-Sereda Theorem}.
While writing part of this article, NP and MT enjoyed the hospitality of the Isaac Newton Institute during the programme \emph{Non-Positive Curvature Group Actions and Cohomology},
supported by the EPSRC Grant EP/K032208/1.
We would like to thank Michela Egidi for reading a previous version of the manuscript.


\begin{thebibliography}{GMRM15}

\bibitem[AV08]{AntunovicV-08b}
T.~Antunovi\'c and I.~Veseli\'c.
\newblock Spectral asymptotics of percolation {Hamiltonians} on amenable
  {Cayley} graphs.
\newblock 
  Volume 186 of
  {\em Operator Theory: Advances and Applications}, pages 1--29. Birkh\"auser,
  2008.
\newblock http://arxiv.org/abs/0707.4292.

\bibitem[Bak13]{Bakri-13}
L.~Bakri.
\newblock Carleman estimates for the {S}chr\"odinger operator. {A}pplications
  to quantitative uniqueness.
\newblock {\em Commun. Part. Diff. Eq.}, 38(1):69--91, 2013.

\bibitem[BK05]{BourgainK-05}
J.~Bourgain and {C.~E.} Kenig.
\newblock On localization in the continuous {A}nderson-{B}ernoulli model in
  higher dimension.
\newblock {\em Invent. Math.}, 161(2):389--426, 2005.

\bibitem[BK13]{BourgainK-13}
J.~Bourgain and A.~Klein.
\newblock Bounds on the density of states for {S}chr\"odinger operators.
\newblock {\em Invent. Math.}, 194(1):41--72, 2013.

\bibitem[BP06]{BauesP-06}
O.~Baues and N.~Peyerimhoff.
\newblock Geodesics in non-positively curved plane tessellations.
\newblock {\em Adv. Geom.}, 6(2):243--263, 2006.

\bibitem[Car39]{Carleman-39}
T.~Carleman.
\newblock Sur un probl\'eme d'unicit\'e pour les syst\`emes d'\'equations aux
  d\'eriv\'ees partielles \`a deux variables ind\'ependantes.
\newblock {\em Ark. Mat. Astron. Fysik}, 26B(17):1--9, 1939.

\bibitem[Cat97]{Cattaneo-97}
C.~Cattaneo.
\newblock The spectrum of the continuous {L}aplacian on a graph.
\newblock {\em Monatsh. Math.}, 124(3):215--235, 1997.

\bibitem[CCF{\etalchar{+}}86]{ChayesCFST-86}
J.~T. Chayes, L.~Chayes, J.~R. Franz, J.~P. Sethna, and S.~A. Trugman.
\newblock On the density of states for the quantum percolation problem.
\newblock {\em J. Phys. A}, 19(18):L1173--L1177, 1986.

\bibitem[CCG{\etalchar{+}}08]{ChaudhuryCGLSWY-08}
R.P. Chaudhury, C.W. Chu, E.~Galstyan, E.~Galstyan, B.~Lorenz, Y.Y. Sun, Y.Q.
  Wang, and F.~Yen.
\newblock Magnetic phase diagrams of the kagom?? staircase compound.
\newblock {\em Physica B: Condensed Matter}, 403(5???9):1487 -- 1489, 2008.
\newblock Proceedings of the International Conference on Strongly Correlated
  Electron Systems.

\bibitem[CHK03]{CombesHK-03}
J.-M Combes., P.~D. Hislop., and F.~Klopp.
\newblock H\"older continuity of the integrated density of states for some
  random operators at all energies.
\newblock {\em Int. Math. Res. Not.}, (4):179--209, 2003.

\bibitem[CHK07]{CombesHK-07}
J.-M Combes., P.~D. Hislop., and F.~Klopp.
\newblock An optimal {Wegner} estimate and its application to the global
  continuity of the integrated density of states for random {Schr\"odinger}
  operators.
\newblock {\em Duke Math. J.}, 140(3):469--498, 2007.

\bibitem[CKM87]{CarmonaKM-87}
R.~Carmona, A.~Klein, and F.~Martinelli.
\newblock {Anderson} localization for {Bernoulli} and other singular
  potentials.
\newblock {\em Commun. Math. Phys.}, 108:41--66, 1987.

\bibitem[CS83a]{CraigS-83a}
W.~Craig. and B.~Simon.
\newblock Log {H\"older} continuity of the integrated density of states for
  stochastic {Jacobi} matrices.
\newblock {\em Commun. Math. Phys.}, 90:207--218, 1983.

\bibitem[CS83b]{CraigS-83b}
W.~Craig and B.~Simon.
\newblock Subharmonicity of the {L}yaponov index.
\newblock {\em Duke Math. J.}, 50(2):551--560, 1983.

\bibitem[DF88]{DonnellyF-88}
H.~Donnelly and C.~Fefferman.
\newblock Nodal sets of eigenfunctions on {R}iemannian manifolds.
\newblock {\em Invent. math.}, 93(1):161--183, 1988.

\bibitem[DS84]{DelyonS-84}
F.~Delyon and B.~Souillard.
\newblock Remark on the continuity of the density of states of ergodic
  finite-difference operators.
\newblock {\em Commun. Math. Phys.}, 94:289--291, 1984.

\bibitem[EKPV12]{EscauriazaKPV-12}
L.~Escauriaza, C.~E. Kenig, G.~Ponce, and L.~Vega.
\newblock Uniqueness properties of solutions to {S}chr\"odinger equations.
\newblock {\em Bull. Amer. Math. Soc. (N.S.)}, 49(3):415--442, 2012.

\bibitem[GK13]{GerminetK-13b}
F.~Germinet and A.~Klein.
\newblock A comprehensive proof of localization for continuous {A}nderson
  models with singular random potentials.
\newblock {\em J. Eur. Math. Soc.}, 15(1):53--143, 2013.

\bibitem[GMRM15]{GerminetMRM-15}
F.~Germinet, P.~M{\"u}ller, and C.~Rojas-Molina.
\newblock Ergodicity and dynamical localization for {D}elone-{A}nderson
  operators.
\newblock {\em Rev. Math. Phys.}, 27(9):1550020, 36, 2015.

\bibitem[HKRL16]{HelfferKR-16}
B.~Helffer, P.~Kerdelhu\'e, and J.~Royo-Letelier.
\newblock Chambers's formula for the graphene and the {H}ou model with kagome
  periodicity and applications.
\newblock {\em Ann. Henri Poincar\'e}, 17(4):795--818, 2016.

\bibitem[Hou09]{Hou-09}
J.M. Hou.
\newblock Light-induced hofstadter's butterfly spectrum of ultracold atoms on
  the two-dimensional kagom\'e lattice.
\newblock {\em Chinese Physics Letters}, 26(12):123701, 2009.

\bibitem[JK85]{JerisonK-85}
D.~Jerison and C.~E. Kenig.
\newblock Unique continuation and absence of positive eigenvalues for
  {S}chr\"odinger operators.
\newblock {\em Ann. of Math. (2)}, 121(3):463--494, 1985.
\newblock With an appendix by E. M. Stein.

\bibitem[Kel11]{Keller-11}
M.~Keller.
\newblock Curvature, geometry and spectral properties of planar graphs.
\newblock {\em Discrete Computational. Geom.}, 46(3):500--525, 2011.

\bibitem[Kle13]{Klein-13}
A.~Klein.
\newblock Unique continuation principle for spectral projections of
  {S}chr{\"o}dinger operators and optimal {W}egner estimates for non-ergodic
  random {S}chr{\"o}dinger operators.
\newblock {\em Commun. Math. Phys.}, 323(3):1229--1246, 2013.

\bibitem[KLPS06]{KlassertLPS-06}
S.~Klassert, D.~Lenz, N.~Peyerimhoff, and P.~Stollmann.
\newblock Elliptic operators on planar graphs: unique continuation for
  eigenfunctions and nonpositive curvature.
\newblock {\em Proc. Amer. Math. Soc.}, 134(5):1549--1559, 2006.

\bibitem[KP09]{KloppP-09}
F.~Klopp and K.~Pankrashkin.
\newblock Localization on quantum graphs with random edge lengths.
\newblock {\em Lett. Math. Phys.}, 87(1-2):99--114, 2009.

\bibitem[KPP15]{KellerPP}
M.~Keller, N.~Peyerimhoff, and F.~Pogorzelski.
\newblock Sectional curvature of polygonal complexes with planar substructures.
\newblock arXiv:1407.4024 [math.MG], 2015.

\bibitem[KRL14]{KerdelhueRL-14}
P.~Kerdelhu\'e and J.~Royo-Letelier.
\newblock On the low lying spectrum of the magnetic {S}chr\"odinger operator
  with kagome periodicity.
\newblock {\em Rev. Math. Phys.}, 26(10):1450020, 46, 2014.

\bibitem[KRS87]{KenigRS-87}
C.E. Kenig, A.~Ruiz, and C.D. Sogge.
\newblock Uniform {Sobolev} inequalities and unique continuation for second
  order constant coefficient differential operators.
\newblock {\em Duke Math. J.}, 55:329--347, 1987.

\bibitem[KT01]{KochT-01}
H.~Koch and D.~Tataru.
\newblock Carleman estimates and unique continuation for second-order elliptic
  equations with nonsmooth coefficients.
\newblock {\em Comm. Pure Appl. Math.}, 54(3):339--360, 2001.

\bibitem[LHK{\etalchar{+}}05]{Lawesetall-05}
G.~Lawes, A.~B. Harris, T.~Kimura, N.~Rogado, R.~J. Cava, A.~Aharony,
  O.~Entin-Wohlman, T.~Yildirim, M.~Kenzelmann, C.~Broholm, and A.~P. Ramirez.
\newblock Magnetically driven ferroelectric order in
  ${\mathrm{ni}}_{3}{\mathrm{v}}_{2}{\mathrm{o}}_{8}$.
\newblock {\em Phys. Rev. Lett.}, 95:087205, Aug 2005.

\bibitem[LP08]{LledoP-08}
F.~Lled{\'o} and O.~Post.
\newblock Eigenvalue bracketing for discrete and metric graphs.
\newblock {\em J. Math. Anal. Appl.}, 348(2):806--833, 2008.

\bibitem[LPPV09]{LenzPPV-09}
D.~Lenz, N.~Peyerimhoff, O.~Post, and I.~Veseli\'c.
\newblock Continuity of the integrated density of states on random length
  metric graphs.
\newblock {\em Math. Phys. Anal. Geom.}, 12(3):219--254, 2009.

\bibitem[LV09]{LenzV-09}
D.~Lenz and I.~Veseli\'c.
\newblock Hamiltonians on discrete structures: jumps of the integrated density
  of states and uniform convergence.
\newblock {\em Math. Z.}, 263(4):813--835, 2009.

\bibitem[Mes92]{Meshkov-92}
{V.~Z.} Meshkov.
\newblock On the possible rate of decay at infinity of solutions of second
  order partial differential equations.
\newblock {\em Math. USSR Sb.}, 72(2):343--361, 1992.

\bibitem[NTTV15]{NakicTTV-15}
I.~Naki\'c, M.~T\"aufer, M.~Tautenhahn, and I.~Veseli\'c.
\newblock Scale-free uncertainty principles and {W}egner estimates for random
  breather potentials.
\newblock {\em C. R. Math.}, 353(10):919--923, 2015.

\bibitem[NTTV16]{NakicTTV}
I.~Naki\'c, M.~T\"aufer, M.~Tautenhahn, and I.~Veseli\'c.
\newblock Scale-free unique continuation principle, eigenvalue lifting and
  {W}egner estimates for random {S}chr\"odinger operators.
\newblock arXiv:1609.01953 [math.AP], 2016.

\bibitem[Pan06]{Pankrashkin-06}
K.~Pankrashkin.
\newblock Spectra of {S}chr\"odinger operators on equilateral quantum graphs.
\newblock {\em Lett. Math. Phys.}, 77(2):139--154, 2006.

\bibitem[Pos08]{Post-08}
O.~Post.
\newblock Equilateral quantum graphs and boundary triples.
\newblock In {\em Analysis on graphs and its applications}, volume~77 of {\em
  Proc. Sympos. Pure Math.}, pages 469--490. Amer. Math. Soc., Providence, RI,
  2008.

\bibitem[RL12]{RousseauL-12}
J.~Le Rousseau and G.~Lebeau.
\newblock On {C}arleman estimates for elliptic and parabolic operators.
  {A}pplications to unique continuation and control of parabolic equations.
\newblock {\em ESAIM Contr. Optim. Ca.}, 18(3):712--747, 2012.

\bibitem[RMV13]{Rojas-MolinaV-13}
C.~Rojas-Molina and I.~Veseli\'c.
\newblock Scale-free unique continuation estimates and application to random
  {S}chr\"odinger operators.
\newblock {\em Commun. Math. Phys.}, 320(1):245--274, 2013.

\bibitem[Shi15]{Shirley-15}
C.~Shirley.
\newblock Decorrelation estimates for some continuous and discrete random
  {Schr\"odinger} operators in dimension one, without covering condition.
\newblock 2015.

\bibitem[Sto01]{Stollmann-01}
P.~Stollmann.
\newblock {\em Caught by disorder: Bound States in Random Media}, volume~20 of
  {\em Progress in Mathematical Physics}.
\newblock Birkh\"auser, 2001.


\bibitem[SSV16]{SchumacherSV-16}
Ch.~Schumacher, F.~Schwarzenberger, and I.~Veseli{\'c}.
\newblock A Glivenko--Cantelli theorem for almost additive functions on lattices.
\newblock {\em Stoch. Proc. Appl}, 127(1): 179--208, 2016.
\newblock https://arxiv.org/abs/1606.07664.


\bibitem[TV15]{TaeuferV-15}
M.~T\"aufer and I.~Veseli{\'c}.
\newblock Conditional {W}egner estimate for the standard random breather potential.
\newblock {\em J. Stat. Phys.}, 161(4):902--914, 2015.
\newblock arXiv:1509.03507.

\bibitem[vB85]{vonBelow-85}
J.~von Below.
\newblock A characteristic equation associated to an eigenvalue problem on
  {$c^2$}-networks.
\newblock {\em Linear Algebra Appl.}, 71:309--325, 1985.

\bibitem[Ves05]{Veselic-05b}
I.~Veseli\'c.
\newblock Spectral analysis of percolation {Hamiltonians}.
\newblock {\em Math.~Ann.}, 331(4):841--865, 2005.
\newblock http://arXiv.org/math-ph/0405006.

\bibitem[Weg81]{Wegner-81}
F.~Wegner.
\newblock Bounds on the density of states in disordered systems.
\newblock {\em Z. Phys. B}, 44:9--15, 1981.

\bibitem[Wol92]{Wolff-92b}
{T.~H.} Wolff.
\newblock Note on counterexamples in strong unique continuation problems.
\newblock {\em Proc. Amer. Math. Soc.}, 114(2):351--356, 1992.

\bibitem[Wol95]{Wolff-95}
{T.~H.} Wolff.
\newblock Recent work on sharp estimates in second order elliptic unique
  continuation problems.
\newblock In {\em Fourier analysis and partial differential equations
  (Miraflores de la Sierra, 1992)}, Stud. Adv. Math., pages 99--128. CRC, Boca
  Raton, FL, 1995.

\end{thebibliography}

\newcommand{\etalchar}[1]{$^{#1}$}

\end{document}